\newcommand{\OPT}{{\bf OPT}}
\newcommand{\E}{\mathrm{E}}
\newcommand{\M}{\mathcal{M}}
\newtheorem{theorem}{Theorem}[section]
\newtheorem{lemma}[theorem]{Lemma}
\newtheorem{proposition}[theorem]{Proposition}
\theoremstyle{remark}
\newtheorem{rem}[theorem]{Remark}
\theoremstyle{definition}
\newtheorem{definition}[theorem]{Definition}
\title{Auctions with Online Supply}
\author{Moshe Babaioff\\ Microsoft Research\\ Mountain View, CA 94043\\ moshe@microsoft.com \and Liad Blumrosen\\ Microsoft Research\\ Mountain View, CA 94043\\liadbl@microsoft.com \and Aaron L. Roth\\ Computer Science Department\\ Carnegie Mellon University\\ Pittsburgh, PA 15217\\ alroth@cs.cmu.edu}
\begin{document}
\maketitle
\thispagestyle{empty}

\begin{abstract} We study the problem of selling identical goods to
$n$ unit-demand bidders in a setting in which the total
\emph{supply} of goods is unknown to the mechanism.
Items arrive dynamically, and the seller must make the allocation
and payment decisions online with the goal of maximizing social
welfare. We consider two models of unknown supply: the adversarial
supply model, in which the mechanism must produce a welfare
guarantee for any arbitrary supply, and the stochastic supply model,
in which supply is drawn from a distribution known to the mechanism,
and the mechanism need only provide a welfare guarantee in
expectation.

Our main result is a separation between these two models. We show
that all \emph{truthful} mechanisms, even randomized, achieve a
diminishing fraction of the optimal social welfare (namely, no
better than a $\Omega(\log\log n)$ approximation) in the adversarial
setting. In sharp contrast, in the stochastic model, under a
standard \emph{monotone hazard-rate} condition, we present a truthful
mechanism that achieves a constant approximation. We show that the monotone hazard rate condition is necessary, and also characterize a natural subclass of truthful mechanisms in
our setting, the set of \emph{online-envy-free} mechanisms. All of
the mechanisms we present fall into this class, and we prove almost
optimal lower bounds for such mechanisms.
Since auctions with unknown supply are regularly run in
many online-advertising settings, our main results emphasize the
importance of considering distributional information in the design
of auctions in such environments.

\end{abstract}

\thispagestyle{empty}
\setcounter{page}{0}
\clearpage

\section{Introduction}
Auctions have recently received attention in computer science
because they crystalize many of the incentive issues in algorithmic
game theory, and have direct application to the fast-growing market
for online advertising. This paper belongs to a line of research
that studies \emph{online mechanism design}, which focuses on
markets in which decisions are made dynamically before information regarding the state of the world has been fully revealed.
%Much of the theoretical work on auction design, however, has
%diverged from the realities facing auction designers in several
%important respects.
%Many of the problems facing computational
%mechanisms are inherently online, but much of the previous work has
%focused on the design of truthful mechanisms for \emph{static}
%settings.
Previous work in online mechanism design mainly concerned settings
where \emph{customers} that arrive dynamically compete for buying a
known set of items (see a recent survey \cite{Par07}). However, in
many real-world settings the \emph{supply} arrives dynamically and
the exact number of items for sale is uncertain. This, for example,
is the case in the sale of clicks on banner ads, where the number of
clicks is not known in advance to the seller; Such a seller must
decide which advertisement to show in a fraction of a second after
the item arrives, while the future supply is uncertain.\footnote{
Uncertainty on the supply appears in various environments. More
examples include markets for computing resources and also
traditional markets, like agricultural markets, where produce and
fish continue to arrive after markets has been opened.}
%Liad: the following is a bit out of context - can be deferred.
%When online supply has been studied (e.g.,
%\cite{}), often as a concession to truthfulness, payment has been
%deferred to the end of the auction, which is an unrealistic
%constraint in many environments.
%Liad: not completely accurate.
%Finally, most work on online mechanism design has been either in the
%adversarial setting, when in actuality, mechanism designers have a
%wealth of distributional information at their disposal.

In this work, we investigate a natural online setting, in which a
mechanism must allocate items to a fixed set of bidders when the
supply of items is unknown, and arrives online. We require that the
mechanism allocates items and extracts payment for them as they
arrive. The restriction that the mechanism extract payment at the time of sale is a natural practical constraint, and is satisfied by most real-world markets. Even in markets in which customers are able to defer their payments (such as auctions for search ads), the seller typically calculates payments immediately, which allows customers to better keep track of their spending.
%(Items are \emph{perishable}.
%This models, for example, search and banner ad auctions; in these cases,
%the items for sale are page views, and advertisers must be assigned to
%ad-slots in a fraction of a second after the item arrives).
We introduce a stochastic model where the seller knows how the
supply is distributed, but we do not assume any prior distribution
on the bidders' valuations, nor do we require that the bidders know
how the supply is distributed. One of the
conceptual contributions of our paper is this \emph{hybrid
stochastic model}, in which the supply is drawn from some prior
distribution, but no distributional assumptions are made on the
preferences of the bidders. This captures scenarios such as online advertising,
 in which sellers can easily collect statistics on the supply (e.g.,
number of ad impressions per day) but obtaining statistics on the
actual valuations of the bidders is harder and may requires
modeling, for example, their equilibrium behavior. Most of the
recent work in computer science on online mechanism design has been
in the fully adversarial setting, when in actuality, mechanism
designers have a wealth of distributional information at their
disposal. In economics, at the other extreme, dynamic mechanism
design has been recently studied in a full Bayesian setting that
assumes the existence of prior distributions on the bidders'
preferences.

We wish to maximize social welfare, which is a desirable goal even from the perspective of a for-profit seller that does not have the luxury of operating under monopoly conditions. An economically efficient market (one that maximizes the \emph{combined} welfare of the customers \emph{and} the seller) will be more attractive to customers, and avoids harming the seller in the long term at the expense of short-term profits. In fact, the generalized second price auction currently used to sell search advertisements has social welfare, rather than revenue guarantees \cite{EOS07}.

We explore the cost of ignoring distributional information. We produce a strong separation: Our main
results are \emph{lower bounds} in the adversarial setting, and
\emph{truthful approximation mechanisms} in the stochastic setting.

% We study the problem of auctioning an unknown
%supply of identical items. All bidders are present in the market
%from the beginning, but their willingness to pay for the items is private information. Each bidder $i$ desires a single item, and has a private non-negative value $v_i$
%for receiving an item. Items arrive one at a time, and are
%\emph{perishable} -- they must be allocated as soon as they arrive.
%This models, for example, search and banner ad auctions; In these
%cases, the items for sale are page views, and advertisers must be
%assigned to ad-slots in a fraction of a second after the item
%arrives. We require that our mechanisms be dominant-strategy
%truthful for the bidders ex-post (a stronger requirement than
%truthfulness in expectation), and that they charge winning bidders at the
%moment that they are allocated items.
%%This is
%%a desirable property introduced by \cite{CDF08} which we discuss
%%more in section \ref{RelatedWorkSection}.
%We wish to maximize \emph{social welfare}, the sum of the values of
%the bidders to whom we have allocated items.

Notably, the algorithmic problem that we face is simple. If bidder
valuations were known, then the greedy algorithm which simply
allocated each arriving item to the unsatisfied bidder with the
highest value would achieve optimal social welfare even in the
adversarial supply setting. The difficulty of the problem stems from
the fact that bidders may misrepresent their valuations for personal
gain. Any allocation rule that we design must be associated with a
corresponding payment rule which incentivizes bidders to truthfully
report their valuations.
%Our truthfulness notion requires that
%truthful reporting will be a dominant strategy for every realization
%of supply and randomness of the mechanism.
As we shall show, the incentive constraint proves to be an insurmountable barrier to
developing mechanisms guaranteeing a constant approximation to
social welfare in the adversarial supply setting, but can be
overcome in the stochastic supply setting.
% under
%the standard \emph{monotone hazard rate} condition.

\subsection{Our Results}

%In section \ref{ModelSection} we define our model and notation. In
%section \ref{AdversarialSection}
We first consider the adversarial supply setting in which welfare
guarantees are required to hold for any realization of supply. Our
first main result are lower bounds on the
%social welfare
approximation obtainable by truthful mechanisms:

\vspace{2mm} {\bf Theorem:} \emph{Every truthful mechanism achieves
a diminishing fraction (in the number of bidders) of the optimal
social welfare. Specifically, no deterministic truthful mechanism
achieves better than $n$-approximation and no randomized truthful
mechanism achieves better than $\Omega(\log \log n)$-approximation.}
 \vspace{2mm}

The linear lower bound is simple, and is in the spirit of the lower
bound given by Lavi and Nisan \cite{LN05} for a model in which bidders
that arrive online bid for a fixed set of expiring items. %, showing that no
%deterministic mechanism can achieve better than an $n$-approximation
%to social welfare.
We note that an $n$-approximation to social welfare can be achieved
by the trivial mechanism which simply allocates the first item to
the highest bidder at the second highest price, and does not
allocate any additional items. The randomized lower bound is more
technically challenging. To prove it we give a characterization of truthful mechanisms in our setting, and a distribution over bidder values. From this, we derive a system of equations that can be simultaneously satisfied only if there exists a mechanism which achieves a strong welfare guarantee when given this distribution over bidders. We show that no such satisfying assignment exists, which gives the lower bound.
%, showing that
%no randomized mechanism can achieve better than an $\Omega(\log\log
%n)$ approximation to social welfare.

If we further require that our mechanisms be \emph{online envy-free}
(a desirable fairness property that we define in section
\ref{section:envy-free}), we can strengthen the above lower bound to
show that no randomized truthful mechanism can achieve better than
an $\Omega(\log n/\log\log n)$ approximation to social welfare. We
show that this last result is almost tight by giving a truthful,
online-envy-free mechanism which achieves a $\log n$ approximation
to social welfare. We leave open the problem of closing the gap
between our upper and lower bounds for non-envy-free randomized
mechanisms, which seems to require different techniques. All our
lower bounds hold even for algorithms that are not computationally
restricted, while our upper bounds follow from computationally efficient mechanisms.

 Given the impossibility in the
adversarial model, we then
%In section \ref{StochasticSection} we consider
consider the stochastic supply setting in which supply is drawn from
a distribution $D$ known to the mechanism, and welfare guarantees
are required to hold in expectation over $D$. We make the assumption
(standard in mechanism design in other contexts) that $D$ has a non-decreasing hazard
rate\footnote{A cumulative distribution $F$ with density $f$ has
\emph{non-decreasing hazard rate} (sometimes called \emph{monotone hazard rate}) if $\frac{f(x)}{1-F(x)}$ is non-decreasing with
$x$.  }. Our second main result is a positive one:

\vspace{2mm} {\bf Theorem:} \emph{There exists a truthful mechanism
that achieves a constant
%$16\frac{7}{8}$
approximation to social welfare when supply is drawn from a known distribution with non-decreasing hazard rate.}
%\emph{For every distribution on the supply that exhibits non-decreasing hazard rate, there exists a truthful mechanism that obtains a constant fraction of the optimal social welfare.}
 \vspace{2mm}

%The mechanism we construct are prompt and ex-post truthful.

This mechanism is simple, deterministic, computationally efficient,
and easy to implement, but it's analysis is surprisingly subtle. We
stress that the incentive properties of the mechanisms we give do
not rely on any distributional information. In particular, truthful
bidding is a dominant strategy for every set of bids, for every
supply, and for any realization of the coin flips of the mechanism
(truthful "in the universal sense", see \cite{NR01,DNS06}), not only
in expectation. Truthfulness in expectation over supply realization
would require that all the bidders and the seller share the same
beliefs on how the supply is distributed. This is unlikely either
because bidders do not have the resources needed for
estimating these priors, or, because they may have
private information that creates heterogeneity in their beliefs
(see, e.g., \cite{AL01}).\footnote{We note that in the stochastic
setting, we can achieve optimal welfare using expected VCG prices if
we were to require only truthfulness \emph{in expectation} over the
supply $\ell$. However, this seems to be a weak solution concept,
since bidders may be motivated to misrepresent their valuations if
their understanding of the supply distribution $D$ differs from the
mechanism's, or if they are not risk-neutral. In this paper, we show
that positive results can be achieved even with this stronger
solution concept.}

We also show that the non-decreasing hazard rate assumption is
necessary: no deterministic mechanism can achieve a constant
approximation (or, in particular, better than an $\Omega(\sqrt{\log
n/\log\log n})$ approximation) to social welfare over arbitrary
distributions. As mentioned, our mechanism is deterministic, and
does not involve randomization techniques used in previous papers
for obtaining truthful approximations (like random sampling, see
\cite{HK07,DNS06}).

%Also, all decisions are made by the seller in an online fashion, both on allocation and payments.
%In addition, our truthful mechanism is deterministic and does not require random sampling techniques that were used in previous work (e.g., \cite{jason01,DNS06,jasonsurvey}).

%Finally, in section \ref{ExtensionSection} we present some
%extensions to our model.
Finally, we also consider the setting in which the bidders preferences may exhibit
complementarities for multiple items (\emph{increasing} marginal
utilities).
%We show that all of our mechanisms also
%apply to the case in which bidders desire multiple items, and have a
%valuation function exhibiting \emph{non-increasing marginal
%utility}.
%We also consider the case of bidders with increasing
%marginal utility for multiple items,
We study the the extreme case of \emph{knapsack valuations} (or
single-minded bidders) and show strong lower bounds (even in the
stochastic supply setting) on the competitive ratio that any
algorithm can achieve, even without incentive constraints. We
provide an algorithm with an exactly matching competitive ratio to
prove that our lower bound is tight.
% Lastly, we extend our setting to multiple item types, and show how to obtain an $O(d)$ approximation to social welfare when supply from each item is drawn from a distribution satisfying the monotone hazard rate condition, and when each bidder is interested in one item out of at most $d$ types of items.

\subsection{Related Work}
\label{RelatedWorkSection}

The works most related to ours are Mahdian and Saberi \cite{MS06},
Cole, Dobzinski, and Fleischer \cite{CDF08} and Lavi and Nisan
\cite{LN05}. Mahdian and Saberi  \cite{MS06} is the only other work that we are
aware of to study mechanisms in which the supply is unknown and
arrives online. They study the sale of multiple types of goods to
bidders who desire only a single item, and wish to design mechanisms
to maximize revenue. They consider only the adversarial supply
setting, and allow extracting all payments when the entire supply
has been exhausted
% (i.e. their mechanisms are not prompt)
. In this model, they give a truthful mechanism that is constant
competitive with respect to the optimal auction that is restricted
to selling all items at a single price, and show a lower bound of
$(e+1)/e$. Their mechanism is randomized, and is based on
random-sampling techniques to achieve truthfulness.

Cole, Dobzinski, and Fleischer \cite{CDF08} introduce the concept of
\emph{prompt mechanisms}, which impose the natural condition that
bidders learn their payment immediately upon winning an item. They
observe that mechanisms which are not prompt are often unusable,
because, e.g.,  they tie up bidders to the auction for too long,
they make debt collection difficult, and they require a high level
of trust in the auctioneer.
%(See \cite{CDF08} for more
%discussion of these issues)
They study prompt mechanisms for a problem in which the supply of
$m$ expiring items is fixed and known to the mechanism, but the bidders
arrive and depart online. They wish to maximize social welfare, and
give a truthful $\log m$ competitive mechanism, and show a lower
bound of $2$ even for randomized mechanisms. Similar models of
online auctions with expiring goods were studied earlier by Lavi and
Nisan \cite{LN05} and by Hajiaghayi et al. \cite{HKMP05}. These models
relate to ours since the allocation decisions for items with
expiration date (airline tickets, for instance) must be made online. In these papers, however, there is no uncertainty on the
supply and bidders arrive and depart over time. More on online
auctions, which were first discussed by Lavi and Nisan \cite{LN04},
can be found in the survey \cite{Par07}.

%Online auctions were first discussed by Lavi and Nisan \cite{LN04},
%followed by a series of paper on online mechanism design (see the
%survey \cite{Par07}). In particular, the work on online auctions
%with expiring goods (\cite{LN05,HKMP05}) is related to our model as
%the allocation decision for items with expiration date (airline
%tickets, for instance) should be made on the fly. In these papers,
%however, there is no uncertainty on the supply and bidders arrive
%and depart over time.
A recent line of papers studies online mechanism design in a
Bayesian setting (\cite{CPS05,AS07,BV06}), where welfare-maximizing,
and even budget balanced, generalizations of VCG mechanisms are
presented for online settings. Our paper does not assume a Bayesian
preference model and, as our lower bounds show, socially-efficient
outcomes cannot be truthfully implemented. In the economics
literature, stochastic supply has not been studied in  many papers.
Most of this work (see, for example, \cite{Jei99,NP07}) studied a
Bayesian model, and focused on the characterization of equilibrium
prices. Uncertain supply models can be viewed as more complicated
versions of the classic sequential auctions model, which is technically
hard to analyze even without uncertainty on the supply (see, e.g.,
\cite{MW82,MV93}).

While our paper focuses on auctions for identical goods with
%decreasing marginal utilities, we briefly discuss other domains in Section \ref{ExtensionSection}.
bidders that are interested in a single item, we briefly discuss a more general domain in which
single minded bidders are interested in multiple items in Section \ref{sec:knapsack}.
Knapsack auctions (or auctions for
single-minded bidders) were studied by \cite{AH06,DN07} for static
settings with known supply.
%Aaron: Doesn't seem necessary in related work section
%We also touch auctions for heterogenous
%items. Interestingly, for the well studied class of submodular
%valuations in combinatorial auctions, one could achieve good
%approximation results in the online-supply setting (without taking
%incentives into account). That is, the simple greedy 2-approximation
%algorithm in \cite{LLN06} allocates the items one by one in
%arbitrary order and thus can be used when the supply arrives online
%and with uncertain quantities. An interesting open question is
%whether positive results similar to ours can be truthfully attained
%in the presence of reasonable priors on the supply also for
%heterogeneous items (our lower bounds for the adversarial settings
%still hold in this case, of course).

\vspace{2mm}

We proceed as follows. After presenting our formal model in the next
section, we present our main results in Sections
\ref{AdversarialSection} (adversarial supply) and
\ref{StochasticSection} (stochastic supply). We then discuss online-envy-free mechanisms in Section
\ref{section:envy-free} and strengthen our lower bounds, and consider Knapsack valuations in Section
\ref{sec:knapsack}.

\section{Model and Definitions}
\label{ModelSection}
We consider a set of $n$ bidders $\{1,\ldots,n\}$, each desires a single item from a set of identical items (except in Section~\ref{sec:knapsack} in
which we expand our model to agents interested in multiple items.)
Each bidder has a non-negative valuation $v_i$ for an item.
A \emph{mechanism} $\M$ is a (possibly randomized) allocation rule
paired with a payment rule. Bidders report their valuations to the
mechanism before any item arrives, and the mechanism assigns items as they arrive to
bidders, and simultaneously charges each bidder $i$ some price $p_i$.
When $\ell$ items arrive
and bidders have submitted bids $v_1',\ldots,v_n'$, we denote the
outcome of the mechanism by $\M_\ell((v_1',\ldots,v_n'),r)$ where
$r$ is a random bitstring which may be used by randomized
mechanisms. We note that the mechanism is unaware of $\ell$, as it only encounters the items one at a time as they arrive. We will leave out the $r$ when it is clear from
context. We adopt standard notation and write $v'_{-i}$ to denote
the set of valuations reported by all bidders other than bidder $i$.
A bidder $i$ who receives an item obtains utility
$u_i(v_i;\M_\ell(v'_1,\ldots,v'_n)) = v_i - p_i$. Bidders who do not
receive an item obtain utility 0. Bidders wish to maximize their own
utility, and may misrepresent their valuations to the mechanism in
order to do so.

We require that our mechanisms be \emph{truthful}: that bidders
should be incentivized to report their true valuations, regardless
of the bids of others or the realizations of the supply. Following the literature (e.g. Goldberg et al. \cite{GH03}, Guruswami et al. \cite{GH05}) we define a randomized truthful mechanism to be a probability distribution over deterministic truthful mechanisms.
\begin{definition}
A mechanism $\M$ is \emph{(ex-post) truthful} if for every bidder
$i$ with value $v_i$, for every set of bids $v'_{-i}$, for every
alternative bid $v'_i$ and for every $r$ and $\ell$:
$u_i(v_i;\M_\ell((v_i,v'_{-i}),r)) \geq
u_i(v_i;\M_\ell((v'_i,v'_{-i}),r))$
\end{definition}
%We will be concerned only with ex-post truthful mechanisms in this
%paper, and so will refer to them simply as \emph{truthful
%mechanisms}.
We will assume that bidders submit their true valuations to truthful
mechanisms, since it is a dominant strategy for them to do
so.

 Without loss of
generality, we imagine that $v_1,\ldots,v_n$ are written in
non-increasing order. The social welfare achieved by a mechanism is
the sum of the values of the bidders to whom it has allocated items,
which we denote by $W(\M_\ell((v_1,\ldots,v_n),r))$. When $\ell$
items arrive, we will denote the optimal social welfare by
$\OPT_\ell = \sum_{i=1}^\ell v_i$.
% Moshe: we never defined OPT
When $\ell$ is drawn from a distribution $D$ over the support
(w.l.o.g.) $\{1,...,n\}$, we define $\OPT = \E_{\ell}[\OPT_\ell] =
\sum_{i=1}^n \OPT_i\cdot \Pr[l=i] $.

We will be concerned with approximation guarantees to social welfare
in both the \emph{adversarial supply} setting and the
\emph{stochastic supply} setting.

\begin{definition}
A mechanism $\M$ achieves an $\alpha$-approximation to social welfare in the \emph{adversarial supply} setting if for \emph{every} supply $\ell$:
$\frac{\OPT_\ell}{\E_r[W(\M_\ell((v_1,\ldots,v_n),r))]} \leq \alpha$

When $\ell$ is drawn from a distribution $D$, a mechanism $\M$ achieves an $\alpha$-approximation to social welfare in the \emph{stochastic supply} setting if:
$\frac{\E_{\ell}[\OPT_\ell]}{\E_{\ell,r}[W(\M_\ell((v_1,\ldots,v_n),r))]} \leq \alpha$
\end{definition}

In the stochastic setting, we will assume unless otherwise specified that $D$ satisfies the \emph{non-decreasing hazard rate} condition:
\begin{definition}
The hazard rate of a distribution $D$ at $i$ is:
$h_i(D) = \frac{\Pr[\ell = i]}{\Pr[\ell \geq i]}$. We write simply
$h_i$ when the distribution is clear from context.

$D$ satisfies the \emph{non-decreasing hazard rate} condition if $h_i(D)$ is a non-decreasing sequence in $i$.
\end{definition}
The non-decreasing hazard rate condition is standard in mechanism
design (see, for example, \cite{M81,Krishna02} and recent
computer-science work \cite{CHK07,HR08}), and is satisfied by many
natural distributions, including the exponential, uniform, and
binomial distributions.

One might also consider an intermediate model in which supply is drawn from a distribution satisfying the non-decreasing hazard rate condition, but the distribution is unknown to the mechanism. However, we note that since point distributions satisfy the hazard rate condition, adversarial supply is a special case of this model, and so our lower bounds apply.

\section{Adversarial Supply}
\label{AdversarialSection}
In this section we consider the adversarial model in which we do not have a distribution over supply and we require a good approximation to social welfare for any number of items that arrive.
We first show that deterministic truthful mechanisms cannot achieve any approximation better than the trivial $n$-approximation. We then consider randomized mechanisms, and give a lower bound of $\Omega(\log\log n)$, proving in particular that no constant approximation is possible.

\subsection{Deterministic Mechanisms}
We begin by proving that deterministic mechanisms can only achieve a trivial approximation.
We present a sketch of the proof and defer the details to Appendix~\ref{app:det-adversarial}. First, we characterize deterministic truthful mechanisms by two useful observations:
\begin{lemma}
\label{priceLemma}
For every truthful mechanism and for any realization of items, the price $p_b$ that bidder $b$ is charged upon winning (any) item is independent of his bid.
\end{lemma}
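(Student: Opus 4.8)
The plan is to use the standard single-parameter ``critical value'' argument: invoke the truthfulness condition twice, with the roles of two candidate bids reversed. Fix the bidder $b$, fix the reported values $v'_{-b}$ of all other bidders, fix a number $\ell$ of arriving items (a ``realization of items''), and, in the randomized case, fix the coin string $r$, so that we are dealing with a single deterministic truthful mechanism. Under these fixed choices the entire outcome---in particular whether $b$ wins and, if so, the price charged to $b$---is a deterministic function of $b$'s own report alone. Suppose $b$ wins when reporting $x$, at price $p(x)$, and also wins when reporting $y$, at price $p(y)$; the claim is exactly that $p(x)=p(y)$.

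First I would apply the truthfulness inequality to a bidder whose true value is $x$: bidding truthfully gives utility $x-p(x)$ (he wins at price $p(x)$), while misreporting $y$ gives utility $x-p(y)$ (he still wins, but the value he derives is his true value $x$). Truthfulness gives $x-p(x)\ge x-p(y)$, i.e. $p(x)\le p(y)$. Then I would run the same argument with a bidder whose true value is instead $y$: truthful reporting gives $y-p(y)$, misreporting $x$ gives $y-p(x)$, so $y-p(y)\ge y-p(x)$, i.e. $p(y)\le p(x)$. Combining the two inequalities yields $p(x)=p(y)$, which is precisely the assertion that the price a winning bidder is charged does not depend on his bid (holding $v'_{-b}$ and the realization fixed). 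Note this uses neither individual rationality nor any assumption on the signs of the utilities --- only the two inequalities guaranteed by ex-post truthfulness.

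I do not expect a genuine obstacle here; the only point needing a moment's care is the online, multi-item structure. Under the two reports $x$ and $y$, bidder $b$ might win at different times in the arrival sequence, or receive nominally different items. But since each bidder is unit-demand and $p_b$ denotes the single total amount charged to $b$ by the time $\ell$ items have arrived, the utilities compared above are still exactly $v_b-p(x)$ and $v_b-p(y)$, so the exchange argument is untouched. The remaining subtleties are purely bookkeeping: the statement is to be read pointwise in $(v'_{-b},\ell,r)$, and we only compare two reports under which $b$ in fact wins --- if $b$ wins under at most one of them there is nothing to prove.
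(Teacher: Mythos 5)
Your proof is correct and is essentially the same argument as the paper's: the paper phrases it as a one-sided contradiction (the bidder with the true value yielding the higher price would deviate to the bid yielding the lower price), which is exactly one of your two exchange inequalities. Your symmetric two-inequality version and the pointwise fixing of $(v'_{-b},\ell,r)$ are just a slightly more explicit write-up of the same standard fact.
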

\begin{lemma}
\label{orderLemma}
For every truthful mechanism and for any realization of items, if bidder $b$ wins an item, which item bidder $b$ wins is independent of his bid whenever $p_b < v_b$.
\end{lemma}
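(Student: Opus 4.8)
The plan is to argue by contradiction, combining Lemma~\ref{priceLemma} with the online (no-knowledge-of-the-future) structure of the mechanism. Fix a deterministic truthful mechanism, a bidder $b$, and a profile $v_{-b}$ of the other bids. Suppose for contradiction that there are two bids $x$ and $y$ for $b$, both strictly above $p_b$, and a realization $\ell$ of the supply under which $b$ wins the $k$-th arriving item when he bids $x$ but wins the $k'$-th arriving item when he bids $y$, with $k \neq k'$; without loss of generality $k < k'$ (since the mechanism allocates at most one item to each unit-demand bidder, "the item $b$ wins" is well defined, and by Lemma~\ref{priceLemma} the price is $p_b$ in both cases).

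The key observation is that the mechanism is online: the recipient of the $j$-th item and the price charged for it are decided at the moment that item arrives, so they cannot depend on how many further items will arrive. Hence under a supply of \emph{exactly} $k$ items the allocation and prices of items $1,\dots,k$ are identical to what they are under the supply $\ell \ge k' > k$. Under bid $x$ this means $b$ still wins item $k$, and — since its price was fixed at time $k$ — still pays exactly $p_b$. Under bid $y$, however, at realization $\ell$ bidder $b$'s only winning item was item $k' > k$, so by unit demand he won none of items $1,\dots,k$ there, hence none of them under supply $k$ either; so with bid $y$ and supply $k$ he receives no item and gets utility $0$.

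Now consider a bidder $b$ whose true value equals $y$, facing a supply of exactly $k$ items (truthfulness is required for every realization, so this is legitimate). Bidding truthfully (bidding $y$) yields utility $0$, whereas deviating to bid $x$ makes $b$ win item $k$ at price $p_b$, for utility $y - p_b > 0$ by the hypothesis $p_b < y$. This strictly profitable deviation contradicts ex-post truthfulness. The case $k > k'$ is symmetric, exchanging the roles of $x$ and $y$.

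The step I expect to be the crux is the correct use of the online constraint: formalizing that the allocation and payment of each item are frozen the instant it is assigned, and therefore carry over verbatim to any truncation of the supply, so that comparing the mechanism's behavior at supply $\ell$ and at supply $k$ is legitimate. The remaining ingredients — unit demand giving a unique winning item, Lemma~\ref{priceLemma} pinning the winning price to $p_b$, and individual rationality bounding $p_b$ below any winning bid — are routine.
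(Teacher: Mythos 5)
Your proof is correct and follows essentially the same route as the paper's: you truncate the supply at the index of the earlier-won item and observe that the bidder whose truthful bid wins the later item would profitably deviate to the other bid, contradicting ex-post truthfulness. The only difference is that you make explicit the online-truncation step (that the mechanism's behavior on the first $k$ items cannot depend on future arrivals), which the paper leaves implicit in the phrase ``consider a realization in which only $i$ items arrive.''
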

\begin{theorem}
\label{thm:det-advers-lb}
No deterministic truthful mechanism can achieve better than an $n$ approximation to social welfare.
\end{theorem}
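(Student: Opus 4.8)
The plan is to use the characterization in Lemmas~\ref{priceLemma} and~\ref{orderLemma} to show that a deterministic truthful mechanism with \emph{any} bounded approximation ratio is essentially forced to run a Vickrey rule on the first item and is then unable to sell any further item, so that on a profile in which all $n$ values are nearly equal it collects only $v_1\approx\OPT_n/n$. First I would record the structure implied by the lemmas: fixing the bids of everyone but bidder $b$, and combining Lemma~\ref{priceLemma} (bid-independent winning price), Lemma~\ref{orderLemma} (bid-independent target item), and the usual monotonicity of truthful allocation rules, bidder $b$ has a single critical price $c_b$ and a single target slot, winning that slot at price $c_b$ whenever $v_b>c_b$ and winning nothing whenever $v_b<c_b$. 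Since items are allocated and paid for on arrival, the winner of the $k$-th item and the price at which it is sold are the same for every supply $\ell\ge k$.

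Next I would pin down the first item. For supply $\ell=1$ the winner of item~$1$ must have value at least $\OPT_1/\alpha=v_1/\alpha$, or the mechanism already fails at $\ell=1$. Then, using that a bidder who does not win item~$1$ can deviate by overbidding everyone, while by Lemma~\ref{orderLemma} a bidder who does win an item cannot switch to a different one, I would argue that the critical price a bidder pays for item~$1$ is at least the second-highest bid, while individual rationality forces it to be at most the highest value. On profiles whose top two values agree up to a $(1+\eps)$ factor this pins the price of item~$1$ to essentially the second-highest bid.

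The crux is to show that no later item is sold. Take a profile with \emph{distinct} values $v_1>v_2>\cdots>v_n$ all within a $(1+\eps)$ factor of each other, so that $\OPT_n/v_1\ge n/(1+\eps)^n$. By the previous step item~$1$ goes to bidder~$1$ at a price essentially equal to $v_2<v_1$. Suppose for contradiction that some later item $k\ge2$ were sold, say to bidder $j$ at critical price $c$. Since bidder~$1$ won item~$1$ at a price strictly below her value, she can deviate so as to occupy bidder~$j$'s rank; by Lemma~\ref{orderLemma} she would still win item~$1$ (or nothing) and never item~$k$, which forces the critical price for winning item~$k$ in that configuration to be at least the largest competing bid there. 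Arranging the near-tie profile and the deviation so that the two relevant evaluation points coincide makes this lower bound exceed $v_j$, contradicting individual rationality of the item-$k$ sale. Hence only item~$1$ is ever sold on such a profile, the welfare is $v_1$, and the ratio is at least $n/(1+\eps)^n$; letting $\eps\to0$ with $n$ fixed gives the bound~$n$.

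Everything outside the crux is routine bookkeeping with the two characterization lemmas. The difficulty is that the winner and the critical price of an item depend on the \emph{entire} bid vector, not merely on local order statistics, so the perturbation that realizes the slide-down deviation can also shift the critical prices of other bidders. One must therefore choose the near-tie profile and the sequence of deviations with care, so that the slide-down argument pins the price of item~$k$ at exactly the evaluation point at which individual rationality can be violated; getting this matching right is the delicate part, and is presumably what the deferred appendix handles.
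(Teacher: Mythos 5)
Your setup (Lemmas~\ref{priceLemma} and~\ref{orderLemma}, the overbidding observation, the near-tie profile at the end) matches the paper's, but the step you yourself flag as the crux --- showing no item beyond the first is ever sold --- contains a genuine gap, and the route you propose for it is both harder than necessary and not actually closed. Your plan is to have bidder~$1$ ``slide down'' to bidder~$j$'s rank and then derive a contradiction from the critical price of item~$k$ exceeding $v_j$. But as you concede, bidder~$j$'s critical price for item~$k$ is an arbitrary function of the \emph{other} bids, so perturbing bidder~$1$'s bid can move it arbitrarily; nothing in the two lemmas lets you ``pin the price of item~$k$ at exactly the evaluation point at which individual rationality can be violated,'' and no amount of careful matching is exhibited that would make this work. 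The argument as written does not go through.

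The fix is to apply your own overbidding observation to bidder~$j$ directly rather than to bidder~$1$'s deviation. If the mechanism is an $n$-approximation, then for \emph{every} bidder $j$ there is a bid (more than $n$ times everyone else) under which $j$ must receive the \emph{first} item, since otherwise the adversary stops the supply at $\ell=1$ and the approximation fails. By Lemma~\ref{orderLemma}, a bidder for whom some bid wins item~$1$ can never win any item other than item~$1$ under any bid (whenever $p_j<v_j$ strictly). Since this holds for every bidder simultaneously, no item beyond the first is ever sold, and on a profile of $n$ nearly equal values the welfare is at most $v_1\approx\OPT_n/n$. The only residual care needed is the boundary case $p_j=v_j$, where Lemma~\ref{orderLemma} is silent; the paper handles this by drawing each value from a set of $n+1$ distinct values in $[1,1+\epsilon]$ so that, since each $p_j$ is bid-independent (Lemma~\ref{priceLemma}), with positive probability no bidder's value equals his threshold. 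Your distinct-value near-tie profile gestures at this but does not argue why the thresholds avoid the values.
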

\begin{proof}
(Sketch) We show that if the mechanism achieves any finite approximation to social welfare, every bidder has a bid such
that he is allocated the first item. Applying lemmas \ref{priceLemma} and \ref{orderLemma}, we conclude that any
deterministic truthful mechanism that achieves a finite approximation to social welfare can only sell a single item, which implies that it cannot achieve better than an $n$ approximation when all bidders have the same value for an item. See the appendix for further details.
\end{proof}

\subsection{Randomized Mechanisms}

\subsubsection{An $\Omega(\log\log n)$ lower bound}
We next present our first main result, a lower bound
for randomized truthful mechanisms.
\begin{theorem}
\label{generalLowerBound}
No truthful randomized mechanism can achieve an $o(\log\log n)$ approximation to social welfare when faced with adversarial supply.
\end{theorem}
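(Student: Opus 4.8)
My plan is to argue by contradiction. Suppose a truthful randomized mechanism $\M$ guarantees an $\alpha$-approximation to social welfare for every adversarial supply with $\alpha=o(\log\log n)$; I will exhibit a single valuation profile on which this is impossible. Since $\M$ is, by definition, a distribution over deterministic truthful mechanisms, and since both social welfare and the per-supply approximation requirement are symmetric in the bidders, I may assume without loss of generality that $\M$ is \emph{anonymous}: averaging $\M$ over all relabelings of the bidders yields a distribution over (relabeled, hence still) truthful mechanisms, and retains the $\alpha$-approximation guarantee. Fix a carefully chosen profile $v_1\ge v_2\ge\cdots\ge v_n$ and write $x_{t,b}$ for the probability that the item arriving at time $t$ is given to bidder $b$; because the mechanism is oblivious to the eventual supply $\ell$, these numbers do not depend on $\ell$, and $\sum_b x_{t,b}\le 1$ for every $t$. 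The approximation guarantee, applied at every supply $\ell$, becomes the family of \emph{welfare constraints} $\sum_{t\le \ell}\sum_b x_{t,b}v_b\ \ge\ \OPT_\ell/\alpha$.

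For the hard profile I would use a \emph{layered} instance with $L+1=\Theta(\log\log n)$ value levels: level $j$ consists of $s_j$ bidders all of value $V_j$, with $V_0>V_1>\cdots>V_L$, where the multiplicities grow \emph{doubly exponentially} (roughly $s_j\approx 2^{2^j}$, so that $n=\sum_j s_j$ forces $L=\Theta(\log\log n)$), and the values are tuned so that at the level boundary $S_j:=s_0+\cdots+s_j$ the optimum $\OPT_{S_j}=\sum_{i\le j}s_iV_i$ draws a constant fraction of its mass from level $j$ itself. The welfare constraints at the supplies $S_0,S_1,\dots,S_L$ are not contradictory on their own --- a non-truthful greedy allocation satisfies all of them with $\alpha=1$ --- so the crux is to adjoin the constraints forced by truthfulness. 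Here I invoke Lemmas~\ref{priceLemma} and~\ref{orderLemma}: in each deterministic component, the slot a bidder wins and the price he pays are independent of his own bid. Hence the early, cheap slots that greedy would hand to a high-level bidder cannot be awarded on the basis of that bidder's value; they must be pre-committed as a function of the \emph{other} bids, and, since the mechanism cannot revise past decisions as more items arrive, this forces consistency across the level structure. Formalizing this --- by comparing the run on the layered profile against runs in which some bidders lower their bids, in the spirit of the single-item argument behind Theorem~\ref{thm:det-advers-lb} --- yields a second family of \emph{consistency constraints} that bound the mass of early slots $\M$ can devote to bidders of level $\le j$ in terms of the mass it devotes to bidders of level $\le j-1$.

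It then remains to show that the welfare constraints and the consistency constraints are jointly infeasible whenever $\alpha=o(\log\log n)$. I would aggregate the $x_{t,b}$ into level quantities --- for instance $y_j$, the expected number of the first $S_j$ items allocated to a bidder of level $\le j$ --- and combine the two families into a recursion on the $y_j$ which, telescoped over the $\Theta(\log\log n)$ levels, forces $\alpha=\Omega(\log\log n)$; the doubly-exponential choice of multiplicities is exactly what upgrades the $\log n$ bound attainable by the envy-free argument of Section~\ref{section:envy-free} to a $\log\log n$ bound here. This infeasibility analysis is the main obstacle: both designing the instance so that the per-level optima and the ``budget'' of early items are tight, and, within it, extracting from truthfulness precisely the right inequality --- one that rules out the greedy-like allocations available to a non-truthful mechanism while remaining valid for \emph{every} distribution over deterministic truthful mechanisms. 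As the introduction suggests, I expect the clean form of the argument to pair each welfare constraint at a supply $S_j$ with a constraint obtained from a specific profitable deviation that a greedy allocation at the earlier levels would create, and to show that the resulting system of (in)equalities has no solution beating $\Omega(\log\log n)$.
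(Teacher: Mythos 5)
Your outline has the right skeleton---reduce to deterministic truthful mechanisms, invoke the bin/threshold characterization of Lemmas~\ref{priceLemma} and~\ref{orderLemma}, write one welfare constraint per supply level, and show the system is infeasible for $\alpha=o(\log\log n)$---but the two steps that carry all of the difficulty are deferred rather than proved. You write that ``formalizing this \ldots yields a second family of consistency constraints'' and that ``it then remains to show'' joint infeasibility; in the actual argument these are the content: one needs a concentration bound (the paper uses a Chernoff bound on the number of sales $N_b$), a careful choice of the breakpoints $b_k$ at which the cumulative acceptance probability reaches the supply, an AM--GM step converting $\sum c_i$ lower bounds into $\sum 2^{c_i}$ upper bounds, and a recursion $\Delta_k^j=\Omega(n/(\alpha(j+\alpha)))$ telescoped over $\Theta(\log n)$ \emph{doublings of the supply}, whose harmonic sum is where $\log\log n$ actually comes from. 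Your stated mechanism for producing $\log\log n$---``doubly-exponential multiplicities'' across only $\Theta(\log\log n)$ levels---does not match this, and it is not established (nor obvious) that your instance forces the bound at all; the paper instead uses $\log n$ value levels $2^{-i}$ with $\approx 2^i$ bidders each.

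More seriously, your choice of a \emph{single fixed profile} plus symmetrization (equivalently, Yao's principle applied to a uniformly random relabeling of one rigid multiset of values) introduces an obstacle that the paper's i.i.d.\ construction is specifically designed to avoid. On a rigid profile, $v_b$ is \emph{determined} by $v_{-b}$, so a deterministic component's threshold $t_b(v_{-b})$ and slot $i_b(v_{-b})$ can effectively decode bidder $b$'s own value. Indeed, the mechanism ``assign bidder $b$ slot $b$ at price equal to the $b$-th highest entry of $v_{-b}$ minus $\eps$'' is truthful, globally feasible (each slot is owned by one identity), and achieves $\OPT_\ell$ for \emph{every} $\ell$ on your layered profile under the standard labeling; so no argument that looks only at one labeled profile can succeed, and after symmetrization the only remaining obstruction to greedy-like behavior is the cross-profile feasibility constraint (``at most one winner per bin on every counterfactual profile''), which is far harder to convert into the clean per-slot inequalities you need. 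The paper sidesteps this entirely: with values drawn i.i.d.\ from $V$, the threshold $t_b$ is independent of $v_b$, so the value met at each offer is a fresh draw from $V$ and the expected welfare factors into $\frac{1}{n-1}\sum_b (c_b+1)\Pr[N_{b-1}<k]$, from which the constraint system follows directly (together with Lemma~\ref{OPTLemma} for the lower bound on $\E[\OPT_k]$ and Lemma~\ref{notTooBigLemma} to control the tail). As written, your proposal neither derives the consistency constraints nor demonstrates infeasibility, so it does not yet constitute a proof.
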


\begin{proof}

A truthful randomized mechanism is simply a probability distribution over deterministic  truthful mechanisms. To prove our randomized lower bound, we will exhibit a distribution over bidder values such that no deterministic truthful mechanism achieves a good approximation to welfare in expectation over this random instance. By Yao's min-max principle, this is sufficient to prove a lower bound on randomized mechanisms.

We define a distribution $V$ with support over values $1/2^i$ for $0 \leq i \leq \log n-1$. For each realization $v \in V$, we let: $\Pr[v = 1/2^i] = 2^i/(n-1)$. Therefore, we have $\Pr[v \geq 1/2^i] = (2^{i+1}-1)/(n-1)$ and $E[v | v \geq 1/2^i] = (i+1)/(2^{i+1}-1)$.

\begin{lemma}
\label{OPTLemma}
Consider a set of $n$ valuations drawn from $V$ and let $\OPT_k$ denote the sum of the $k$ highest valuations from the set. Then:
$\E[\OPT_k] \geq H_{k+1}-1$
where $H_{k+1}$ denotes the $k+1$st harmonic number. In particular, $\E[\OPT_k] > (\log k)/2$.
\end{lemma}
\begin{proof}
We defer this proof to Appendix~\ref{app:adver-proofs}.
\end{proof}

By Lemma~\ref{priceLemma} and Lemma~\ref{orderLemma}, we may characterize deterministic truthful mechanisms as follows: The mechanism assigns to each bidder $b$ a bin $i_b$ and a threshold $t_b$. $i_b$ and $t_b$ are independent of $b$'s bid $v_b$, but are assigned such that at most one bidder in each bin can have a bid above his threshold.\footnote{An example of such a function is for each bidder's threshold to be the highest bid of any other bidder in his bin. This results in exactly one bidder (the highest) having a bid above his threshold, while maintaining the property that each bidders threshold is independent of his bid.} If $v_b > t_b$, $b$ wins item $i$ (if it arrives) at price $t_b$. Equivalently, we may imagine the mechanism operating by ordering bidders in some permutation $\pi$ such that for all $i$, every bidder in bucket $i$ is ordered before every bidder in bucket $j > i$. When the first item arrives, the mechanism offers it to each bidder at their threshold price, in order of $\pi$ until some bidder $b$ accepts. We continue in this manner, offering the next item to bidders starting at $b+1$ until one accepts, etc.

We construct a distribution over instances by drawing each bidder's valuation independently from the distribution $V$ described above. Since bidder's thresholds and buckets are independent of their own bids, each value encountered by the mechanism when making offers in order of $\pi$ is distributed randomly according to $V$ (note that although the values are distributed randomly, they need not be independent of each other). We may assume without loss of generality that each threshold $t_b = 1/2^{c_b}$ for some $c_b \in {0,\ldots,\log n-1}$.

When all $n$ items arrive, the expected welfare achieved by a mechanism is:
$\sum_{b=1}^n \Pr[v_b \geq \frac{1}{2^{c_b}}]\cdot E[v_b | v_b \geq \frac{1}{2^{c_b}}] = \frac{1}{n-1}\sum_{b=1}^n(c_b + 1)$.
Let $N_{b}$ denote the number of items sold by a mechanism after making offers to $b$ bidders. Then we have more generally, when $k$ items arrive, the expected welfare achieved by a mechanism is:
$\sum_{b=1}^n \Pr[v_b \geq \frac{1}{2^{c_b}}]\cdot E[v_b | v_b \geq \frac{1}{2^{c_b}}]\cdot \Pr[N_{b-1} < k] = \frac{1}{n-1}\sum_{b=1}^n(c_b + 1)\Pr[N_{b-1} < k]$.
If our mechanism achieves an $\alpha$ approximation to social welfare, we therefore have the following $n$ constraints on the values of $c_b$ chosen by the mechanism. For all $1 \leq k \leq n$:
\begin{equation}
\label{lowerBoundConstraint}
\sum_{b=1}^n(c_b + 1)\Pr[N_{b-1} < k] \geq \frac{(n-1)\OPT_k}{\alpha} \geq \frac{(n-1)\log k}{2\alpha}
\end{equation}
where the last inequality follows from Lemma~\ref{OPTLemma}. After offering the item to $b$ bidders, the expected number of sales is $E[N_b] = 1/(n-1)\cdot \sum_{i=1}^b(2^{c_b+1}-1)$.

By a Chernoff bound: $\Pr[N_{b-1} < k] \leq \exp(-(\frac{E[N_{b-1}]}{2}-k+1)) \leq \exp(-\frac{\sum_{i=1}^{b-1}2^{c_i}}{n-1}+k)$.
Let $b_{k}$ be the first index such that $\sum_{i=1}^{b_k}2^{c_i} \geq (n-1)\cdot k$. Then by plugging our bound into constraint \ref{lowerBoundConstraint}, we have for all $k$:
$$\sum_{i=1}^{b_k}(c_i + 1) + \sum_{i=b_k+1}^n\frac{(c_i+1)}{\exp(\frac{\sum_{j=b_k+1}^{i-1}2^{c_j}}{n-1})} \geq \frac{(n-1)\log k}{2\alpha}$$
\begin{lemma}
\label{notTooBigLemma}
For $c_i \in [0,\log n-1]$:
$$\sum_{i=b_k+1}^n\frac{(c_i+1)}{\exp(\frac{\sum_{j=b_k+1}^{i-1}2^{c_j}}{n-1})} < 2.5\cdot n$$
\end{lemma}
\begin{proof}
We defer the proof of this technical lemma to Appendix~\ref{app:adver-proofs}.
\end{proof}

So, for all $k$, there must exist an integer $b_k$ such that simultaneously the two equations hold:
$\sum_{i=1}^{b_k}c_i \geq \frac{(n-1)\log k}{2\alpha} - (2.5\cdot n + b_k)$,  and $\sum_{i=1}^{b_k-1}2^{c_i} < (n-1)\cdot k$. In particular, if  $k \geq 2^{15\alpha}$ and $n \geq 30$, then $\frac{n\log k}{4\alpha} \leq \frac{(n-1)\log k}{2\alpha}-3.5n$. Therefore, there must exist integers $b_k$ to satisfy the equations:
\begin{equation}
\label{firstEqn}
\sum_{i=1}^{b_k}c_i \geq \frac{n\log k}{4\alpha}
\end{equation}
\begin{equation}
\label{secondEqn}
\sum_{i=1}^{b_k-1}2^{c_i} < n\cdot k
\end{equation}

We will consider the smallest such set of $b_k$: For all $k$, we will have that $\sum_{i=1}^{b_k}c_i \geq \frac{n\log k}{4\alpha}$, but $\sum_{i=1}^{b_k-1}c_i < \frac{n\log k}{4\alpha}$. Note that if we reduce a larger $b_k$ in this manner, inequality \ref{secondEqn} continues to hold, and so this is without loss of generality.

We let $k = 2^{15\alpha}$ and consider the sequence of integers $k, 2k, 4k, \ldots, 2^tk$ such that $n \geq 2^tk > n/2$. For $j \geq 1$ we write $\Delta_k^j = (b_{2^jk} - b_{2^{j-1}k})$, and $\Delta_k^0 = b_k$. We note that from inequality \ref{firstEqn} and our assumption on the $b_k$, we have: $\sum_{i=b_{2^{j-1}k}}^{b_{2^jk}}c_i \geq \frac{n(\log k + j)}{4\alpha} - \sum_{i=1}^{b_{(k\cdot2^{j-1})-1}}c_i \geq \frac{n}{4\alpha}$.

Exponentiating both sides and applying the AM-GM inequality we have:
\begin{eqnarray*}
2^{n/(4\alpha \Delta_k^j)} \;\;\; \leq \;\;\;  \left(\prod_{i=b_{2^{j-1}k}}^{b_{2^jk}}2^{c_i}\right)^{1/\Delta_k^j}
\;\;\; \leq \;\;\; \frac{\sum_{i=b_{2^{j-1}k}}^{b_{2^jk}}2^{c_i}}{\Delta_k^j}
\;\;\; \leq \;\;\; \frac{n(2^jk+1)}{\Delta_k^j}
\end{eqnarray*}
where the last inequality follows from inequality \ref{secondEqn}.
This gives us:
$\Delta_k^j \geq \frac{n}{4\alpha(\log n + \log(2^{j+1}k) - \log \Delta_k^j)}$.
We can expand the above recursive bound to isolate $\Delta_k^j$ and find $\Delta_k^j = \Omega(n/(\alpha(j+\alpha)))$.
%\begin{eqnarray*}
%\Delta_k^j &\geq& \left(\frac{n}{4\alpha}\right)\frac{1}{\log(2^{j+1}k) + \log(\alpha) + \log(\log(2^{j+1}k) + \log(\alpha) + \log(\log(2^{j+1}k) + \log\alpha + \ldots  )\ldots )} \\
%&\geq& \left(\frac{n}{4\alpha}\right)\frac{1}{(\log(2^{j+1}k) + \log\log(2^{j+1}k) + \log\log\log(2^{j+1}k) + \ldots) + (\log \alpha + \log\log \alpha + \log\log\log\alpha+ \ldots)} \\
%&\geq& \left(\frac{n}{8\alpha}\right)\frac{1}{\log(2^{j+1}k) + \log(\alpha)} \\
%&\geq& \left(\frac{n}{8\alpha}\right)\frac{1}{j + 16\alpha + \log(\alpha)} \\
%&\geq&\frac{n}{136\alpha(j+\alpha)}
%\end{eqnarray*}

We recall that $n > b_{2^t k} = \sum_{i=0}^t\Delta_k^i$. Using the above bound, we see that $n$ is at least $\sum_{i=0}^t \Omega(n/(\alpha(i+\alpha))) = \Omega(\frac{n\log(t/\alpha)}{\alpha})$.
%\begin{eqnarray*}
%n &\geq& b_{2^t k} \\
%&=& \sum_{i=0}^t\Delta_k^i \\
%&\geq& \sum_{i=0}^t\frac{n}{136\alpha(i+\alpha)} \\
%&=&\frac{n}{136\alpha}(H(t+\alpha)-H(\alpha)) \\
%&=& \Theta\left(\frac{n\log(t/\alpha + 1)}{\alpha}\right) \\
%&\geq& \Theta\left(\frac{n\log(t/\alpha)}{\alpha}\right)
%\end{eqnarray*}
Therefore, we have $\alpha \geq \Theta(\log(t/\alpha))$ and so $\alpha \geq \Theta(\log t)$.
 %and so there is a constant $d$ such that $\alpha + d\log \alpha \geq d\log t$. Therefore:
%$$\alpha \geq \Theta(\log t)$$
We recall that $k = 2^{15\alpha}$ and $2^t k = 2^{15\alpha + t} \leq n$. $t$ is therefore constrained such that: $\log n \geq 15\alpha + t \geq \Theta(t)$.
And so we may take $t$ to be as large as $\Theta(\log n)$, giving us a lower bound of
$\alpha \geq \Theta(\log\log n)$.

\end{proof}

\subsubsection{A truthful $\log n$-approximation mechanism}

%We have proven strong lower bounds on the welfare-approximation
%achievable by ex-post truthful mechanisms in the adversarial
%setting.
Here we show a simple randomized %online-envy-free
mechanism that achieves a $\log n$ approximation to social welfare.
In Section \ref{section:envy-free} we show that this is nearly
optimal for the natural class of "online envy-free" mechanisms.

Let \textbf{RandomGuess} be the mechanism that selects a supply $g
\in \{2, 4, 8, \ldots, 2^i, \ldots, n\}$ uniformly at random, and
considers only the highest $g$ bidders  according to permutation
order.  When an item arrives the mechanism sells it to the first of
the remaining such bidders and charges him $v_{g+1}$.\footnote{The
authors thank Andrew Goldberg for suggesting this mechanism, which
is a significant simplification of our original mechanism.}

% sells the first $g$ items that arrive to the top $g$ bidders in some bid-independent order, and charges them the $g+1$st highest bid $v_{g+1}$. \footnote{The authors thank Andrew Goldberg for suggesting this mechanism, which is a significant simplification of our original mechanism.}

%\newline\newline
%\textbf{RandomGuess}\footnote{The authors thank Andrew Goldberg for suggesting this mechanism, which is a significant simplification of our original mechanism}:
%\begin{enumerate}
%  \item Fix a uniformly random permutation $\pi$ on the bidders.
%  \item Solicit bids, and denote them $v_1,\ldots,v_n$ in non-increasing order.
%  \item Choose $g \in \{2, 4, 8, \ldots, 2^i, \ldots, n\}$ uniformly at random
%  \item Sell each of the first $g$ items that come in to the highest $g$ bidders in the order in which they appear in $\pi$, and charge them $v_{g+1}$.
%    If more than $g$ items arrive, do not sell any additional items.
%\end{enumerate}

%\begin{theorem}
%RandomGuess is truthful and envy-free.
%\end{theorem}

\begin{proposition}
\label{RandomGuessFactor} RandomGuess is truthful % , online-envy-free, and
and achieves a $\log n$ approximation to social welfare.
\end{proposition}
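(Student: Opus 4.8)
The plan is to verify truthfulness first and then the approximation ratio. For truthfulness, observe that RandomGuess is a distribution over deterministic mechanisms, one for each guess $g \in \{2,4,\ldots,2^i,\ldots,n\}$, so by the definition of randomized truthfulness (a distribution over deterministic truthful mechanisms) it suffices to show each of these is truthful. Fix $g$. The mechanism considers the top $g$ bidders in permutation order and, as each item arrives, sells it to the next not-yet-served bidder among these top $g$ at price $v_{g+1}$. I would note that the permutation order can be taken independent of any single bidder's own bid (e.g., order by bid with ties broken by index, so a bidder's own rank relative to others moves monotonically in his bid), and that the threshold $v_{g+1}$ faced by a served bidder is the $(g+1)$st-highest bid, which is independent of that bidder's report as long as he is among the top $g$. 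So this is a posted-price mechanism from each bidder's perspective: bidder $i$ is served (provided enough items arrive) iff his bid places him in the top $g$, and he pays the fixed price $v_{g+1}$ (which is weakly below $v_i$ when he is in the top $g$). Reporting truthfully is a dominant strategy: overbidding can only win him an item at a price above his value (when his true value is below $v_{g+1}$), and underbidding can only cause him to lose an item he would have profited from; in all cases the inequality in Definition (ex-post truthful) holds for every $r$ and $\ell$.

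For the approximation guarantee, fix the true supply $\ell$ (the adversary's choice) and let $2^{j}$ be the power of two with $2^{j} \le \ell < 2^{j+1}$ (handle $\ell = 1$ separately — the mechanism never uses $g<2$, but $\OPT_1 = v_1$ and charging $v_2$ on the first item still collects $v_1 \ge \OPT_1/1$, trivially fine). I would lower-bound the expected welfare of RandomGuess by conditioning on the event that the random guess equals this particular $g = 2^{j}$, which happens with probability $1/\log n$ (there are at most $\log n$ choices in the geometric grid). Conditioned on $g = 2^{j} \le \ell$, at least $g$ items arrive, so all $g$ of the selected top-$g$ bidders receive an item, yielding welfare exactly $\sum_{i=1}^{g} v_i = \OPT_g$. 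Hence
\[
\E_r[W(\mathrm{RandomGuess}_\ell)] \;\ge\; \frac{1}{\log n}\,\OPT_{2^{j}}.
\]
The remaining step is to relate $\OPT_{2^j}$ to $\OPT_\ell$: since $\ell < 2^{j+1} = 2\cdot 2^{j}$, the top $\ell$ values are covered by the top $2^j$ values plus at most $2^j$ more values each no larger than $v_{2^j+1} \le v_i$ for $i \le 2^j$; a clean way to see it is $\OPT_\ell = \sum_{i=1}^{\ell} v_i \le \sum_{i=1}^{2^{j+1}} v_i \le 2\sum_{i=1}^{2^{j}} v_i = 2\,\OPT_{2^{j}}$, using that the second block of $2^j$ terms is termwise dominated by the first block (values are non-increasing). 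Combining, $\E_r[W] \ge \OPT_\ell/(2\log n)$, giving a $2\log n = O(\log n)$ approximation for every supply $\ell$, as required. (If one wants the cleaner constant, a slightly more careful choice of the grid or of which $g$ to condition on absorbs the factor of $2$; I would remark on this but not belabor it.)

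The one genuinely delicate point — and the thing I would be most careful about — is the truthfulness argument, specifically making precise that "permutation order" and "the threshold $v_{g+1}$" are independent of the bidder's own report in the relevant sense. The subtlety is that changing bidder $i$'s bid can change whether he is in the top $g$ and can also change the identity of $v_{g+1}$; I would argue, exactly as in the characterization via Lemmas \ref{priceLemma} and \ref{orderLemma} used in the randomized lower bound, that when bidder $i$'s bid moves across the top-$g$/not-top-$g$ boundary, the value $v_{g+1}$ is precisely the price he would face, and below that price he is not served while at or above it he is served at that price — a monotone threshold structure — so no misreport helps. Everything else is a routine two-line calculation.
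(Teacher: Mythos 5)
Your proof is correct in substance and shares the paper's core idea (condition on the random guess landing within a factor of two of the realized supply $\ell$), but it uses only half of what the paper uses, and so proves a $2\log n$ bound rather than the stated $\log n$. The paper's proof conditions on \emph{two} disjoint events, each of probability $1/\log n$: the underguess $I/2 < g \leq I$ (your case: all top-$g$ bidders are served, welfare $\OPT_g \geq \OPT_I/2$) \emph{and} the overguess $I < g \leq 2I$ (only $I$ of the top $g$ bidders are served, but since $I \geq g/2$ and the permutation is uniformly random, the served bidders are a random $I$-subset of the top $g$ and the conditional expected welfare is $(I/g)\OPT_g \geq \OPT_g/2 \geq \OPT_I/2$). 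Summing the two contributions gives exactly $\OPT_I/\log n$, absorbing your factor of $2$ — this is precisely the ``more careful choice of which $g$ to condition on'' you allude to. Note that the overguess case is where the randomness of the permutation $\pi$ is actually needed; your underguess-only argument happens to work for an arbitrary fixed $\pi$, except at $\ell=1$. On that edge case your remark is slightly off: the grid starts at $g=2$, so conditioned on $g=2$ the single item goes to whichever of the top two bidders comes first in $\pi$, which may be the second-highest bidder with value $v_2 \ll v_1 = \OPT_1$; with a uniformly random $\pi$ the conditional expected welfare is $(v_1+v_2)/2 \geq \OPT_1/2$, so the bound survives, but only because the permutation is random. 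Your truthfulness argument matches the paper's (a bid-independent take-it-or-leave-it offer at $v_{g+1}$ for each bidder in the top $g$), and your care about the threshold being the $g$-th highest of the \emph{other} bids is the right way to make the one-line claim rigorous.
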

\begin{proof}
We defer this proof to Section \ref{app:adver-proofs} in the Appendix.
\end{proof}

We leave open the problem of closing the gap between the $\log n$
factor achieved by RandomGuess and the $\Omega(\log\log n)$ lower
bound of Theorem \ref{generalLowerBound}. In section \ref{section:envy-free} we strengthen this lower bound to $\Omega(\log n/\log\log n)$ for the class of \emph{online-envy-free} mechanisms, also defined in section \ref{section:envy-free}.
% for non-envy-free mechanisms.
We conjecture that RandomGuess is optimal.

\section{Stochastic Supply}
\label{StochasticSection}
Given the strong lower bounds we have shown in the adversarial setting, we now consider the stochastic setting in which supply is drawn from some distribution $D$ known to the mechanism.
In this section, we give our second main result, a deterministic %online-envy-free
truthful mechanism that achieves an
$O(1)$-approximation to social welfare for any distribution with
non-decreasing hazard rate. At the end of this section we show that the monotone hazard rate condition is actually necessary to achieve constant approximation.

%To present the algorithm we first fix some notation.
%Denote by $\OPT_i$ the optimal social welfare achieved when $\ell = i$.
%That is, $\OPT_i = \sum_{j=1}^i v_j$ when values are sorted in non-increasing order.
%Additionally, let $\OPT$ denote the optimal expected welfare, that is
%$\OPT = \sum_{i=1}^n\OPT_i\cdot\Pr[\ell = i]$.

We consider the following mechanism that takes as input a distribution $D$. The
mechanism is deterministic, so all probabilities are over the
distribution $D$. We note that the mechanism decides on a maximal number of items it is going to sell {\em without looking at the bids}.
Although it seems somewhat surprising it still achieves good approximation when the non-decreasing hazard rate condition holds.
\\

{\centering
\fbox{
\begin{minipage}{15cm}
\begin{tabbing}
\textbf{HazardGuess}($D$):\\
(x) \= \kill
 \> 1. Fix an arbitrary permutation $\pi$ on the bidders.\\
 \> 2. Solicit bids, and denote them $v_1,\ldots,v_n$ in non-increasing order.\\
 \> 3. Let $s^*$ be the smallest integer such that $s^* \geq \frac{\Pr[\ell \geq s^*]}{\Pr[\ell = s^*]}$. % \\ \> \;\;\;
 If $s^* > 3$ let $g = s^*$. Otherwise let $g = 1$.\ \footnote{Alternatively, we can pick $g=s^*$ always, but then we must pick a random permutation in step 1 of HazardGuess. We choose to present a deterministic mechanism.}\\
  \> 4. Consider only the highest $g$ bidders ordered according to $\pi$.\\
  \> \;\;\; When an item arrives sell it to the first of the remaining
such bidders and charge him $v_{g+1}$ (or 0\\
  \> \;\;\; if $g=n$).
  % If more than $g$ items arrive, do not assign any additional items.
%  \> 4. Assign each of the first $g$ items that come in to the highest $g$ bidders in the order in which they appear \\
%  \> \;\;\; in $\pi$, and charge them $v_{g+1}$. If more than $g$ items arrive, do not assign any additional items.
\end{tabbing}
\end{minipage}
}}
\newline

\begin{theorem}
\label{HazardGuessApprox} HazardGuess($D$) is truthful, %and online envy-free,
and achieves a $16\frac{7}{8}$-approximation
to social welfare in expectation over $D$, for any distribution $D$
such that the hazard rate $h_i(D)$ is non-decreasing.
\end{theorem}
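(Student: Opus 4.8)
The plan is to prove truthfulness separately from the approximation bound. Truthfulness is immediate: since $g$ is chosen without reference to the bids (it depends only on $D$), and bidders are processed in the fixed order $\pi$, each of the top-$g$ bidders is offered an item at the fixed price $v_{g+1}$ (which, being the $(g+1)$-st highest bid, does not depend on the bids of the top-$g$ bidders except through the ordering, and is below their values whenever they are in the top $g$). A bidder outside the top $g$ cannot profitably enter the top $g$ because that would require overbidding past $v_{g+1} \le v_g$, and then the item would be priced at her own (true) value or higher — this is the standard threshold-payment argument, and it holds pointwise in $\ell$ and $r$, giving universal truthfulness. I would spell this out using Lemmas \ref{priceLemma} and \ref{orderLemma} as a sanity check but the direct argument suffices.

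The substance is the approximation ratio. Write $\OPT = \E_\ell[\OPT_\ell] = \sum_{i=1}^n \OPT_i \Pr[\ell = i]$ and note the mechanism's welfare, when $\ell$ items arrive, is $\sum_{i=1}^{\min(g,\ell)} v_i$, so its expected welfare is $\sum_{\ell} \Pr[\ell=\ell'] \sum_{i=1}^{\min(g,\ell')} v_i$. The key structural fact is the definition of $g$ (assume first $g = s^* > 3$): $s^*$ is the smallest integer with $s^* \Pr[\ell = s^*] \ge \Pr[\ell \ge s^*]$, i.e. $s^* h_{s^*} \ge 1$, while $s h_s < 1$ for all $s < s^*$. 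Using monotonicity of the hazard rate, I would show two things: (a) the chance that supply substantially exceeds $g$ is small — specifically $\Pr[\ell \ge g]$ and the tail sums $\sum_{\ell \ge j} \Pr[\ell=\ell']$ decay geometrically once $\ell$ passes a constant multiple of $g$, because once $j h_j \ge 1$ the survival function $\Pr[\ell \ge j]$ drops by a constant factor every $\Theta(g)$ steps (this is where MHR is used essentially); and (b) $\OPT$ is dominated by the contribution of supplies $\ell = O(g)$, so that capping at $g$ loses only a constant factor. Concretely, $\OPT = \sum_i v_i \Pr[\ell \ge i]$, and I would split this sum at $i = g$: the head $\sum_{i \le g} v_i \Pr[\ell \ge i] \le \sum_{i \le g} v_i$, which is at most the mechanism's welfare conditioned on $\ell \ge g$ divided by $\Pr[\ell \ge g]$... more carefully, one compares $\sum_{i \le g} v_i \Pr[\ell \ge i]$ against $\E[W] \ge \Pr[\ell \ge g]\sum_{i\le g} v_i$ plus the contribution from $\ell < g$. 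Since $g h_g \ge 1$ gives $\Pr[\ell \ge g] \le g \Pr[\ell = g]$, and since $s h_s < 1$ for $s < s^*$ gives a lower bound on how much probability mass sits below $g$, one gets that $\Pr[\ell \ge g]$ is bounded below by a constant, or else $\OPT$ itself is small; this dichotomy is what I expect to drive the constant $16\tfrac78$. For the tail $\sum_{i > g} v_i \Pr[\ell \ge i]$, I would bound $v_i \le v_g$ (since values are non-increasing and $i > g$) when... actually $v_i \le v_1$ trivially, but better: use $v_i \Pr[\ell \ge i] \le v_1 \Pr[\ell \ge i]$ and sum the geometrically decaying tail to get $O(g) \cdot v_1 \cdot \Pr[\ell \ge g] \cdot (\text{const})$, then relate $g \cdot v_1 \cdot \Pr[\ell \ge g]$ back to $\E[W]$, which includes the term $v_1 \Pr[\ell \ge 1] \ge v_1 \Pr[\ell \ge g]$ and, when $\ell \ge g$, all of $v_1 + \cdots + v_g$. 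Finally the degenerate case $s^* \le 3$ (so $g = 1$): here $s h_s < 1$ for $s = 1, 2, 3$ forces $\Pr[\ell = 1], \Pr[\ell=2],\Pr[\ell=3]$ to be small relative to the survival function, but more usefully $3 h_3 < 1$ combined with MHR bounds the whole distribution's spread, and one checks directly that selling just the top bidder at price $v_2$ still captures a constant fraction of $\OPT$ because $\OPT_\ell$ cannot be too large in expectation when supply rarely exceeds a small constant — this case I would handle by a short separate computation.

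The main obstacle I anticipate is item (a): turning "hazard rate is non-decreasing and crosses $1/g$ at $g$" into a clean geometric tail bound on $\Pr[\ell \ge j]$ for $j \gg g$, and doing the bookkeeping so the head/tail split against $\E[W]$ closes with an explicit constant rather than an unspecified $O(1)$. In particular one must be careful that $v_i$ can vary wildly across $i$, so all comparisons must be done "slotwise" (bounding the mechanism's slot-$i$ value against $\OPT$'s slot-$i$ value $v_i \Pr[\ell \ge i]$) rather than by pulling out a single $v_1$ factor; the right move is probably to write both $\OPT$ and $\E[W]$ as $\sum_i v_i \cdot (\text{coefficient}_i)$ and show the coefficient ratio is bounded by a constant for every $i$ simultaneously, which is exactly where the non-decreasing hazard rate (ensuring $\Pr[\ell \ge i]$ is log-concave-ish and its partial behavior around $g$ is controlled) does the work. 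The explicit constant $16\tfrac78 = 135/8$ suggests a split into a few regimes with losses of roughly $2$–$3$ multiplied together, consistent with a head-term loss, a tail-term loss, and the $g>3$ vs. $g=1$ slack.
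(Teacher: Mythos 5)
Your route is genuinely different from the paper's, and the skeleton is sound. The paper argues in two stages: Lemmas \ref{boundSlemma}, \ref{technicalLemma2} and \ref{BestGuessApprox} first show that for any MHR distribution $\max_i \OPT_i\cdot\Pr[\ell\geq i]\geq \OPT/5$, and the main proof then shows HazardGuess attains at least $\frac{8}{27}\max_i\OPT_i\cdot\Pr[\ell\geq i]$, multiplying to $\frac{135}{8}$. You instead propose a direct head/tail split of $\OPT=\sum_i v_i\Pr[\ell\geq i]$ at $i=g$. Executed cleanly this works and even beats the paper's constant: the head is at most $\OPT_g$, which is at most $\E[W]/\Pr[\ell\geq s^*]$; for the tail, use $v_i\leq v_g\leq\OPT_g/g$ for $i>g$ together with the geometric decay $\Pr[\ell\geq i]\leq\Pr[\ell\geq g+1](1-1/g)^{i-g-1}$ (from $h_j\geq h_{s^*}\geq 1/s^*$ for $j\geq s^*$) to get $\sum_{i>g}v_i\Pr[\ell\geq i]\leq\frac{\OPT_g}{g}\cdot g\Pr[\ell\geq g]=\OPT_g\Pr[\ell\geq g]\leq\E[W]$. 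This bypasses the paper's Lemma \ref{technicalLemma2} entirely. Both arguments turn on the same two properties of $s^*$ that you correctly isolate: the hazard rate first crosses $1/s^*$ at $s^*$, and it stays above $1/s^*$ thereafter.

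Three things in your write-up need repair before this closes. First, your formula $\sum_{i=1}^{\min(g,\ell)}v_i$ for the realized welfare is wrong when $\ell<g$: the permutation $\pi$ is arbitrary and fixed, so the first $\ell$ bidders served among the top $g$ need not be the top $\ell$ by value (this is exactly why the paper's footnote mentions randomizing $\pi$ if one takes $g=s^*$ for small $s^*$). You may only credit the mechanism with $\OPT_g\cdot\Pr[\ell\geq g]$, plus the separately handled $g=1$ case; fortunately that is all the head/tail argument needs. Second, your dichotomy ``$\Pr[\ell\geq g]$ is a constant or else $\OPT$ is small'' is a hedge where an unconditional bound is available and necessary: minimality of $s^*$ plus MHR give $h_i<1/(s^*-1)$ for all $i\leq s^*-1$ --- note this \emph{upper}-bounds the mass below $g$, the opposite of what you wrote --- hence $\Pr[\ell\geq s^*]>\bigl(1-\tfrac{1}{s^*-1}\bigr)^{s^*-1}\geq 8/27$ for $s^*\geq 4$. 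This is the crux of the whole theorem and must be made explicit. Third, the pure slotwise coefficient comparison you float at the end cannot work as stated, since the mechanism's coefficient on every slot $i>g$ is zero; the tail has to be amortized against the head as above rather than bounded ratio-by-ratio. The truthfulness argument is fine and matches the paper's.
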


Truthfulness %and envy-freeness are
is immediate: Every bidder with bid higher than $v_{g+1}$ faces a
single take-it-or-leave-it offer at the same price ($v_{g+1}$). The
offer and the order in which they receive the offer is independent
of their own bids.
%When $n$ items arrive, only bidders with bids at
%most the sale price are not offered items.
To prove the
approximation guarantee, we will need a series of lemmas.

The following lemmas, \ref{boundSlemma}, \ref{technicalLemma2} and
\ref{BestGuessApprox} will show that for any distribution with
non-decreasing hazard rate, $\max_i \OPT_i\cdot\Pr[\ell \geq i] \geq
\OPT/5$. To complete the proof, we will then prove that HazardGuess
achieves welfare at least $(8/27)\cdot\max_i \OPT_i\cdot\Pr[\ell
\geq i]$, and thus achieves a $16\frac{7}{8}$ approximation to
$\OPT$.

\begin{lemma}
\label{boundSlemma} Let $\alpha$ be the smallest value such that for
any set of bids, $\OPT/(\max_i \OPT_i\cdot\Pr[\ell \geq i]) \leq
\alpha$. Then for each integer $0 \leq s \leq n-1$ we have the
following bound on $\alpha$ in terms of $D$, which we denote
Bound$(s)$:
$$\alpha \leq \sum_{i=1}^s \frac{\Pr[\ell = i]}{\Pr[\ell \geq i]} + \frac{\sum_{i=s+1}^n\Pr[\ell = i]\cdot i}{(s+1)\cdot\Pr[\ell \geq s+1]}$$
\end{lemma}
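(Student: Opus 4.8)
Write $M = \max_i \OPT_i \cdot \Pr[\ell \geq i]$, so that by definition of $M$ we have $\OPT_i \leq M/\Pr[\ell \geq i]$ for every $i$. Since $\alpha$ is the smallest constant that works for \emph{every} set of bids, it suffices to show that for each fixed $s$ and every set of bids, $\OPT \leq \text{Bound}(s)\cdot M$; dividing by $M$ then gives $\alpha \leq \text{Bound}(s)$. I would start from the identity $\OPT = \sum_{i=1}^n \Pr[\ell=i]\cdot\OPT_i$ and split the sum at the index $s$.

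For the low-index part $\sum_{i=1}^s \Pr[\ell=i]\cdot\OPT_i$, I would simply substitute the bound $\OPT_i \leq M/\Pr[\ell\geq i]$ term by term, which immediately produces the first summand $\sum_{i=1}^s \frac{\Pr[\ell=i]}{\Pr[\ell\geq i]}$ of $\text{Bound}(s)$ (times $M$). For the high-index part $\sum_{i=s+1}^n \Pr[\ell=i]\cdot\OPT_i$, the key observation is that, because the bids $v_1\geq v_2\geq\cdots$ are sorted in non-increasing order, the prefix averages $\OPT_i/i$ are non-increasing in $i$ (adding $v_{i+1}\leq v_j$ for $j\leq i$ can only lower the running average). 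Hence for every $i\geq s+1$ we get $\OPT_i \leq \frac{i}{s+1}\,\OPT_{s+1} \leq \frac{i}{s+1}\cdot\frac{M}{\Pr[\ell\geq s+1]}$, using $\OPT_{s+1}\leq M/\Pr[\ell\geq s+1]$. Summing this against $\Pr[\ell=i]$ over $i=s+1,\dots,n$ yields exactly the second summand $\frac{\sum_{i=s+1}^n\Pr[\ell=i]\cdot i}{(s+1)\Pr[\ell\geq s+1]}$ of $\text{Bound}(s)$ (times $M$).

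Adding the two parts gives $\OPT \leq M\cdot\text{Bound}(s)$, which is the claim; the edge case $s=0$ is fine, the first sum being empty and the bound reducing to $\alpha \leq \E[\ell]$. The only real content here is the prefix-average monotonicity coming from the sorted order of the bids, together with the correct place to apply $\OPT_i \Pr[\ell\geq i]\leq M$ — for $i\leq s$ one applies it at index $i$ itself, while for $i>s$ one applies it only at index $s+1$ and propagates via the averaging inequality. No distributional hypothesis (such as non-decreasing hazard rate) is needed for this lemma; that assumption enters only in the later lemmas that choose a good value of $s$. I expect no serious obstacle; the argument is a short calculation once the averaging step is in place.
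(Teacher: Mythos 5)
Your proof is correct and follows essentially the same route as the paper's: the same split of $\OPT=\sum_i \Pr[\ell=i]\OPT_i$ at index $s$, the same prefix-average inequality $\OPT_i\le \tfrac{i}{s+1}\OPT_{s+1}$ from the sorted bids, and the same placement of the bound $\OPT_i\Pr[\ell\ge i]\le M$ (at $i$ for $i\le s$, at $s+1$ for $i>s$). The only cosmetic difference is that you argue directly with $M=\max_i\OPT_i\Pr[\ell\ge i]$, whereas the paper phrases it contrapositively via a $\beta<\alpha$ and a limiting $\epsilon$ argument; your version is, if anything, slightly cleaner.
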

\begin{proof}
Suppose $\alpha > \beta$. That is, there exists a set of bids such that for all $i$ we
have $\OPT_i\cdot\Pr[\ell \geq i] < \OPT/\beta$, or
equivalently:
\begin{equation}
\label{OPTiBound} \OPT_i < \frac{\OPT}{\beta\cdot\Pr[\ell \geq
i]}
\end{equation}
Recall that by definition, we have $\OPT = \sum_{i=1}^n\OPT_i\cdot\Pr[\ell = i]$.
Observe that for all $1 \leq i \leq n-1$:
$\OPT_{i+1} \leq \frac{i+1}{i}\OPT_i$
since $v_1,\ldots,v_n$ is a non-increasing sequence. By repeated application of this observation, we get the
following $n$ upper-bounds on $\OPT$ indexed by $0 \leq s \leq n-1$:
$$\OPT \leq \sum_{i=1}^s\OPT_i\cdot \Pr[\ell = i] + \OPT_{s+1}\cdot\left(\sum_{i=s+1}^n\frac{i}{s+1}\Pr[\ell = i] \right)$$
Applying inequality \ref{OPTiBound} and multiplying both sides by $\beta/\OPT$ we obtain:
$$\beta < \left(\sum_{i=1}^s\frac{\Pr[\ell = i]}{\Pr[\ell \geq i]} + \frac{\sum_{i=s+1}^n\Pr[\ell = i]\cdot i}{(s+1)\cdot \Pr[\ell \geq s+1]}\right).$$
If $\alpha$ is the optimal approximation factor, there is some input such that for every $\epsilon > 0$,$\max_i\OPT_i\cdot\Pr[\ell \geq i]$ achieves an $\alpha$ approximation but does not achieve a $\beta = \alpha-\epsilon$ approximation, and the above bound on $\beta$ holds. Since $\alpha = \beta + \epsilon$, letting $\epsilon$ tend to zero, we obtain the lemma.
\end{proof}

\begin{rem}
\label{remark:cannot-build-distribution-proof-mechanism}
We must now show that for every distribution $D$, there exists an $s$
such that Bound$(s)$ gives $\alpha \leq 5$. Note that the order
of quantifiers is important! It is not the case that there exists an
$s$ such that for every distribution, Bound$(s)$ gives $\alpha \leq
O(1)$.
%For example, for the uniform distribution, Bound$(m/2)$ gives
%$\alpha \leq O(1)$, but Bound$(1)$ gives $\alpha \leq \Omega(n)$.
%Conversely, for an exponential distribution, Bound$(m/2)$ gives
%$\alpha \leq \Omega(n)$, whereas Bound$(1)$ gives $\alpha \leq
%O(1)$.
\end{rem}

\begin{lemma}
\label{technicalLemma2} For any $s \geq 1$ and $h_i \in [1/s,1]$:
$\sum_{i=s+1}^n\left(i\cdot h_i\cdot \prod_{j=s+1}^{i-1}(1-h_j)\right) \leq 3s+1$.
\end{lemma}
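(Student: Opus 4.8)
The plan is to recognize $q_i := h_i\prod_{j=s+1}^{i-1}(1-h_j)$ as a telescoping difference and to exploit that the $q_i$'s form a (sub)probability distribution. Set $P_i := \prod_{j=s+1}^{i-1}(1-h_j)$ for $i\ge s+1$, so $P_{s+1}=1$ by the empty-product convention and $P_{i+1}=(1-h_i)P_i$. Then $q_i = h_iP_i = P_i - P_{i+1}$, and in particular $\sum_{i=s+1}^{n}q_i = P_{s+1}-P_{n+1}\le 1$. Note also that a naive term-by-term bound using only $h_i\le 1$ gives roughly $\sum_i i(1-1/s)^{i-s-1}=\Theta(s^2)$, which is far too weak; the telescoping structure is what brings the bound down to $O(s)$.

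First I would apply summation by parts to $\sum_{i=s+1}^{n} i\,q_i = \sum_{i=s+1}^{n} i\,(P_i-P_{i+1})$. Reindexing the second sum ($j=i+1$) and collapsing gives $\sum_{i=s+1}^{n} i\,q_i = (s+1)P_{s+1} + \sum_{i=s+2}^{n}P_i - n\,P_{n+1} \le (s+1) + \sum_{i=s+2}^{n}P_i$, since $P_{s+1}=1$ and $P_{n+1}\ge 0$. (Equivalently: write $i=(s+1)+(i-s-1)$, bound the first piece by $(s+1)\sum_i q_i\le s+1$, and telescope the remaining $\sum_{i\ge s+2}(i-s-1)q_i = \sum_{i\ge s+2}P_i$.)

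Next I would bound the tail $\sum_{i=s+2}^{n}P_i$ using the hypothesis $h_j\ge 1/s$: every factor satisfies $1-h_j\le 1-1/s$, so $P_i\le (1-1/s)^{\,i-s-1}$ and hence $\sum_{i=s+2}^{n}P_i \le \sum_{k\ge 1}(1-1/s)^k = s-1$ (for $s=1$ this tail is empty and $s-1=0$, so the inequality still holds). Combining, $\sum_{i=s+1}^{n} i\,q_i \le (s+1)+(s-1) = 2s \le 3s+1$ for every $s\ge 1$, which is the claim (in fact slightly stronger).

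There is no serious obstacle here; the only care needed is index bookkeeping — the empty-product convention $P_{s+1}=1$, the reindexing in the summation by parts, and the observation that dropping the nonpositive term $-n\,P_{n+1}$ (equivalently, passing to $n=\infty$) is harmless since all summands are nonnegative. The one conceptual point is the right viewpoint: $\{q_i\}$ is substochastic, so $\sum_i i\,q_i$ is essentially an expected stopping index, and $h_i\ge 1/s$ forces the survival probabilities $P_i$ to decay geometrically, pinning that expectation within $O(s)$ of its minimum possible value $s+1$.
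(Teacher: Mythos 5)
Your proof is correct, and it takes a genuinely different---and cleaner---route than the paper's. The paper treats the sum as a function $f(h_{s+1},\ldots,h_n)$ and runs a first-order exchange argument: it computes partial derivatives, shows that at a maximizing assignment all coordinates before a critical index must sit at $h_i = 1/s$, and then evaluates the resulting extremal expression, landing on $3s+1$. You instead observe that $q_i = h_i\prod_{j=s+1}^{i-1}(1-h_j)$ telescopes as $P_i - P_{i+1}$ for the survival products $P_i$, apply summation by parts to get $(s+1)P_{s+1} + \sum_{i\ge s+2}P_i - nP_{n+1}$, and bound the tail by the geometric series $\sum_{k\ge 1}(1-1/s)^k = s-1$ forced by $h_j \ge 1/s$. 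Your index bookkeeping checks out (reindexing $\sum_i iP_{i+1}$ does leave exactly $(s+1)P_{s+1}+\sum_{i=s+2}^nP_i-nP_{n+1}$, and dropping $-nP_{n+1}\le 0$ is harmless), and the boundary case $s=1$ is handled transparently since the tail is empty. What your approach buys is the avoidance of the optimization machinery entirely and the sharper constant $2s$, which is tight (take $h_i\equiv 1/s$ and $n\to\infty$) and of course implies $3s+1$; what the paper's approach buys is essentially nothing extra here, so yours is the preferable argument.
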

\begin{proof}
We defer the proof of this technical lemma to Appendix~\ref{app:stoch-proofs}.
\end{proof}

\begin{lemma}
\label{BestGuessApprox} For any set of bids, and for any
distribution $D$ with non-decreasing hazard rate,\\
$\frac{OPT}{\max_i \OPT_i\cdot\Pr[\ell \geq i]} \leq 5$.
\end{lemma}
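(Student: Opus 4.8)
The plan is to apply Lemma~\ref{boundSlemma} with a single value of $s$ chosen as a function of $D$ --- heeding Remark~\ref{remark:cannot-build-distribution-proof-mechanism}, no universal $s$ can work. The first step is to rewrite $\mathrm{Bound}(s)$ in terms of the hazard rates alone. Using $\Pr[\ell\ge i]=\prod_{j=1}^{i-1}(1-h_j)$ and $\Pr[\ell=i]=h_i\prod_{j=1}^{i-1}(1-h_j)$, the first summand of $\mathrm{Bound}(s)$ becomes $\sum_{i=1}^{s}h_i$, and telescoping the denominator $\Pr[\ell\ge s+1]=\prod_{j=1}^{s}(1-h_j)$ against the numerators turns the second summand into $\tfrac{1}{s+1}\sum_{i=s+1}^{n} i\,h_i\prod_{j=s+1}^{i-1}(1-h_j)$; that is,
$$\mathrm{Bound}(s)=\sum_{i=1}^{s}h_i+\frac{1}{s+1}\sum_{i=s+1}^{n} i\,h_i\prod_{j=s+1}^{i-1}(1-h_j).$$
So I need one $s$ that makes the ``head'' sum $\sum_{i\le s}h_i$ a constant while rendering Lemma~\ref{technicalLemma2} applicable to the ``tail''.

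I would take $\tilde s$ to be the smallest integer $s\ge 1$ with $h_{s+1}\ge 1/s$. This set is nonempty with $\tilde s\le n-1$: if $\ell\equiv 1$ the ratio in the lemma is trivially at most $1$, and otherwise the top $m\ge 2$ of the support satisfies $h_m=\Pr[\ell=m]/\Pr[\ell\ge m]=1\ge 1/(m-1)$, so $m-1$ lies in the set. For the tail I check the hypothesis of Lemma~\ref{technicalLemma2} with its parameter set to $\tilde s$: by the choice of $\tilde s$ we have $h_{\tilde s+1}\ge 1/\tilde s$, and since $h_i$ is non-decreasing and never exceeds $1$ (because $\Pr[\ell=i]\le\Pr[\ell\ge i]$), every $h_i$ with $i\ge\tilde s+1$ lies in $[1/\tilde s,1]$. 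Hence the tail is at most $\tfrac{3\tilde s+1}{\tilde s+1}<3$.

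For the head I combine monotonicity with the minimality of $\tilde s$. If $\tilde s=1$ the head is just $h_1\le 1$. If $\tilde s\ge 2$, then minimality says $s=\tilde s-1$ fails the defining inequality, i.e.\ $h_{\tilde s}<1/(\tilde s-1)$, so $\sum_{i=1}^{\tilde s}h_i\le \tilde s\,h_{\tilde s}<\tilde s/(\tilde s-1)\le 2$. Either way the head is below $2$, giving $\mathrm{Bound}(\tilde s)<5$; since $\alpha\le\mathrm{Bound}(\tilde s)$ by Lemma~\ref{boundSlemma}, we conclude $\alpha\le 5$, which is exactly the claim.

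The crux --- and the only real difficulty --- is locating the right $s$: enlarging $s$ shrinks the tail but endangers the head, and the point where the hazard rate first reaches $1/s$ is precisely where monotonicity still pins the head below $2$ while the lower-bound hypothesis $h_i\ge 1/\tilde s$ demanded by Lemma~\ref{technicalLemma2} just kicks in. Everything else is boundary bookkeeping: the $\tilde s=1$ case, the off-by-one relative to the mechanism's $s^*$ (defined through $h_{s^*}\ge 1/s^*$ rather than $h_{s^*+1}\ge 1/s^*$), and terms with $\Pr[\ell\ge i]=0$, all of which are harmless.
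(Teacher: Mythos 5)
Your proof is correct and follows essentially the same route as the paper's: rewrite $\mathrm{Bound}(s)$ in hazard-rate form, pick the threshold index where the hazard rate first reaches $1/s$, bound the head by $2$ via minimality plus monotonicity, and bound the tail by $3$ via Lemma~\ref{technicalLemma2}. The only deviation is the off-by-one in the threshold definition ($h_{s+1}\ge 1/s$ versus the paper's $h_{s^*}\ge 1/s^*$), which you correctly flag and which does not change the argument.
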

\begin{proof}
Given a distribution $D$, we wish to find the value of $s$ such that
Bound$(s)$ gives the sharpest bound on $\alpha$ (the approximation factor from lemma \ref{boundSlemma}).
%To find the correct value of $s$, we switch to the continuous setting. We let $F(x)$ be the CDF of $D$, and let $f(x) = F'(x)$. We then have a continuous version of Bound$(s)$, $\alpha \leq B(s)$ where:
%$$B(s) = \int_0^s\frac{f(\ell)}{1-F(\ell)}d\ell + \frac{\int_s^nf(\ell)\cdot\ell d\ell}{s\cdot(1-F(s))}$$
%We take the derivative with respect to $s$ to minimize $B(s)$:
%$$B'(s) = \int_s^nf(\ell)\cdot\ell d\ell\cdot \left(\frac{f(s)}{s\cdot (1-F(s)^2)} - \frac{1}{s^2(1-F(s))}\right)$$
%Setting $B'(s) = 0$, we find that $B(s)$ is minimized at $s = n$ or $s = (1-F(s))/f(s)$.
%
%We now move back into the discrete domain, and adapt the above bound:
We choose $s^* \leq n$ to be the smallest integer such that $s^*
\geq \Pr[\ell \geq s^*]/\Pr[\ell = s^*]$. If no such $s^*$ exists,
we choose $s^* = n$. We now show that Bound$(s^*)$ gives $\alpha \leq 5$.
We bound the two terms of Bound$(s^*)$ separately. Consider
the first term:
$$\sum_{i=1}^{s^*} \frac{\Pr[\ell = i]}{\Pr[\ell \geq i]} \leq (s^*-1)\cdot \frac{\Pr[\ell = s^*-1]}{\Pr[\ell \geq s^*-1]}+ \frac{\Pr[\ell = s^*]}{\Pr[\ell \geq s^*]} \leq 1 + \frac{\Pr[\ell = s^*]}{\Pr[\ell \geq s^*]} \leq 2$$
%\begin{eqnarray*}
%\sum_{i=1}^{s^*} \frac{\Pr[\ell = i]}{\Pr[\ell \geq i]} &\leq& (s^*-1)\cdot \frac{\Pr[\ell = s^*-1]}{\Pr[\ell \geq s^*-1]}+ \frac{\Pr[\ell = s^*]}{\Pr[\ell \geq s^*]} \\
%&\leq& 1 + \frac{\Pr[\ell = s^*]}{\Pr[\ell \geq s^*]} \;\;\; \leq \;\;\;  2
%%\\ &\leq& 2
%\end{eqnarray*}
since the hazard rate is non-decreasing and by definition of $s$. We
now consider the second term:
$\frac{\sum_{i=s^*+1}^n\Pr[\ell = i]\cdot i}{(s^*+1)\cdot\Pr[\ell \geq s^*+1]}$
Since $D$ has a non-decreasing hazard rate, we know that for all $i
\geq s^*$, $h_i \equiv \Pr[\ell = i]/\Pr[\ell \geq i] \geq 1/s^*$.
Therefore, we have:
\begin{eqnarray*}
\sum_{i=s^*+1}^n\Pr[\ell = i]\cdot i &=& \sum_{i=s^*+1}^n\frac{\Pr[\ell = i]}{\Pr[\ell \geq i]}\cdot\Pr[\ell \geq i]\cdot i \\
&=& \sum_{i=s^*+1}^n \left(i\cdot h_i\cdot \Pr[\ell \geq s^*+1]\cdot\prod_{j=s^*+1}^{i-1}(1-h_j)\right) \\
&\leq&\Pr[\ell \geq s^*+1](3s^*+1)
\end{eqnarray*}
where the inequality follows from Lemma~\ref{technicalLemma2}.
Therefore, finally we have for all $s^*$:
\begin{eqnarray*}
\frac{\sum_{i=s^*+1}^n\Pr[\ell = i]\cdot i}{(s^*+1)\cdot\Pr[\ell \geq s^*+1]} &\leq& \frac{\Pr[\ell \geq s^*+1](3s^*+1)}{(s^*+1)\cdot\Pr[\ell \geq s^*+1]} \leq 3
%\\ &\leq& 3
\end{eqnarray*}
Combining these two bounds, we finally get that Bound$(s^*)$ gives
$\alpha \leq 5$.
\end{proof}

Now we are ready to complete the proof of our theorem:

\begin{proof}[Proof of Theorem \ref{HazardGuessApprox}]
We show that HazardGuess achieves welfare at least $(8/27)\cdot(\max_i\OPT_i\cdot\Pr[\ell \geq i])$. Together with lemma \ref{BestGuessApprox}, this proves that HazardGuess achieves at least a $16\frac{7}{8}$ approximation to social welfare.

Let $s^*$ be the smallest integer such that $s^* \geq \frac{\Pr[\ell
\geq s^*]}{\Pr[\ell = s^*]}.$ Whenever $s^* > 3$, HazardGuess($D$)
achieves welfare at least $\OPT_{s^*}\cdot \Pr[\ell \geq s^* ]$.
When $s^* \leq 3$, HazardGuess$(D)$ achieves welfare at least
$\OPT_{s^*}/3$ (since it sells a single item to the highest bidder,
and $\OPT_1 \geq \OPT_3/3$). First consider the case in which $i >
s^* \geq 1$. In this case, we know $\Pr[\ell \geq i] \leq \Pr[\ell
\geq s^*]\cdot(1-\frac{1}{s^*})^{i-s^*}$, since the hazard rate
$h_i$ is non-decreasing, and $h_{s^*} \geq 1/s^*$. Therefore, we
have:
\begin{eqnarray*}
\OPT_i\cdot \Pr[\ell \geq i] &\leq& \frac{i}{s^*}\cdot \OPT_{s^*}\cdot \Pr[\ell \geq i] \\
&\leq& \frac{i}{s^*}\cdot \OPT_{s^*}\cdot \Pr[\ell \geq s^*]\cdot(1-\frac{1}{s^*})^{i-s^*} \\
&\leq& (\OPT_{s^*}\cdot  \Pr[\ell \geq s^*])\cdot \left(\frac{i}{s^*}\cdot \frac{1}{e^{i/s^* - 1}}\right) \\
&\leq& (\OPT_{s^*}\cdot  \Pr[\ell \geq s^*])
\end{eqnarray*}
Therefore, in this case, HazardGuess$(D)$ achieves welfare at least $\OPT_i\cdot \Pr[\ell \geq i]/3$.
Now consider the case in which $1 \leq i < s^*$: By definition of
$s^*$:
$\frac{\Pr[\ell \geq s^*-1]}{\Pr[\ell = s^*-1]} > s^*-1.$
Alternatively, we may write the hazard rate at $s^*-1$: $h_{s^*-1} <
1/(s^*-1).$ Since the hazard rate is non-decreasing, we have that
for all $i \leq s^*-1$, $h_i < 1/(s^*-1)$. Therefore we have:
\begin{eqnarray*}
\Pr[\ell \geq s^*]
\;\;\; = \;\;\;  \prod_{i=1}^{s^*-1}(1-h_i)
\;\;\; > \;\;\; \prod_{i=1}^{s^*-1}(1-\frac{1}{s^*-1})
\;\;\; = \;\;\; \left(\frac{s^*-2}{s^*-1} \right)^{s^*-1}
\end{eqnarray*}

If $s^* \geq 4$, then this gives $\Pr[\ell \geq s^*] \geq 8/27$.
Therefore:
\begin{eqnarray*}
\OPT_{s^*}\cdot\Pr[\ell \geq s^*]
\;\;\; \geq \;\;\; \OPT_i \cdot \Pr[\ell \geq 4]
\;\;\; \geq \;\;\; \frac{8}{27}\OPT_i
\end{eqnarray*}
which is a bound on the performance of HazardGuess$(D)$, since $s^* > 3$.
Finally we consider the special case of $s^* \in\{2,3\}$. If $s^* =
2$, then $i \in \{1,2\}$
achieves welfare $\OPT_i/2\cdot \Pr[\ell \geq i]$ since HazardGuess sells one item. Similarly, if $s^* = 3$ HazardGuess achieves
welfare at least $\OPT_i/3\cdot\Pr[\ell \geq i]$. This concludes the proof.
\end{proof}

We note that our analysis is worst-case, and that this mechanism can be shown to achieve a better constant approximation for specific distributions of interest. For example:
\begin{theorem}
\label{uniformConstant}
HazardGuess($D$) achieves a $\frac{3}{5}$-approximation to social welfare in expectation over $D$ when $D$ is the uniform distribution over $\{1,\ldots,n\}$. Moreover, there are values for which HazardGuess$(D)$ cannot get better than a $\frac{3}{4}$-approximation when $D$ is the uniform distribution.
\end{theorem}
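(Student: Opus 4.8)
The plan is a direct calculation tailored to the uniform distribution, bypassing the generic (and here very lossy) bound of Lemma~\ref{BestGuessApprox}. For $D$ uniform on $\{1,\dots,n\}$ we have $\Pr[\ell=i]=1/n$ and $\Pr[\ell\geq i]=(n-i+1)/n$, so the hazard rate is $h_i=1/(n-i+1)$ (which is non-decreasing, so the monotone hazard-rate hypothesis holds), and the threshold computed in Step~3 is the least $s^{*}$ with $s^{*}\geq n-s^{*}+1$, i.e.\ $g=s^{*}=\lceil(n+1)/2\rceil$. I will work with the variant described in the footnote to Step~3 that always sets $g=s^{*}$ and fixes a uniformly random permutation in Step~1, so that in the partial-sale regime the items go to a uniformly random subset of the top $g$ bidders; the case $n=1$ is trivial.

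First I would compute the mechanism's expected welfare exactly. Since only the top $g$ bidders are ever considered, the realized welfare is $\OPT_g$ whenever $\ell\geq g$, and when $\ell<g$ it is the value of a uniformly random $\ell$-element subset of the top $g$ bidders, whose expectation is $(\ell/g)\,\OPT_g$. Summing against $\Pr[\ell=\cdot]=1/n$ gives $\E[W]=\frac1n\big(\sum_{\ell=1}^{g-1}\tfrac{\ell}{g}\OPT_g+(n-g+1)\OPT_g\big)=\frac{\OPT_g\,(2n-g+1)}{2n}$, while $\OPT=\frac1n\sum_{\ell=1}^{n}\OPT_\ell=\frac1n\sum_{i=1}^{n}(n-i+1)v_i$. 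Hence on any fixed bid vector the approximation ratio is $\OPT/\E[W]=\dfrac{2\sum_{i=1}^{n}(n-i+1)v_i}{(2n-g+1)\sum_{i=1}^{g}v_i}$.

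Second, I would maximize this ratio over all non-increasing $v\geq 0$. Using the ``staircase'' substitution $v_i=\sum_{j\geq i}x_j$ with $x_j=v_j-v_{j+1}\geq 0$, both sums turn into non-negative combinations of the $x_j$: $\sum_i(n-i+1)v_i=\sum_j x_j W_j$ where $W_j=\sum_{i\leq j}(n-i+1)=\tfrac{j(2n-j+1)}{2}$, and $\sum_{i\leq g}v_i=\sum_j x_j\min(j,g)$. The ratio is therefore $\frac{2}{2n-g+1}$ times a weighted average of the numbers $W_j/\min(j,g)$, so it is maximized by putting all weight on the best index $j$. An elementary check gives $W_j/j=n-\tfrac{j-1}{2}$ (maximal at $j=1$ with value $n$), while for $j>g$ the quantity $W_j/g$ is increasing in $j$ and equals $n(n+1)/(2g)\leq n$ at $j=n$ because $g\geq(n+1)/2$; thus $\max_j W_j/\min(j,g)=n$ and $\OPT/\E[W]\leq\frac{2n}{2n-g+1}$. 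Substituting $g=\lceil(n+1)/2\rceil$ yields $\frac{2n}{2n-g+1}=\frac43$ when $n$ is even and $\frac{4m-2}{3m-1}<\frac43$ when $n=2m-1$ is odd; in every case this is at most $\frac53$, which gives the claimed $\tfrac35$-approximation (in fact a $\tfrac34$-approximation).

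Finally, for the impossibility direction I would use the all-equal instance $v_1=\cdots=v_n=1$: then $\OPT=\tfrac{n+1}{2}$ and $\E[W]=\tfrac{g(2n-g+1)}{2n}$, so $\OPT/\E[W]=\tfrac{n(n+1)}{g(2n-g+1)}$, and with $g=\lceil(n+1)/2\rceil$ this tends to $\tfrac43$ as $n\to\infty$ (and $v=(1,0,\dots,0)$ already gives exactly $\tfrac43$ for even $n$), so HazardGuess$(D)$ cannot beat a $\tfrac34$-approximation on the uniform distribution. The real work is the second step — identifying the worst-case bid vector — and the staircase substitution is the trick that makes it routine; the only other thing to be careful about is the partial-sale regime $\ell<g$ (and the small-$n$ cases with $s^{*}\leq 3$), which is precisely why I use the random-permutation variant rather than the deterministic one.
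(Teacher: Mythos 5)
Your proof is correct, and it takes a genuinely different route from the paper's. The paper bounds $\OPT$ and the mechanism's welfare \emph{separately} against $\OPT_{n/2}$: it shows $\OPT \leq \OPT_{n/2}(\tfrac{5}{4}+\tfrac{1}{2n})$ and $ALG = \OPT_{n/2}(\tfrac{3}{4}+\tfrac{1}{2n})$, and divides; this is lossy because the two bounds are tight on \emph{different} bid vectors (the first needs $v_2=\cdots=v_{n/2}=0$, the second part of the $\OPT$ bound needs all values equal), which is why the paper only gets $3/5$. You instead compute $\E[W]=\OPT_g(2n-g+1)/(2n)$ exactly and then optimize the ratio $\OPT/\E[W]$ directly over all non-increasing bid vectors via the staircase substitution and the mediant inequality, obtaining the sharp worst-case ratio $2n/(2n-g+1)\leq 4/3$, attained at $v=e_1$. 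This buys a stronger and tight result: the mechanism is in fact a $3/4$-approximation on the uniform distribution, so the theorem's upper and lower bounds coincide (your impossibility instance $e_1$ is the same one the paper uses). Two presentational points are worth noting, and you handle both correctly: (i) like the paper's own proof, your argument requires the random-permutation variant of Step 1 (with a worst-case fixed permutation, the instance $e_1$ only gets a $1/2$-fraction, since the high bidder could be served last among the top $g$), and you say so explicitly, whereas the paper only signals this by writing ``in a random order''; (ii) you use the mechanism's actual threshold $g=\lceil(n+1)/2\rceil$ rather than the paper's simplification $k=n/2$, which also quietly disposes of the $s^*\leq 3$ special case for small $n$.
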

The proof is deferred to the appendix.

%Intuition for the choice of $s^*$ in the TruthfulGuess algorithm can be found in the literature of revenue-maximizing auctions:
%
%\begin{rem}
%Readers that are familiar with Myerson's work on optimal auctions
%\cite{M81} might have noticed that the supply level chosen by
%TruthfulGuess is actually exactly the supply level where the
%\emph{virtual valuation} (that corresponds to the distribution $D$)
%equals zero (see \cite{Krishna02} for a survey), i.e.,
%$s-\frac{\Pr[\ell = s^*]}{\Pr[\ell \geq s^*]} \cong 0 $. That is,
%the optimal supply level is determined exactly in the way that the
%optimal reservation price is determined in revenue-maximizing
%auctions. This is no coincidence due to the following intuition:
%assume that all the bidder values were equal (say, 1). Our above analysis
%actually ignores the welfare gained by realizations where the supply
%is smaller than $s^*$. Therefore, the social welfare that we
%actually consider is $s^*$ when $\ell \geq s^*$ and zero otherwise.
%Now, consider the problem of setting the optimal reserve price
%when selling an item to a single bidder. For a reservation price
%$s^*$ we get a \emph{revenue} of $s^*$ if his value is greater than
%$s^*$ and zero revenue otherwise. It follows that the optimal supply level $s^*$ is calculated similarly to the optimal reservation price. Note that Myerson's result also requires a monotone hazard-rate condition. Our proof shows that this selection is a good choice even with arbitrary bidder values.
%
%\end{rem}

\subsection{The Necessity of the Monotone Hazard Rate Condition}
We next show that the monotone hazard rate condition is necessary: for arbitrary distributions \emph{no} deterministic mechanism can achieve constant approximation to social welfare.

\begin{theorem}
\label{thm:non-mhr-lb}
No deterministic truthful mechanism can achieve an $o(\sqrt{\log n/\log\log n})$ approximation to social welfare when faced with arbitrary stochastic supply (without the non-decreasing hazard rate condition).
\end{theorem}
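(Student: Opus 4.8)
The plan is to port the lower-bound argument of Theorem~\ref{generalLowerBound} to the stochastic model, trading its ``one constraint per supply value $\ell$'' for a \emph{single} constraint obtained by averaging against a carefully chosen, highly non-monotone supply distribution $D$. First, using Lemmas~\ref{priceLemma} and~\ref{orderLemma}, I would put an arbitrary deterministic truthful mechanism into the canonical form already used in the proof of Theorem~\ref{generalLowerBound}: a permutation $\pi$ together with per-bidder threshold indices $c_b$ that do not depend on $v_b$, where at most one bidder per bin is active and the $m$-th arriving item is sold at its threshold price to the $m$-th active bidder in $\pi$-order. Next I would take bidder values i.i.d.\ from a $k$-level version of the distribution used in Theorem~\ref{generalLowerBound}: value levels $\{N^{-j}\}_{j=0}^{k-1}$ with $N=n^{1/k}$, calibrated so that $\E[\#\{b:v_b\ge N^{-j}\}]\approx N^{j+1}$ and hence (as in Lemma~\ref{OPTLemma}) $\OPT_{N^{j+1}}=\Theta((j+1)N)$; the case $k=\log n$, $N=2$ is exactly the distribution of Theorem~\ref{generalLowerBound}. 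Here $k=\Theta(\sqrt{\log n/\log\log n})$ is the number of ``scales'' the mechanism will be forced to hedge among. For the supply I would let $D$ be supported on $\{N,N^2,\dots,N^k\}$, putting weight $p_j$ on $\ell=N^{j+1}$, with the $p_j$ chosen (decreasing in $j$) so that every scale contributes roughly equally to $\OPT=\E_\ell[\OPT_\ell]$; because $p_j$ decreases while the atoms grow, the hazard rate of $D$ oscillates and is very far from monotone, so the hypothesis of the theorem is met.

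With $N_b$ denoting the number of sales after the mechanism has made offers to the first $b$ bidders in $\pi$-order, the same computation as in Theorem~\ref{generalLowerBound} shows that the expected welfare on scale $j$ equals $\Theta\!\big(N^{1-k}\sum_b(c_b+1)\Pr[N_{b-1}<N^{j+1}]\big)$, so ``$\M$ achieves an $\alpha$-approximation against $D$'' reduces to the single inequality $\sum_j p_j\sum_b(c_b+1)\Pr[N_{b-1}<N^{j+1}]\ge \Omega\big(\OPT/(\alpha N^{1-k})\big)$. I would then run the engine of Theorem~\ref{generalLowerBound} scale by scale: a Chernoff bound on $\Pr[N_{b-1}<N^{j+1}]$ together with a tail estimate in the spirit of Lemma~\ref{notTooBigLemma} shows that only the prefix of $\pi$ up to the position $b_j^{\ast}$ where $\E[N_{b_j^{\ast}}]\approx N^{j+1}$ contributes appreciably to scale $j$; the aggregate constraint then forces, for \emph{every} scale $j$ simultaneously, that the $c_b$'s be large over a long prefix of $\pi$, and unwinding the resulting recursion on the $b_j^{\ast}$ (mirroring the $\Delta_k^j$ recursion there) together with $\sum_j b_j^{\ast}\le n$ yields $\alpha=\Omega(k)=\Omega(\sqrt{\log n/\log\log n})$. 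Since the instance is randomized over which bidders hold which values, a standard concentration argument---the top-$\ell$ sums of i.i.d.\ draws concentrate, so $\E_\ell[\OPT_\ell]$ is nearly deterministic---upgrades the bound on the ratio of expectations into the existence of a single value profile on which the stated approximation lower bound holds.

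The main obstacle is precisely that the mechanism now sees $D$ and has only one averaged constraint to satisfy rather than $n$ separate ones, so it may move its ``effort'' freely between scales. To defeat this, $k$ and the weights $p_j$ have to be calibrated so that, on the one hand, the accumulated low-order terms (the tails of the Chernoff sums over all $k$ scales) are too small to fulfil the constraint by themselves, and on the other hand the ``useful'' prefixes of $\pi$ for the $k$ scales are forced to be essentially disjoint and of total length $\Omega(kn/\alpha)>n$ unless $\alpha=\Omega(k)$. Getting both to hold requires each scale to have enough adversarial ``room'': as in Theorem~\ref{generalLowerBound}, where the target supply had to exceed $2^{\Theta(\alpha)}$, here the smallest scale $N=n^{1/k}$ must exceed roughly $2^{\Theta(k\log k)}$, and balancing $k\log k\lesssim (\log n)/k$ against $N^k=n$ is exactly what pins $k$ at $\Theta(\sqrt{\log n/\log\log n})$. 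Making this bookkeeping rigorous---in particular controlling the tail sums and the recursion uniformly over all $k$ scales at once---is the technically delicate part, which I expect to defer to an appendix.
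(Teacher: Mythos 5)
There is a genuine gap, and it is exactly the issue you flag as ``the main obstacle'': collapsing the $n$ per-supply constraints of Theorem~\ref{generalLowerBound} into a single constraint averaged over a known $D$ destroys the argument, and your calibration does not rescue it. A deterministic truthful mechanism sees the entire bid profile and knows $D$, so (conditioning on the concentrated value statistics) it can simply target the single scale $g^*=\argmax_g \OPT_g\cdot\Pr[\ell\ge g]$ and sell only to the top $g^*$ bidders. With your parameters, $\OPT_{N^{j+1}}=\Theta((j+1)N)$, and ``equal contribution per scale'' forces $p_j\propto 1/(j+1)$, so $\OPT=\Theta(kN/H_k)$, while the mechanism that fixes $g=N^{j+1}$ earns at least $(j+1)N\cdot\Pr[\ell\ge N^{j+1}]=\Theta\bigl((j+1)N\ln(k/j)/H_k\bigr)$; taking $j\approx k/e$ gives $\Theta(kN/H_k)=\Theta(\OPT)$, i.e.\ a constant-factor approximation. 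More generally, the analysis underlying Lemma~\ref{boundSlemma} and Proposition~\ref{knapsackUpperBound} shows that against any single (essentially deterministic) value profile, some bid-independent choice of $g$ achieves roughly a $\sum_i h_i$ approximation, and the tension between making $\sum_i h_i$ large over the support and keeping the late scales relevant to $\OPT$ prevents any single value distribution from forcing a super-constant bound this way. So your key claim --- that the one averaged constraint ``forces, for every scale $j$ simultaneously, that the $c_b$'s be large over a long prefix'' --- is false: satisfying one well-chosen scale suffices.

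The paper's proof has a two-part structure that your proposal lacks. First it restricts attention to \emph{bid-independent supply} mechanisms (those fixing $g$ and $\pi$ before seeing any bids) and defeats them with the distribution $\Pr[\ell=i]=1/(i+i^2)$ (so $\Pr[\ell\ge i]=1/i$) together with \emph{two} bid profiles --- a single bidder of value $1$ versus all bidders of value $1$ --- which such mechanisms cannot distinguish; no single $g$ serves both, yielding $\Omega(\log n/\log\log n)$ for this restricted class. Second, a separate reduction shows that any deterministic truthful $\alpha$-approximation yields a bid-independent supply mechanism with approximation $\alpha^2$: always sell exactly $g_{\max}$ items, where $g_{\max}$ is the largest supply the original mechanism ever realizes; one shows $\OPT_{g_{\max}}\ge\OPT/\alpha$ and, via a bid-raising argument using Lemmas~\ref{priceLemma} and~\ref{orderLemma}, that $\Pr[\ell\ge g_{\max}]\ge 1/\alpha$. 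The square root in the theorem comes from this quadratic loss, not from balancing a recursion. Your proposal contains neither the two-profile indistinguishability trick nor the reduction, and without one of them the single-distribution approach cannot reach the stated bound.
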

We prove this theorem in Appendix~\ref{app:stoch-proofs}. The proof proceeds in two stages. First, we consider a class of truthful mechanisms that fix -independently of the bids- an ordering $\pi$ on the bidders, and a supply $g$. Such a mechanism  sells the first $g$ items that arrive at the $(g+1)$-st highest price to the $g$ highest bidders , ordered according to $\pi$. We note that HazardGuess is such a mechanism, and all such mechanisms satisfy a notion of envy-freeness which we define in the next section.
We show that such mechanisms cannot achieve an $o(\log n/\log\log n)$ approximation to social welfare when faced with arbitrary stochastic supply. We then complete the proof by showing that we can restrict our attention to such mechanisms almost without loss of generality: for \emph{any} deterministic mechanism, there exists a mechanism that chooses its supply independently of the bids that loses only a quadratic factor in its approximation to social welfare.

\section{Envy-Free Mechanisms}
\label{section:envy-free}

All our mechanisms satisfy a notion of fairness which is our
adaptation of envy-freeness to the online setting. An offline
mechanism is envy-free if no agent prefers another agent's
allocation and payment to his own (see, for example, \cite{GH03,GH05}). In the case of unit demand
bidders and identical goods this means that there is a price $p$
such that any winner pays the same price $p$ and has value at least
$p$, and any loser has value at most $p$. This is clearly not
possible to achieve for online supply, except by trivial mechanisms
(for example, the mechanism that only sells a single item to the highest
bidder at the second highest price). Informally, in an online envy-free
mechanism, the only source of envy is a shortage of supply, not
price discrimination on the part of the mechanism.

\begin{definition}
A deterministic mechanism is {\em online-envy-free} if it is
envy-free (in the offline sense) when the supply is enough to
satisfy the demand of all of the bidders (that is, when $l=n$).
A randomized mechanism is {\em online-envy free} if it is a distribution
over deterministic online-envy-free mechanisms.
\end{definition}

Note that this definition ensures that all sold items are sold
for the same price, even when the supply is smaller than $n$.
%In
%Section \ref{section:envy-free}, we characterize the set of
%online-envy-free mechanisms, and strengthen some of our lower
%bounds.
% We note that the mechanisms that we present are all online envy-free. We discuss this definition more in Section \ref{section:envy-free}; for now, the following definition will suffice.
%\begin{definition}
% A deterministic mechanism is \emph{online envy-free} if every bidder who is allocated an item pays the same price $p$, and if when $n$ items arrive, all bidders with bids greater than $p$ are allocated an item.
% \end{definition}
%% Moshe: added:
%A randomized mechanism should satisfy the condition for \emph{every} realization of its randomness.
%\begin{rem}
%In an envy-free mechanism, no bidder prefers the allocation and price of any other bidder. This is not possible to achieve by non-trivial mechanisms when we have online supply. An online-envy-free mechanism guarantees that the only source of envy is a shortage of supply, and not any price discrimination on the part of the mechanism.
%\end{rem}
Also note that both our mechanisms RandomGuess and HazardGuess are online-envy-free.\\

%as they fit into a common framework.
%We will show that the framework EnvyFreeScheme below fully characterizes truthful online-envy-free mechanisms.\\

%{\centering
%\fbox{
%\begin{minipage}{13cm}
%\begin{tabbing}
%\textbf{EnvyFreeScheme}:\\
%(x) \= \kill
% \> 1. Solicit bids, and denote them $v_1,\ldots,v_n$ in non-increasing order. \\ \> \;\;\; For simplicity we assume bids are distinct.\\
% \> 2. Fix a permutation $\pi$ on the bidders such that for all $i$, $\pi(i)$ is independent of $v_i$.\\
% \> 3. Choose a price $p$ (possibly as function of the bids).\\ \> \;\;\; If $p \in (v_g,v_{g+1}]$ then $p$ must be independent of  $v_1,\ldots,v_g$.\footnote{with the convention that $v_0=\infty$ and $v_{n+1}=0$.} \\
% \> 4. Consider only the highest $g$ bidders ordered according to $\pi$.\\
% \> \;\;\;  When an item arrives sell it to the first of the remaining such bidders and charge him $p$.
% % \> 4. Sell each of the first $g$ items that come in to the highest $g$ bidders in the order in which they appear in $\pi$,\\ \> \;\;\; and charge them $p$. If more than $g$ items arrive, do not sell any additional items.
%\end{tabbing}
%\end{minipage}
%}}
%\newline\\

%We first claim that EnvyFreeScheme characterizes truthful and online-envy-free mechanisms.

%\begin{proposition}
%\label{prop:truthful-envy}
%% For any procedure to choose $g$ in a way that is independent of the bids,
%An online supply mechanism is truthful and online-envy-free if and only if it is a parameterizations of EnvyFreeScheme.
%\end{proposition}

In Theorem~\ref{generalLowerBound} we showed that no truthful randomized mechanism can achieve an $o(\log\log n)$ approximation to social welfare when faced with adversarial supply. Here, we present an improved lower bound for truthful online-envy free mechanisms.

%1111 Next, we show that in the stochastic setting, we may (almost) without loss of generality restrict our attention to a subset of online-envy free mechanisms. Finally, we use this characterization to show that without the monotone hazard rate condition, no deterministic mechanism can achieve better than an $\Omega(\sqrt{\log n/\log\log n})$ approximation to social welfare in the stochastic setting.
%For space constraints, we defer some proofs in this section to appendix Section \ref{app:envy-free}.

\begin{proposition}
\label{envyfreeThm}
No truthful online-envy-free mechanism (even randomized) can achieve an $o(\log n/\log\log n)$ approximation to social welfare when faced with adversarial supply.
\end{proposition}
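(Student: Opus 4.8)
The plan is to follow the template of Theorem~\ref{generalLowerBound}: by Yao's minimax principle it suffices to exhibit a distribution $\mathcal D$ over pairs (value profile $v$, realized supply $\ell$) on which \emph{no} deterministic online-envy-free truthful mechanism earns, in expectation, more than an $O(\log\log n/\log n)$ fraction of $\E_{\mathcal D}[\OPT_\ell]$. The first step is to sharpen the characterization of Lemmas~\ref{priceLemma}--\ref{orderLemma} using online-envy-freeness. Envy-freeness at $\ell=n$ forces all sold items to go at a single common price $p$, and combined with the bin/threshold description this forces every bidder who could ever win to have threshold exactly $p$; hence a deterministic online-envy-free truthful mechanism is, on any instance, described by a bid-independent permutation $\pi$ together with a ``target size'' $g$ (determined by $p$, and on a symmetric enough instance effectively independent of the bids): it serves each arriving item, in $\pi$-order, to the first not-yet-served bidder whose bid exceeds $p$, and stops after $g$ sales. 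This is exactly the class containing RandomGuess and HazardGuess, and a randomized online-envy-free mechanism is a distribution over such $(\pi,g)$; so it is enough to lower-bound this restricted class directly. The conceptual reason it is weaker than the general truthful class of Theorem~\ref{generalLowerBound} is that a general mechanism may sell different items at different (decreasing) threshold prices and so hedge across value scales within a single run, whereas an envy-free mechanism is pinned to one scale.

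For the hard distribution I would take values on geometrically separated scales --- concretely $v_i=i^{-1/2}$, so $\OPT_k=\Theta(\sqrt k)$ and no single scale dominates --- and assign these values to bidder slots by a uniformly random permutation, so that $\pi$ is uncorrelated with the values. The supply $\ell$ is drawn from a geometric grid $\{2,4,\dots,n\}$ with $\Pr[\ell=2^j]\propto 1/\OPT_{2^j}$, so that each of the $\Theta(\log n)$ scales contributes the same $\Theta(1)$ to $\E_{\mathcal D}[\OPT_\ell]=\Theta(\log n)$. The key lemma is that \emph{every} target size $g$ handles well only a narrow band of supplies: if $\ell\ge g$ the mechanism sells exactly the $g$ highest bidders, earning $\OPT_g$ and losing the factor $\OPT_\ell/\OPT_g=\Theta(\sqrt{\ell/g})$; while if $\ell<g$, then because $\pi$ is bid-independent and identities are randomly permuted, the $\ell$ served bidders form a uniformly random size-$\ell$ subset of the top-$g$ bidders, so the expected welfare is $\tfrac{\ell}{g}\OPT_g=\Theta(\ell/\sqrt g)$, a factor $\Theta(\sqrt{g/\ell})$ below $\OPT_\ell$. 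Summing the two balanced geometric tails gives $\E_\ell[W(\pi,g)]=\Theta(1)$ for \emph{every} $(\pi,g)$.

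Consequently any distribution over $(\pi,g)$ has $\E_{\mathcal D}[W]=\sum_g q_g\,\E_\ell[W(\pi,g)]=\Theta(1)$, so $\E_{\mathcal D}[\OPT_\ell]/\E_{\mathcal D}[W]=\Omega(\log n)$, and the Yao reduction yields the claimed $\Omega(\log n/\log\log n)$ bound. To avoid relying on delicate control of bid-dependent common prices on the randomized instance, a more robust variant keeps only $\Theta(\log n/\log\log n)$ scales, spaced a $\mathrm{polylog}(n)$ factor apart, so the per-target bands are provably disjoint and any distribution over targets assigns $o(1)$ probability to the band of a fresh $\ell$; this gives $\Omega(\log n/\log\log n)$ unconditionally. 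I expect the two main obstacles to be exactly where the random instance enters: (i) arguing that, after $\pi$ is fixed, a bid-dependent choice of the common price cannot be correlated with the independently drawn, hidden supply $\ell$ in a way that helps, so that the mechanism is effectively a fixed \emph{distribution} over target sizes; and (ii) the partial-fill ``uniformly random subset of the top-$g$'' estimate, where the served bidders need not be independent and the expectation is over the random identity assignment. The rest is geometric-series bookkeeping, plus the observation (as in Section~\ref{section:envy-free}) that RandomGuess (Proposition~\ref{RandomGuessFactor}) and HazardGuess both lie in the class being lower-bounded, so the $\log n$ upper bound shows this is almost tight.
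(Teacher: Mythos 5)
Your route is genuinely different from the paper's, so let me first contrast them. The paper's proof recycles the machinery of Theorem~\ref{generalLowerBound}: it keeps the i.i.d.\ value distribution $V$, observes that online-envy-freeness collapses all thresholds to a single common exponent $c$, and then plays off just \emph{two} supply levels via inequality~(\ref{lowerBoundConstraint}) --- the constraint at $\ell=n$ forces $c=\omega(\log\log n)$, while the constraint at $\ell=1$ forbids it. You instead put a joint distribution on (values, supply), with a deterministic power-law profile $v_i=i^{-1/2}$ randomly assigned to identities and a supply law tuned so that every scale contributes $\Theta(1)$ to $\E[\OPT_\ell]$, and argue that every fixed target $g$ collects only $\Theta(1)$. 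The geometric-series arithmetic is fine, and the Yao step (a ratio-of-expectations bound against deterministic mechanisms does yield a bad pair $(v,\ell)$ for any randomized one) is sound. Note, however, that if your argument worked as stated it would give $\Omega(\log n)$, strictly stronger than the claimed bound and closing the paper's acknowledged gap to RandomGuess; that should have been a warning sign.

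The genuine gap sits exactly at your obstacles (i) and (ii), and it is not a technicality --- it is the content of the proof. Because your value multiset is deterministic and only identities are shuffled, on the support of your distribution $v_{-b}$ \emph{determines} $v_b$. Hence the only leverage truthfulness gives (Lemmas~\ref{priceLemma} and~\ref{orderLemma}: the threshold $t_b$ and the bin $i_b$ may depend arbitrarily on $v_{-b}$ but not on $v_b$) is vacuous pointwise on your instances: a mechanism could in principle compute each bidder's value from the others' bids and serve the top-$g$ winners in decreasing order of value, in which case a partial fill of $\ell<g$ items yields $\OPT_\ell$ rather than $(\ell/g)\OPT_g$ and your key lemma collapses. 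What rules such mechanisms out is only the \emph{global} truthfulness requirement (e.g., the "at most one bidder per bin above threshold" condition must hold for all bid profiles, and deviations onto another bidder's value create bin collisions for the rank-inferring mechanism), and your proposal never invokes it. The paper sidesteps this entirely by drawing values i.i.d., so that $v_{-b}$ carries no information about $v_b$ and per-bidder bid-independence has bite. To repair your route you must either randomize the values --- at which point you are essentially re-deriving the constraint system~(\ref{lowerBoundConstraint}) --- or prove, as a standalone lemma, that no globally truthful online-envy-free mechanism can correlate $\pi$ or the common price with the hidden value assignment on your support. The coarsened $\Theta(\log n/\log\log n)$-scale variant you offer does not address this either.
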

We defer the proof to appendix Section \ref{app:envy-free}. Note that proposition \ref{envyfreeThm} is nearly tight, since RandomGuess achieves a $\log n$ approximation factor.

\section{Valuations with complementarities: Knapsack Valuations}
\label{sec:knapsack}
% Since our mechanisms can handle bidders with arbitrary non-increasing marginal utility functions,
So far we have discussed bidders that are interested in a single item out of a set of identical items.
It is natural to consider the case of bidders with increasing-marginal utility valuations,
corresponding to \emph{complements} valuations. In the extreme case, we get \emph{knapsack valuations}.
% Such valuations model bidders that would like to run an entire campaign and such campaign only achieves its goals if it is successful in getting some minimal number of impressions (items).

% While non-increasing marginal utilities correspond to the economic notion of \emph{substitutes}, increasing marginal utilities correspond to \emph{complements}. In the extreme case, we get \emph{knapsack valuations}.

We say that a bidder $i$ has a \emph{knapsack valuation} if he has a
value $c_i$ and a desired quantity $k_i$: For all $k < k_i$, $v_i(k)
= 0$, and for all $k \geq k_i$, $v_i(k) = c_i$. That is, bidder $i$
desires at least $k_i$ units of the good, is not satisfied
with fewer, and has no value for more than $k_i$ units.

 Knapsack valuations can be seen as modeling advertising campaigns: a buyer wishes to build brand
name recognition through banner-advertisements, and so has little
value for a small number of advertisements; A campaign is worth
$c_i$ to the advertiser, but additional advertising saturation has
little added benefit.

Unfortunately, the online nature of the problem makes knapsack
valuations difficult to handle for any algorithm, even without
truthfulness (and computational) constraints. Here, we present an algorithm in the
stochastic setting, and show that its (poor) competitive ratio is
optimal over the class of all (not necessarily truthful) algorithms. Without loss of generality, we can
assume that $D$ has finite support over $[1,m]$ for $m=\sum_{i=1}^n k_i$.

Our lower bound for Knapsack valuations shows that with online
supply, no algorithm can guarantee a better approximation ratio than
the cumulative hazard rate. This welfare guarantee is quite poor.
For the uniform distribution, this gives $\alpha$=$\Theta(\log m)$.
For the binomial distribution, $\alpha = \Theta(m)$. We also present a matching upper bound showing that our lower bound is tight. Both proofs are in Appendix \ref{app:knapsack-proofs}.

\begin{proposition}
\label{knapsackLowerBound}
No algorithm can have %guarantee
better than a $\sum_{i=1}^m h_i$ approximation to optimal social welfare.
\end{proposition}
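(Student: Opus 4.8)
The plan is to construct a single adversarial instance of bidder valuations and a family of distributions $D$ (one per supply level), and then use an averaging / charging argument to show that any algorithm must lose a factor of $\sum_{i=1}^m h_i$ on at least one of them. Concretely, I would take a single bidder whose knapsack valuation is ``$k_i$ units for value $c_i$'', and chain together $n$ such bidders with geometrically increasing desired quantities $k_i$ and carefully chosen values $c_i$, so that $\OPT_\ell$ jumps at the thresholds $\ell = k_1, k_1+k_2, \ldots$. The key point is that because supply arrives online and the algorithm must commit units to bidders as items appear, at the moment the algorithm has seen $j$ items it has irrevocably decided how to partition those $j$ items among the bidders; if it ``hedges'' by spreading units across several bidders it fails to complete any campaign, and if it commits to one campaign it is exposed on the supply levels where a different bidder is optimal.

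The main steps, in order, would be: (1) Set up the instance: choose desired quantities $k_1,\ldots,k_n$ and prefix sums $m_t = \sum_{i\le t} k_i$, and choose values $c_i$ so that the optimal welfare when exactly $m_t$ items arrive is achieved by completing the single campaign $i$ with the largest ``density'' fitting in $m_t$ units; tune the $c_i$ so that $\OPT_{m_t}/\OPT_{m_{t-1}}$ is exactly the reciprocal of the relevant probability mass. (2) Define the supply distribution $D$ supported on the thresholds $\{m_1,\ldots,m_m\}$ (or more simply on $\{1,\ldots,m\}$) with the probabilities arranged so that $\Pr[\ell \ge m_t]$ telescopes against the $\OPT$ values; the target ratio $\sum_{i=1}^m h_i$ should emerge as $\sum_t \Pr[\ell = m_t]/\Pr[\ell \ge m_t]$ after the telescoping. (3) Fix an arbitrary (possibly randomized, non-truthful) algorithm and argue via a potential / exposure argument that for every way the algorithm allocates the incoming stream, there is a threshold $t$ at which it captures at most a $1/\big(\sum_i h_i\big)$ fraction of $\OPT_{m_t}$ in expectation — using Yao's principle to pass from randomized to deterministic if needed, or arguing directly that the algorithm's total expected welfare $\sum_t \Pr[\ell = m_t]\cdot W_{m_t} \le \OPT/\big(\sum h_i\big)$ because any unit of welfare the algorithm collects at level $m_t$ is ``charged'' against the probability mass above $m_t$.

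The hard part will be step (3): showing that no clever online allocation strategy can beat the bound. The subtlety is that an algorithm may allocate the first $m_{t-1}$ items to complete small campaigns and then devote later items to a larger campaign, so it is not true that the algorithm commits to a single bidder once and for all. The argument has to track the fact that every physical item can contribute to \emph{at most one} completed campaign, so the algorithm's welfare at supply level $m_t$ is bounded by the best single campaign that ``fits'' given how it spent the earlier items, and then show that summing $\Pr[\ell=m_t]\cdot W_{m_t}$ against the telescoping $\OPT$ values forces the loss. I would make this precise by defining, for the algorithm's (random) allocation, $Y_t$ = the welfare it has locked in by the time item $m_t$ arrives, observing $Y_1 \le Y_2 \le \cdots$ and $Y_t \le \OPT_{m_t}$, and then noting $\sum_t \Pr[\ell=m_t] Y_t \le \max_t \big(Y_t / \OPT_{m_t}\big)\cdot \sum_t \Pr[\ell=m_t]\OPT_{m_t}$ is \emph{not} quite enough — instead the right estimate uses that increasing $Y_t$ on a low threshold ``uses up'' items and caps $Y_{t'}$ for larger $t'$, which is exactly where the knapsack (capacity) structure of the valuations does the work, mirroring how the $\sum h_i$ cumulative-hazard quantity appears. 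Once the instance is calibrated so this bound is tight, the matching algorithm of the companion proposition (Appendix~\ref{app:knapsack-proofs}) certifies there is no slack to lose, completing the argument.
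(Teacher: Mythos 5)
Your construction in steps (1)--(2) does not force the loss you need, and the difficulty you flag in step (3) is not a technical subtlety to be patched --- it is fatal for the chained instance. With prefix sums $m_t=\sum_{i\le t}k_i$, campaigns $1,\dots,t$ are pairwise compatible and together use exactly $m_t$ units, so $\OPT_{m_t}\ge\sum_{i\le t}c_i$ is attained by stacking them; and the online algorithm that simply fills campaigns in increasing order of $i$ has completed campaigns $1,\dots,t$ by the time the $m_t$-th item arrives, hence matches (or is within a constant of) $\OPT_{m_t}$ at every threshold in the support of your $D$. No calibration of the $c_i$ escapes this, because the capacity tension you rely on (``spending items on small campaigns caps $Y_{t'}$ for larger $t'$'') is exactly absent when the demands nest as prefix sums: the small campaigns and the large one never compete for the same units. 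So the exposure argument you defer to step (3) cannot be completed for this instance.

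The paper's construction sidesteps all of this by making the campaigns mutually exclusive rather than chained: it rescales the given $D$ to have support on $[m+1,2m]$ and gives bidder $i$ demand $k_i=m+i$ and value $c_i=1/\Pr[\ell\ge i]$. Since any two demands sum to more than $2m$, at most one campaign can ever be completed by any algorithm, online or offline; hence every (possibly randomized) algorithm is effectively a distribution over choices of a single bidder $i$ to dedicate all items to, earning $c_i$ exactly when at least $k_i$ items arrive, i.e.\ expected welfare $c_i\cdot\Pr[\ell\ge i]=1$. Meanwhile $\OPT=\sum_i c_i\Pr[\ell=i]=\sum_i h_i$, and the ratio falls out with no potential function, no Yao step, and no telescoping. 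If you want to salvage your approach, the essential repair is to make the demands conflict (each exceeding half the maximum possible supply), at which point your idea of charging the welfare collected at level $i$ against the probability mass above $i$ collapses to the one-line computation above.
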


%\newline\newline
%\textbf{KnapsackGuess}($D$):
%\begin{enumerate}
%  \item Solicit bids. For each bidder $i$, create a knapsack instance with one item corresponding to each bidder $i$, with size $k_i$ and value $c_i$. For each $s \in [1,m]$ let $\OPT_s$ be the value of the optimal solution to this knapsack instance when the knapsack has size $s$.
%  \item Let $s^* = \arg\max_s \Pr[\ell \geq s]\cdot \OPT_s$.
%  \item Assign items as they arrive to bidders corresponding to the optimal solution for a knapsack of size $s^*$ in an arbitrary order, until each bidder $i$ in the solution has received his demand, $k_i$ items.
%\end{enumerate}
%

\begin{proposition}
\label{knapsackUpperBound}
For any distribution $D$ with (arbitrary) hazard rate $h_i$ there exists an algorithm that achieves at least a $\sum_{i=1}^mh_i$ approximation to optimal social welfare.
\end{proposition}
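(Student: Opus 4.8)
The plan is to use a ``best-guess'' strategy in the spirit of HazardGuess. For $i \in \{1,\ldots,m\}$ let $\OPT_i$ denote the optimal \emph{offline} welfare obtainable from exactly $i$ items --- that is, the value of the best packing of bidder bundles into $i$ slots; note that $\OPT_i$ is non-decreasing in $i$ and, since the total demand is $m$, constant for $i \ge m$. The algorithm I would analyze, call it \textbf{KnapsackGuess}, first solicits all bids, then sets
\[
  g \;=\; \argmax_{1 \le i \le m}\ \Pr[\ell \ge i]\cdot \OPT_i ,
\]
fixes an optimal packing $S^*$ of bidders with $\sum_{j \in S^*} k_j \le g$ and value $\sum_{j\in S^*}c_j = \OPT_g$, and then, as items arrive, feeds each item to some not-yet-satisfied bidder of $S^*$ (in arbitrary order), ignoring any remaining items once every bidder in $S^*$ has received its demand. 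No randomization is needed, so this is a deterministic algorithm (which, of course, need not be truthful).

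The analysis is two short steps. \emph{(i) Lower bound on the algorithm.} Since $\sum_{j\in S^*}k_j \le g$, whenever $\ell \ge g$ every bidder of $S^*$ is satisfied and KnapsackGuess achieves welfare at least $\OPT_g$; when $\ell < g$ it achieves welfare at least $0$. Hence $\E_\ell[W] \ge \Pr[\ell \ge g]\cdot \OPT_g = \max_{1\le i\le m}\Pr[\ell \ge i]\cdot\OPT_i$, by the choice of $g$. \emph{(ii) Upper bound on $\OPT$.} From the definition of the hazard rate, $\Pr[\ell = i] = h_i\cdot\Pr[\ell \ge i]$, so
\[
  \OPT \;=\; \sum_{i=1}^m \Pr[\ell = i]\,\OPT_i \;=\; \sum_{i=1}^m h_i\bigl(\Pr[\ell\ge i]\,\OPT_i\bigr) \;\le\; \Bigl(\sum_{i=1}^m h_i\Bigr)\max_{1\le i\le m}\Pr[\ell\ge i]\,\OPT_i .
\]
Combining (i) and (ii), $\OPT/\E_\ell[W] \le \sum_{i=1}^m h_i$, which is exactly the claimed guarantee.

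There is no genuinely hard step here: the entire argument hinges on the single observation that $\max_i \Pr[\ell\ge i]\,\OPT_i$ is simultaneously a lower bound on what a commit-to-one-target strategy earns and, via the identity $\Pr[\ell=i]=h_i\Pr[\ell\ge i]$, appears term by term in $\OPT$ weighted precisely by the hazard rates. The only point worth flagging is computational: computing $\OPT_i$ and the packing $S^*$ is an instance of Knapsack, so KnapsackGuess as stated is not polynomial time; this is harmless here because the proposition (like the matching lower bound of Proposition~\ref{knapsackLowerBound}) imposes no computational constraint, and in any case replacing $\OPT_i$ by a polynomial-time constant-factor approximation would cost only a constant factor in the ratio.
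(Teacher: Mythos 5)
Your proposal is correct and is essentially the paper's own argument: the paper's KnapsackGuess picks $s^* = \argmax_s \Pr[\ell \ge s]\cdot\OPT_s$, commits to the optimal knapsack packing of size $s^*$, lower-bounds the algorithm by $\max_s \Pr[\ell\ge s]\OPT_s$, and upper-bounds $\OPT=\sum_i \OPT_i\Pr[\ell=i]$ by $\bigl(\sum_i h_i\bigr)\max_s\Pr[\ell\ge s]\OPT_s$, exactly as you do (the paper phrases step (ii) contrapositively via an assumed $\alpha$, but the inequality is identical). Your closing remark about replacing the exact knapsack solve with a constant-factor approximation also mirrors the paper's remark on the greedy-by-density algorithm.
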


% \subsection{Multiple Item Types}
%We now consider a model with multiple item types, $T_1,\ldots,T_m$. The supply of each item type $T_i$ is drawn from an independent distribution $D_i$. After the supply is determined for each type of item, the items are presented to the mechanism in an arbitrary order. Bidders continue to be unit demand; Each bidder has a valuation $v_i$ and an interest set, $S_i \subset \{T_1,\ldots,T_m\}$. Bidder $i$ gets value $v_i$ if he receives any item in the set $S_i$.

\bibliographystyle{plain}
\bibliography{onlinesupply}

%\section{Appendix}
\appendix
\section{Proofs}

\subsection{Proof: Lower bound for deterministic mechanism with adversarial supply}
\label{app:det-adversarial}
In this section we prove Theorem~\ref{thm:det-advers-lb}.
\begin{theorem}
No deterministic truthful mechanism can achieve better than an $n$ approximation to social welfare.
\end{theorem}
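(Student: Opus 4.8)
The plan is to follow the sketch and argue in three moves: (i) reduce to mechanisms with a \emph{finite} approximation ratio, since a mechanism with no finite ratio is already worse than $n$; (ii) use the structural Lemmas~\ref{priceLemma} and \ref{orderLemma} to show that such a mechanism can only ever sell a single item; and (iii) conclude on the all-equal instance, where a one-item mechanism obtains exactly an $n$-approximation.

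For (ii), I would use the characterization that Lemmas~\ref{priceLemma} and \ref{orderLemma} yield: a deterministic truthful mechanism assigns each bidder $b$ a ``bin'' $i_b$ (a position in the arrival order) and a threshold $t_b$, both functions of $v_{-b}$ only, such that $b$ can only ever win the $i_b$-th arriving item, and wins it, at price $t_b$, precisely when his bid exceeds $t_b$. I would then show that if $\M$ achieves a finite approximation $\alpha$, then necessarily $i_b=1$ for every bidder $b$ and every $v_{-b}$. To see this, fix $b$ and $v_{-b}$ and consider the realization $\ell=1$. Since the mechanism is unaware of the total supply, its decision on the first arriving item is a function of the bids alone, and in the $\ell=1$ realization this single item \emph{is} the first item. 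Now set $v_b=B$ with $B$ larger than $\alpha\cdot\max_{i\ne b}v_i$: then the recipient of this item must have value at least $B/\alpha>\max_{i\ne b}v_i$, so it must be $b$. Hence $b$ can win the first item, which by the characterization forces $i_b=1$.

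Given (ii), every bidder is in bin $1$, so no bidder is ever eligible for the second, third, \ldots{} arriving items, which therefore go unsold, while the first arriving item goes to at most one bidder; thus $\M$ sells at most one item on every input and in every realization. Finally I would take $v_1=\dots=v_n=1$ and the realization $\ell=n$: here $\OPT_n=n$ while $\M$ achieves welfare at most $1$, so $\OPT_n/\E[W]\ge n$ and $\M$ does no better than an $n$-approximation. This is tight, since the mechanism that sells only the first item, to the highest bidder at the second-highest price, is an $n$-approximation: when $\ell$ items arrive $\OPT_\ell=\sum_{i\le\ell}v_i\le n v_1$ and it collects $v_1$.

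The main obstacle is step (ii), specifically making rigorous the passage from ``$b$ can win the first item'' to ``$\M$ sells only one item.'' The two facts doing the real work are that the allocation of the first item cannot depend on the hidden total supply -- which is exactly what lets us isolate it in the $\ell=1$ realization -- together with the bin/threshold characterization of truthful mechanisms. The one technicality requiring care is the behavior of the allocation at bids exactly equal to a threshold, where Lemma~\ref{orderLemma} gives no information; this is handled either by arguing that a finite-approximation mechanism cannot profit from such ties, or, more simply, by replacing the exactly-equal instance with a slightly perturbed generic one in which no bidder lands exactly on his threshold and letting the perturbation tend to $0$, which does not affect the $n$ bound.
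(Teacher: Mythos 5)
Your proposal is correct and follows essentially the same route as the paper's own proof: the same two structural lemmas, the same ``force the first item to an arbitrarily high bidder'' step to show every bidder can win only the first item, and the same all-(near-)equal-values instance to conclude. The only cosmetic difference is in handling bids that tie their thresholds --- you perturb the instance, while the paper selects tie-free values by a counting/union-bound argument --- and both work.
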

The theorem will follow from three simple lemmas.
\begin{lemma}
For every truthful mechanism and for any realization of items, the price $p_b$ that bidder $b$ is charged upon winning (any) item is independent of his bid.
\end{lemma}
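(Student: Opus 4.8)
The plan is to run the standard single-parameter ``price independence'' argument, adapted to the online unit-demand setting. First I would fix an arbitrary deterministic truthful mechanism $\M$, fix a realization of the arriving items (since the goods are identical, this amounts to fixing the supply $\ell$), and fix the reported bids $v'_{-b}$ of all bidders other than $b$; the only remaining free parameter is then $b$'s own report. The key reduction is the observation that, because $b$ is unit-demand and all goods are identical, $b$'s ex-post utility depends on his report only through the pair (does $b$ win an item?, total price $p_b$ charged to $b$): if $b$ wins he gets utility $v_b - p_b$, independently of \emph{which} item he wins or \emph{when} it arrives, and if $b$ does not win he gets utility $0$ by the model's convention. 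So the online and multi-item features of the model are irrelevant once the realization is fixed, and we are in an ordinary single-parameter situation.

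Next I would take any two reports $v_b$ and $\hat v_b$ for bidder $b$ that both cause $\M$ (with the fixed realization and fixed $v'_{-b}$) to allocate an item to $b$, and let $p_b$ and $\hat p_b$ be the corresponding prices. Applying the ex-post truthfulness inequality with $b$'s true value equal to $v_b$, comparing the report $v_b$ against the deviation $\hat v_b$ (both of which win), yields $v_b - p_b \geq v_b - \hat p_b$, i.e. $p_b \leq \hat p_b$. Applying the same inequality with $b$'s true value equal to $\hat v_b$, comparing $\hat v_b$ against the deviation $v_b$, yields $\hat v_b - \hat p_b \geq \hat v_b - p_b$, i.e. $\hat p_b \leq p_b$. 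Hence $p_b = \hat p_b$. Since this holds for every pair of winning reports of $b$, the price charged to $b$ upon winning is a single value that does not depend on $b$'s bid, which is the claim.

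I do not expect any real obstacle here: the argument is a two-line exchange/truthfulness computation. The only point that merits being stated carefully is the reduction in the first paragraph — that a unit-demand bidder's ex-post utility, for a fixed realization and fixed $v'_{-b}$, is pinned down entirely by whether he wins and by his total payment — so that the online arrival process and the presence of multiple identical items play no role. (The companion Lemma~\ref{orderLemma}, on which item $b$ wins, follows from an analogous exchange argument but is not needed for this statement.)
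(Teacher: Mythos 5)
Your proof is correct and is essentially the paper's own argument: the paper also fixes the realization and the others' bids and observes that if two winning reports of $b$ yielded different prices, the bidder whose true value equals the report with the higher price would profitably deviate to the other report. You merely write the exchange argument symmetrically (applying truthfulness in both directions to get equality, where one direction already gives the contradiction), and you make explicit the unit-demand/identical-goods reduction that the paper leaves implicit.
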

\begin{proof}
This is a standard fact characterizing truthful auctions; If there is some realization of items for which bidder $b$ has two distinct bids which result in bidder $b$ winning an item, but at a different price, then in the case in which his valuation is equal to the bid that yields an item at the higher price, he will report falsely that his valuation is equal to the bid that yields an item at the lower price.
\end{proof}
\begin{lemma}
For every truthful mechanism and for any realization of items, if bidder $b$ wins an item, which item bidder $b$ wins is independent of his bid whenever $p_b < v_b$.
\end{lemma}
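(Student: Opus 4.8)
The plan is to fix the deterministic truthful mechanism and the bids $v_{-b}$ of all bidders other than $b$, and argue by contradiction. By Lemma~\ref{priceLemma}, whenever bidder $b$ wins he is charged a price $p_b$ that does not depend on his own bid; combined with the fact that the mechanism is online — its allocation and payment decisions on the first $t$ arriving items cannot depend on whether more items will arrive afterwards — this $p_b$ is in fact well-defined independently of the realized supply as well, and likewise the identity of the winner of item $j$ under a given bid profile is the same for every supply $\ell \ge j$. Now suppose the claim fails: there are two bids $v'_b$ and $v''_b$, both strictly larger than $p_b$, such that with bid $v'_b$ bidder $b$ wins item $j$ while with bid $v''_b$ he wins a different item $k$. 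Since items arrive in sequence we may assume without loss of generality that item $j$ arrives before item $k$, i.e.\ $j < k$.

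The key observation is that reducing the supply breaks the symmetry between the two bids (at a fixed supply they are useless, since the items are identical and both give $b$ utility $v_b - p_b$). Consider the realization in which exactly $\ell = j$ items arrive. Since bidder $b$ receives at most one item, under bid $v''_b$ he wins \emph{only} item $k$ when the supply is at least $k$, hence he wins none of the first $j$ items; by the online consistency noted above, under bid $v''_b$ and supply $\ell = j$ he therefore receives nothing and obtains utility $0$. Under bid $v'_b$ and supply $\ell = j$, again by online consistency he wins item $j$ at price $p_b$.

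Finally, consider a bidder $b$ whose true valuation is $v''_b$, facing the bids $v_{-b}$ and supply $\ell = j$. Reporting truthfully yields utility $0$, whereas reporting $v'_b$ yields utility $v''_b - p_b > 0$. This is a profitable deviation, contradicting truthfulness. Hence $j = k$, which proves the lemma.

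I expect the only delicate point to be the bookkeeping around the online structure: one must state precisely that for any supply $\ell \ge t$ the mechanism's behaviour on items $1,\dots,t$ — both the allocation and the prices charged — is identical, so that the facts derived at the ``full'' supply (that under $v'_b$ bidder $b$ wins item $j$ at price $p_b$, and that under $v''_b$ he wins item $k \ne j$ and hence nothing among the first $j$ items) transfer down to the reduced supply $\ell = j$. Everything else is a one-line incentive-compatibility argument, and this reduction of the supply is exactly where the online nature of the problem is being exploited.
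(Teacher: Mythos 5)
Your proof is correct and follows essentially the same route as the paper's: assume two bids under which $b$ wins different items $j<k$, truncate the supply to $\ell=j$, and observe that a bidder whose truthful report wins the later item would profitably deviate to the bid winning the earlier one, earning $v_b-p_b>0$ by the bid-independence of $p_b$. The extra care you take in spelling out the online consistency of the mechanism across supplies $\ell\ge t$ is implicit in the paper's argument but not a different idea.
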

\begin{proof}
Suppose for some realization of items, and for some fixed set of bids of the other bidders, bidder $b$ can change his bid to $v_b$ or $v_b'$, and win one of two items, item $i$ or item $j$, and that if he bids his true valuation $v_b$, he wins item $j > i$. Now consider a realization in which only $i$ items arrive; If bidder $b$ bids $v_b$, he wins no item and receives utility $0$. If he bids $v_b'$, he wins item $i$ at his (bid independent) price $p_b$, and achieves higher utility $v_b-p_b$. Therefore, the mechanism is not truthful.
\end{proof}
\begin{lemma}
\label{approximationLemma}
For any deterministic mechanism that achieves an $n$-approximation to social welfare, every bidder has a bid such that they are allocated the first item.
\end{lemma}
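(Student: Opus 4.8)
The plan is to exploit the adversarial approximation guarantee at the single-item supply level $\ell = 1$. Fix a bidder $b$ and an arbitrary profile of bids $v_{-b}$ for the other bidders; it suffices (and is convenient for the rest of the argument) to produce, for each such $v_{-b}$, a bid $v_b$ that makes $b$ win the first item. Let $H = \max\{1,\max_{j \neq b} v_j\}$ and set $v_b := nH + 1$, so that $v_b$ strictly exceeds $nH$ and is the unique highest bid.

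Now consider the realization in which exactly $\ell = 1$ item ever arrives. Here $\OPT_1 = v_b$. Since the mechanism is deterministic and achieves an $n$-approximation for \emph{every} supply, in particular for $\ell = 1$ we must have $W(\M_1(v_b,v_{-b})) \geq \OPT_1/n = v_b/n > H$. A unit-demand mechanism facing a single item either allocates it to exactly one bidder or to no one; allocating to no one gives welfare $0$, contradicting the bound, so the lone item is allocated to a single bidder whose value must exceed $H$. By the choice of $H$, the only bidder with value exceeding $H$ is $b$, so $b$ is allocated the first item under bids $(v_b, v_{-b})$ when $\ell = 1$.

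Finally I would observe that the identity of the recipient of the first item does not depend on $\ell$: the mechanism is online and unaware of $\ell$, so at the moment the first item arrives it has observed only the submitted bids and the arrival of one item, and this is exactly the same information it would have whether $\ell = 1$ or $\ell$ is larger. Hence the deterministic allocation rule hands the first item to $b$ for every supply $\ell \geq 1$, not just for $\ell = 1$. Specializing $v_{-b}$ as needed (e.g. to a fixed or all-equal profile) gives the statement of the lemma.

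I do not expect a genuine obstacle here; the only point requiring care is the observation in the last paragraph — that the adversarial guarantee at the isolated instance $\ell = 1$ already forces the first-item allocation in all instances — which is exactly where the online nature of the model is used. One should also be sure to invoke that the mechanism must allocate the single item to \emph{some} bidder (otherwise it fails even a trivial approximation), which rules out the degenerate ``sell nothing'' option.
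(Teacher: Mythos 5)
Your proposal is correct and follows essentially the same route as the paper's own argument: bid high enough (more than $n$ times the other bids) that failing to award the first item to $b$ would violate the $n$-approximation in the realization where only one item arrives, then use the fact that the online mechanism's first-item decision cannot depend on $\ell$. Your write-up just makes explicit two points the paper leaves implicit (ruling out the ``sell nothing'' option and the $\ell$-independence of the first allocation).
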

\begin{proof}
Any bidder $b$ can set his bid to more than $n$ times the second highest bidder. If the mechanism does not allocate the first item to $b$, then if there are no further items, the mechanism has not achieved an $n$-approximation to social welfare.
\end{proof}
\begin{proof}[Proof of Theorem]
By Lemma~\ref{approximationLemma}, any bidder can win the first item
with an appropriately high bid. But by Lemma~\ref{orderLemma}, any
bidder such that $p_b < v_b$ who has a bid for which he can win the
first item cannot win any other item with any bid. Therefore, for
any  set of bidders $b_i$ such that for all $b_i$, $p_{b_i} \ne
v_{b_i}$, then any deterministic truthful mechanism that
achieves an $n$-approximation can only sell the first item. If all
bidders have value $1 \leq v_{b_i} \leq 1+\epsilon$, this achieves
no better than an $n$-approximation when all items arrive. It
remains to demonstrate such a set of bidders: Consider an arbitrary
set of $n+1$ distinct values between $1$ and $1+\epsilon$. For each
bidder, choose a value from this set independently at random. Since
each bidders price $p_{b_i}$ is independent of his bid, by
Lemma~\ref{priceLemma}, the probability that $v_{b_i} = p_{b_i}$ is at
most $1/(n+1)$, and by the union bound, the probability that
\textit{any} bidders bid equals its price threshold is at most
$n/(n+1) \leq 1$. Therefore, there exists a set of bids sampled from
this set with the desired property, which completes the proof.
\end{proof}

\subsection{Proofs of Lemmas from Section \ref{AdversarialSection}}
\label{app:adver-proofs}

\textbf{Lemma \ref{OPTLemma}:}
Consider a set of $n$ valuations drawn from $V$ and let $\OPT_k$ denote the sum of the $k$ highest valuations from the set. Then:
$$\E[\OPT_k] \geq H_{k+1}-1.$$
where $H_{k+1}$ denotes the $k+1$st harmonic number. In particular, $\E[\OPT_k] > (\log k)/2$.
\newline
\begin{proof}
Let $F(y)$ denote the cumulative distribution function of $V$. We note that $F(y)$ is a step function taking values $F(y) = (n-1/y)/(n-1)$ for all $y$ of the form $y = 1/2^i$ for $i \in \{0, 1, \ldots, \log n-1\}$. We consider the inverse CDF function $F^{-1}(x):[0,1]\rightarrow\{1, 1/2, 1/4, \ldots, 2/n\}$. It is simple to verify the following pointwise lower bound on $F^{-1}(x)$:
$$F^{-1}(x) \geq \frac{1}{n-x(n-1)}$$
which follows from inverting the discrete CDF. We denote the quantity in this bound $A(x) = 1/(n-x(n-1))$, and observe that $A(x)$ is convex in the range $[0,1]$.

Let $v_{i,n}$ denote the $i$'th largest value out of $n$ draws from $V$, and let $X_{i,n}$ denote the $i$'th largest value out of $n$ draws from the uniform distribution over $[0,1]$. We consider the following method of drawing a value $v$ from $V$: we draw $x$ uniformly from $[0,1]$ and let $v = F^{-1}(x)$. Since $F^{-1}$ is monotone, the $i$'th largest draw from the uniform distribution corresponds to the $i$'th largest draw from $V$: $v_i = F^{-1}(x_i)$.

Recall the expected value of the $i$'th largest of $n$ draws from the uniform distribution over $[0,1]$: $E[X_{i,n}] = 1-i/(n+1)$. This standard fact follows from a simple symmetry argument. We are now ready to complete the proof of the lemma:
\begin{eqnarray*}
\E[\OPT_k] &=& \sum_{i=1}^k\E[v_{i,n}] \\
&=& \sum_{i=1}^k\E[F^{-1}(X_{i,n})] \\
&\geq& \sum_{i=1}^k E[A(X_{i,n})] \\
&\geq& \sum_{i=1}^k A(E[X_{i,n}]) \\
&=& \sum_{i=1}^k \frac{1}{1+i(n-1)/(n+1)} \\
&\geq& \sum_{i=1}^k \frac{1}{1+i} \\
&=& H(k+1)-1
\end{eqnarray*}
where the second inequality is an application of Jensen's inequality, which follows since $A(x)$ is convex.
\end{proof}

\textbf{Lemma \ref{notTooBigLemma}:}
For $c_i \in [0,\log n-1]$:
$$\sum_{i=b_k+1}^n\frac{(c_i+1)}{\exp(\frac{\sum_{j=b_k+1}^{i-1}2^{c_j}}{n-1})} < 2.5\cdot n$$
\newline
\begin{proof}
Let $f(c_{b_k+1},\ldots,c_n)) \equiv \sum_{i=b_k+1}^n\frac{(c_i+1)}{\exp(\frac{\sum_{j=b_k+1}^{i-1}2^{c_j}}{n-1})}$. We consider the partial derivative at the $i$'th offer price:
\begin{eqnarray*}
\frac{\partial}{\partial c_i}f(c_{b_k+1},\ldots,c_n) &=& \frac{1}{e^{\sum_{j=b_k+1}^{i-1}2^{c_j}/(n-1)}} - \left(\frac{2^{c_i}\ln 2}{(n-1)e^{2^{c_i}/(n-1)}} \right)\cdot\sum_{j=i+1}^n\frac{c_j+1}{\exp(\sum_{\ell=b_k+1\ \ell \ne i}^{j-1}2^{c_\ell}/(n-1))} \\
&\leq& 1 - \frac{\ln 2}{n-1}\cdot\left(\sum_{j=i+1}^n\frac{c_j+1}{\exp(\sum_{\ell=b_k+1\ \ell \ne i}^{j-1}2^{c_\ell}/(n-1))} \right)
\end{eqnarray*}
But this is negative unless
$$R_i \equiv \sum_{j=i+1}^n\frac{c_j+1}{\exp(\sum_{\ell=b_k+1\ \ell \ne i}^{j-1}2^{c_\ell}/(n-1))} \leq \frac{n-1}{\ln 2}$$

Fixing any maximal assignment to the $c_i$ variables, let $i'$ be the largest index for which the above condition on $R_{i'}$ fails to hold. We know that for all $i \leq i'$, $c_i = 0$, since the partial derivative at $i$ is negative, and so if we could reduce $c_i$ further this would contradict the fact that we selected a maximal assignment. Therefore, we have:
\begin{eqnarray*}
f(c_{b_k+1},\ldots,c_n) &=& \sum_{i=b_k+1}^{i'}\frac{c_i+1}{\exp(\frac{\sum_{j=b_k+1}^{i-1}2^{c_j}}{n-1})} + \sum_{i=i'+1}^n\frac{c_i+1}{\exp(\frac{\sum_{j=b_k+1}^{i-1}2^{c_j}}{n-1})} \\
&\leq& i' + \frac{1}{e^{2^{c_i}/(n-1)}}\cdot R_i \\
&\leq& n + \frac{n-1}{\ln 2} \\
&<& 2.5 n
\end{eqnarray*}

\end{proof}

\begin{proposition}[Proposition \ref{RandomGuessFactor}] RandomGuess is truthful, online-envy-free, and
achieves a $\log n$ approximation to social welfare.
\end{proposition}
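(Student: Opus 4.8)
The plan is to check the three asserted properties in turn; only the approximation bound is nontrivial. For truthfulness (in the universal sense): once the bids are submitted, the guess $g$, the identity of the top-$g$ bidders, the price $v_{g+1}$, and the order in which the top-$g$ bidders are approached are all determined independently of any individual bidder's own bid. A bidder outside the top $g$ can get an item only by bidding above $v_g$, but she would then be offered it at price $v_{g+1}$, which exceeds her true value, so she cannot profit; a bidder inside the top $g$ faces a single take-it-or-leave-it offer at the fixed price $v_{g+1}$, at a position that does not move with her bid, so she has no useful deviation. Hence truthful reporting is dominant for every realization of the coins. For online-envy-freeness it suffices, by definition, to check the deterministic mechanism obtained by fixing $g$ when $\ell=n$: it then sells to exactly the $g$ highest bidders, all at the common price $v_{g+1}$, every winner has value $\ge v_g\ge v_{g+1}$, and every loser has value $\le v_{g+1}$ — the offline envy-free condition. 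RandomGuess is the uniform mixture of these deterministic mechanisms, hence online-envy-free.

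For the approximation I would fix a supply $\ell$ and bids $v_1\ge\cdots\ge v_n$, and write $G=\{2,4,\dots,n\}$, so each guess is drawn with probability $1/|G|=\Omega(1/\log n)$. The key structural fact is: if the drawn guess satisfies $g\le\ell$ then all $g$ of the top-$g$ bidders receive an item (at least $g$ items arrive), so the welfare is \emph{exactly} $\OPT_g$; if $g>\ell$ the mechanism still sells $\ell$ items, to $\ell$ of the top-$g$ bidders. Let $g_0$ be the largest power of two with $g_0\le\ell$ and $g_1=2g_0$ the next one. The guess $g=g_0$ yields welfare $\OPT_{g_0}\ge (g_0/\ell)\,\OPT_\ell$, since the average of the top $g_0$ values is at least the average of the top $\ell$ values; the guess $g=g_1>\ell$ yields welfare at least $(\ell/g_1)\,\OPT_{g_1}\ge(\ell/g_1)\,\OPT_\ell$ in expectation over the order in which the top-$g_1$ bidders are approached (a random $\ell$-subset of them). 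Both guesses are drawn with probability $1/|G|$ and all other draws contribute non-negatively, so $\E[W]\ge\frac{1}{|G|}\bigl(\frac{g_0}{\ell}+\frac{\ell}{g_1}\bigr)\OPT_\ell$; writing $x=\ell/g_0\in[1,2)$ and $g_1=2g_0$, the bracket equals $\frac1x+\frac x2\ge\sqrt2$, which yields an approximation ratio of $|G|/\sqrt2\le\log_2 n$.

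The main obstacle is making this last step rigorous uniformly over $\ell$. First, several edge cases need separate treatment: when $\ell$ is itself a power of two one gets $\OPT_\ell$ from a single guess; when $\ell=1$ no guess lies at or below $\ell$, so one argues directly; and when $\ell=n$ the guess $g_1$ falls outside $G$. Second, if one insists on a deterministic permutation (as the paper apparently does), the clean ``random $\ell$-subset'' estimate $(\ell/g_1)\OPT_{g_1}$ must be replaced by the worst-case guarantee $\OPT_{g_1}-\OPT_{g_1-\ell}$ for a guess $g_1>\ell$, and one must then check that whenever that quantity is small — which forces almost all of $\OPT_\ell$ to sit in the top $g_1-\ell$ bids — the $g\le\ell$ terms compensate; this is a short case analysis on $\ell$ relative to the surrounding powers of two, and is exactly where near-degenerate instances (most of $\OPT_\ell$ concentrated in very few bids) bite. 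Everything else is routine, and no Yao/minimax argument is needed here since we directly lower-bound the expected welfare of a fixed randomized mechanism.
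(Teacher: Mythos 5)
Your proof is correct and follows essentially the same route as the paper's: both lower-bound the expected welfare by the contributions of the two powers of two bracketing the realized supply $\ell$, using $\OPT_{g_0}\ge (g_0/\ell)\OPT_\ell$ for the guess just below $\ell$ and the random-order estimate $(\ell/g_1)\OPT_{g_1}$ for the guess just above (the paper simply bounds each contribution by $\OPT_\ell/2$ rather than optimizing the bracket to $\sqrt2$). One correction to the hedge you call ``the main obstacle'': the paper's proof explicitly allocates ``in random order,'' so the permutation in RandomGuess is random and your primary argument is the intended one; moreover, the deterministic-permutation variant you contemplate cannot be rescued by the case analysis you sketch --- with one bidder of value $1$ placed last in the fixed permutation, all other values $0$, and $\ell=1$, every guess yields welfare $0$ --- so the randomness of the order is essential rather than a removable convenience.
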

\begin{proof}
Truthfulness and envy-freeness are immediate: every winning bidder faces a single take-it-or-leave-it offer independent of their bid, in an order independent of their bid. All items are sold at the same price, $v_{g+1}$. When $n$ items arrive, all bidders with valuations higher than the offer price have been allocated items. We now prove the approximation guarantee.

Suppose that $I$ items arrive, and $\OPT_I = \sum_{i=1}^Iv_i$, the sum of the $I$ highest bids. With probability $1/\log n$, $I < g \leq 2I$, and with probability $1/\log n$, $I/2 < g \leq I$. In the first case, RandomGuess allocates the $I$ items to at least half of the top $g$ bidders in random order, and so achieves welfare in expectation at least $\OPT_g/2 \geq OPT_I/2$. In the second case, RandomGuess allocates at least half of the $I$ items to all of the top $g$ bidders, and achieves welfare $\OPT_g = \sum_{i=1}^gv_i$. Since $g > I/2$, $\OPT_g > \OPT_I/2$ because $\{v_i\}$ is a non-increasing sequence. Our mechanism therefore achieves in expectation welfare at least $(1/\log n)(\OPT_I/2 + \OPT_I/2) = \OPT_I/\log n$.
\end{proof}

\subsection{Proofs from Section \ref{StochasticSection}}
\label{app:stoch-proofs}

\textbf{Lemma \ref{technicalLemma2}}: For any $s \geq 1$ and $h_i \in [1/s,1]$:
$$\sum_{i=s+1}^n\left(i\cdot h_i\cdot \prod_{j=s+1}^{i-1}(1-h_j)\right) \leq 3s+1$$
\newline
\begin{proof}
Let $f(h_{s+1},\ldots,h_n) \equiv \sum_{i=s+1}^n(i\cdot h_i\cdot
\prod_{j=s+1}^{i-1}(1-h_j))$ and consider the partial derivative at
$h_k$:
\begin{eqnarray*}
\frac{\partial}{\partial h_k}f(h_{s+1},\ldots,h_n)&=& k\cdot\prod_{j=s+1}^{k-1}(1-h_j)-\sum_{i=k+1}^n\left(i\cdot h_i\cdot \prod_{j = s+1, j \ne k}^{i-1}(1-h_j)\right) \\
 &\leq& k\cdot(1-\frac{1}{s})^{k-s-1} - \sum_{i=k+1}^n\left(i\cdot h_i\cdot \prod_{j = s+1, j \ne k}^{i-1}(1-h_j)\right)
\end{eqnarray*}
where the inequality follows from $h_i \geq 1/s$ for all $i$. But
this is negative unless
$$R_k \equiv \sum_{i=k+1}^n\left(i\cdot h_i\cdot \prod_{j = s+1, j \ne k}^{i-1}(1-h_j)\right) \leq k\cdot(1-\frac{1}{s})^{k-s-1}$$
Fix some assignment to the $h_i$ that maximizes
$f(h_{s+1},\ldots,h_n)$ and let $k'$ be the first index at which the
above condition holds. Then for all $i < k'$, $h_i = 1/s$, since
otherwise this would contradict the fact that the assignment
maximizes $f$. Therefore, we have:
\begin{eqnarray*}
 \sum_{i=s+1}^n\left(i\cdot h_i\cdot \prod_{j=s+1}^{i-1}(1-h_j)\right) &=& \sum_{i=s+1}^{k'-1}\left(i\cdot h_i\cdot \prod_{j=s+1}^{i-1}(1-h_j)\right) + \sum_{i=k'}^n\left(i\cdot h_i\cdot \prod_{j=s+1}^{i-1}(1-h_j)\right) \\ %liad changed h_i--> h_j
 &\leq& \sum_{i=s+1}^{k'-1}\left(\frac{i}{s}(1-\frac{1}{s})^{i-s-1}\right) +  \sum_{i=k'}^n\left(i\cdot h_i\cdot \prod_{j=s+1}^{i-1}(1-h_j)\right) \\
 &=& \sum_{i=s+1}^{k'-1}\left(\frac{i}{s}(1-\frac{1}{s})^{i-s-1}\right)+ k'\cdot h_{k'} \cdot \prod_{j=s+1}^{k'-1}(1-h_j) + (1-h_{k'})\cdot R_{k'} \\
 &\leq& \sum_{i=s+1}^{k'-1}\left(\frac{i}{s}(1-\frac{1}{s})^{i-s-1}\right) + h_{k'}(\cdot k'\cdot (1-\frac{1}{s})^{k'-s-1}) +  (1-h_{k'})(k'\cdot (1-\frac{1}{s})^{k'-s-1}) \\
 &=& \frac{1}{s}\sum_{i=s+1}^{k'-1}\left(i(1-\frac{1}{s})^{i-s-1}\right) + k'\cdot(1-\frac{1}{s})^{k'-s-1} \\
 &\leq& \frac{1}{s}\sum_{i=s+1}^{\infty}\left(i(1-\frac{1}{s})^{i-s-1}\right) + (s+1) \\
 &=& 3s+1
\end{eqnarray*}
where the second inequality follows from the fact that for all $i$,
$h_i \geq 1/s$, the third inequality follows from the fact that $k
\geq s+1$ and so $k'\cdot(1-\frac{1}{s})^{k'-s-1}$ is decreasing in
$k'$, and the last equality follows from the identity
$\sum_{i=k}^\infty i\cdot r^{i-k} = (k + r - kr)/(r-1)^2$.
\end{proof}

\textbf{Theorem \ref{uniformConstant}}:HazardGuess($D$) achieves a $\frac{3}{5}$-approximation to social welfare in expectation over $D$ when $D$ is the uniform distribution over $\{1,\ldots,n\}$. Moreover, there are values for which HazardGuess$(D)$ cannot get better than a $\frac{3}{4}$-approximation when $D$ is the uniform distribution.\newline
\begin{proof}
Consider the case that there are $n$ agents and the supply is chosen uniformly at random from $\{1,n\}$ (we note that if the range starts from a number larger than $1$ the problem becomes easier and the algorithm achieves better approximation.)
We analyze the approximation achieved by picking the supply $k = n/2$ and selling at most $k$ items,\footnote{For simplicity we assume that $n$ is even. Essentially the same argument will work for the case that $n$ is odd.} in a random order over the top $k$ values. We prove that the algorithm achieves at least $60\%$ of the optimum.

Assume the values are sorted $v_1\ge v_2\ge\ldots\ge v_n$. Define $OPT_l=\sum_{i=1}^l v_i$. The expected welfare of the optimal algorithms is $OPT=1/n \cdot \sum_{l=1}^n OPT_l$. Splitting the sum to two parts we get the following.

$$ OPT = % \frac{1}{n} \cdot \sum_{l=1}^n OPT_l =
\frac{1}{n} \cdot \sum_{l=1}^{\frac{n}{2}} OPT_{l} +
\frac{1}{n} \cdot \sum_{l=\frac{n}{2}+1}^{n} OPT_{l}\le
\frac{OPT_{\frac{n}{2}}}{2} +
\frac{1}{n} \cdot \sum_{l=\frac{n}{2}+1}^{n} \frac{l}{n/2} OPT_{\frac{n}{2}} =
OPT_{\frac{n}{2}} \left(\frac{1}{2} + \frac{2}{n^2} \sum_{l=\frac{n}{2}+1}^{n} l   \right) =
$$
$$
OPT_{\frac{n}{2}} \left(\frac{1}{2} + \frac{2}{n^2} \left(\frac{n(n+1)}{2} - \frac{\frac{n}{2}(\frac{n}{2}+1)}{2}\right)\right) =
OPT_{\frac{n}{2}} \left(\frac{5}{4} + \frac{1}{2n}\right)
$$

Our algorithm achieves expected welfare of
$$ ALG = \frac{1}{n} \cdot \sum_{l=1}^{\frac{n}{2}} \frac{l}{n/2} OPT_{\frac{n}{2}} +
\frac{1}{n} \cdot \sum_{l=\frac{n}{2}+1}^{n} OPT_{\frac{n}{2}}=
OPT_{\frac{n}{2}} \left(\frac{2}{n^2} \sum_{l=1}^{\frac{n}{2}} l   + \frac{1}{2}\right)=
$$
$$
OPT_{\frac{n}{2}} \left(\frac{2}{n^2} \frac{\frac{n}{2}(\frac{n}{2}+1)}{2}   + \frac{1}{2}\right)=
OPT_{\frac{n}{2}} \left(\frac{3}{4} +\frac{1}{2n}\right) \ge
\frac{OPT}{\frac{5}{4} + \frac{1}{2n}}\cdot \left(\frac{3}{4} +\frac{1}{2n}\right)\ge
\frac{3}{5} OPT$$

Finally we observe that this algorithm gets at most $75\%$ of the optimum. Consider the input with one value of $1$ and all the rest of the values are 0. The optimal algorithm will always get welfare of $1$. Our algorithm will get the $1$ with probability
$$\sum_{l=1}^{n/2} \frac{1}{n} \cdot \frac{l}{n/2} + \frac{1}{2} = \frac{n+1}{4n}  +  \frac{1}{2}< \alpha$$ for any constant $\alpha>3/4$ when $n$ is large enough.
\end{proof}

\textbf{Theorem \ref{thm:non-mhr-lb}}:
No deterministic truthful mechanism can achieve an $o(\sqrt{\log n/\log\log n})$ approximation to social welfare when faced with arbitrary stochastic supply (without the non-decreasing hazard rate condition).
\newline
The theorem follows directly from two lemmas.

\begin{definition}
A \emph{bid-independent supply mechanism} chooses an ordering on the bidders $\pi$ and a supply $g$ independently of the bids. It then sells items as they arrive to the $g$ highest bidders, ordered according to $\pi$, at the $g+1$st highest price.
\end{definition}
Note that all mechanisms presented in this paper are bid-independent supply mechanisms.
\begin{lemma}
\label{necessaryHazardRate}
No deterministic bid-independent supply mechanism can achieve an $o(\log n/\log\log n)$ approximation to social welfare when faced with arbitrary stochastic supply (without the non-decreasing hazard rate condition).
\end{lemma}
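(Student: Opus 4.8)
The plan is to exhibit a single supply distribution $D$, violating the non-decreasing hazard rate condition, on which every deterministic bid-independent supply mechanism incurs approximation ratio $\Omega(\log n/\log\log n)$. Set $R=\Theta(\log n/\log\log n)$ and $K=\lfloor\log n/\log R\rfloor=\Theta(\log n/\log\log n)$, and take the ``scales'' $s_j=\lceil R^{j}\rceil$ for $0\le j\le K$ (so $s_0=1\le s_1<\dots<s_K\le n$; the rounding affects only constants). Let $D$ be uniform on $\{s_0,\dots,s_K\}$, i.e.\ $\Pr[\ell=s_j]=1/(K+1)$. This $D$ fails the monotone hazard rate condition for a trivial reason: $h_i=0$ at every non-support point, whereas $h_{s_j}=1/(K+1-j)>0$, and since $R\ge 2$ the integer $s_1+1$ lies strictly between the support points $s_1$ and $s_2$, so $h$ drops from $h_{s_1}>0$ to $h_{s_1+1}=0$.

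Now fix a bid-independent supply mechanism; on input $D$ it commits, independently of the bids, to a permutation $\pi$ and a supply $g$, which we may assume satisfies $1\le g\le n$. The crucial consequence of committing to $\pi$ before seeing the bids is that the adversary, knowing $\pi$, can assign valuations so that the mechanism is forced to serve its top-$g$ bidders in the \emph{worst} order: if valuations are assigned increasingly along $\pi$, then whenever $\ell<g$ items arrive the mechanism sells only to the $\ell$ lowest-valued bidders among its top $g$. I will use only two such instances, chosen according to whether $g$ is small or large relative to the scales.

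If $g<s_{K-1}$, let every bid equal $1$. Then $\OPT=\E_\ell[\ell]\ge s_K/(K+1)$, while the mechanism sells exactly $\min(g,\ell)$ items, so its welfare is $\tfrac{1}{K+1}\bigl(\sum_{j\le t}s_j+(K-t)g\bigr)$, where $t$ is the index with $s_t\le g<s_{t+1}$ (so $0\le t\le K-2$). Using $\sum_{j\le t}s_j\le 2s_t$ and $g<s_{t+1}$, this welfare is at most $\tfrac{1}{K+1}s_{t+1}(K-t+1)$, so the ratio is at least $R^{K-t-1}/(K-t+1)$, which is increasing in $K-t\ge 2$ and hence at least $R/3$. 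If instead $g\ge s_{K-1}$, let one bidder bid $1$, place that bidder last in $\pi$, and let all other bids be $0$. Then $\OPT=1$, but the unique valuable bidder is served only when $\ell\ge g$, so the welfare equals $\Pr[\ell\ge g]\le\Pr[\ell\ge s_{K-1}]=2/(K+1)$, giving ratio at least $(K+1)/2$. In both cases the ratio is at least $\min\{R/3,(K+1)/2\}=\Omega(\log n/\log\log n)$, which proves the lemma.

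The delicate step is the choice of $D$: it must make \emph{every} supply guess $g$ simultaneously bad, and the two failure modes pull in opposite directions — a too-small $g$ forfeits the high-supply realizations (defeated by the all-ones bids), while a too-large $g$ lets the adversary exploit the committed order $\pi$ by hiding the single high bidder at the tail of $\pi$ (defeated by the one-bidder instance). Geometric scales whose separation factor and whose number are both $\Theta(\log n/\log\log n)$, uniformly weighted, are precisely what equalizes the two resulting bounds; once $D$ is pinned down the remaining estimates are routine. One should also note in passing that the $(g{+}1)$st-price payment rule plays no role in this welfare argument, so it can be ignored throughout.
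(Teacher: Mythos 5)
Your proof is correct and follows essentially the same route as the paper's: a single fixed decreasing-hazard-rate distribution combined with the two extremal bid profiles (all bids equal, to defeat small committed supply $g$; one positive bidder hidden at the tail of $\pi$, to defeat large $g$), concluding that no single $g$ can handle both. The only difference is cosmetic --- you take the uniform distribution over a geometric grid of support points, whereas the paper uses $\Pr[\ell=i]=1/(i+i^2)$ (so $\Pr[\ell\ge i]=1/i$) --- and both give the same $\Omega(\log n/\log\log n)$ bound.
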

\begin{proof}
We give a distribution with a decreasing hazard rate such that no
mechanism that determines a maximum supply $g$ independent of the
bids $v_i$ can achieve an $o(\log n/ \log \log n)$ approximation to social
welfare.

We define $D$ such that $\Pr[\ell = i] = 1/(i + i^2)$. Note that $\Pr[\ell \geq i] =
1/i$, and the hazard rate at $i$ is decreasing: $h_i(D) = 1/(1+i)$.
Consider the welfare achieved by a bid-independent mechanism that chooses supply $g$. If at least $g$
items arrive, it achieves welfare exactly $\OPT_g$. Otherwise, if $j
< g$ items arrive, it achieves expected welfare at most $(j/g)\OPT_g$.
Therefore, the welfare it achieves is at most:
\begin{eqnarray*}
\OPT_g\cdot\Pr[\ell \geq g] + \frac{1}{g}\cdot\sum_{j=1}^{g-1}j\cdot\Pr[\ell = j] &=& \OPT_g\cdot(\frac{1}{g} + \frac{H_g-1}{g}) \\
&=& \Theta\left(\OPT_g\cdot(\frac{\log g}{g})\right)
\end{eqnarray*}
We consider two possible sets of bidder values: In the Single Bidder
case, we have $v_1 = 1$ and $v_j = 0$ for all $j > 1$. In the All
Bidder case, we have $v_j = 1$ for all $j$. Note that in the Single
Bidder case, we have $\OPT = 1$ and $\OPT_i = 1$ for all $i$. In the
All Bidder case we have $\OPT = H_{n+1}-1 = \Theta(\log n)$ and
$\OPT_i = i$. Therefore, in the Single Bidder case, a mechanism that
achieved an $o(\log n/\log\log n)$ approximation to social welfare
would have $(\log g)/g = \omega(\log\log n/\log n)$, and in the All
Bidder case would have $\log g = \omega(\log\log n)$. There is no $g
\in [1,n]$ that satisfies both of these equations simultaneously.
Since $g$ is chosen independently of the bids, the two cases are indistinguishable, and any such mechanism much achieve an
approximation ratio no better than $\Omega(\log n/\log\log n)$ in at
least one of them.
\end{proof}

\begin{lemma}
For any distribution $D$ and any deterministic truthful mechanism $M$ that achieves an $\alpha$ approximation to social welfare over $D$, there is a truthful deterministic online-envy-free bid-independent supply mechanism $M'$
that achieves an $\alpha^2$ approximation to social welfare.
\end{lemma}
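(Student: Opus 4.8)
The plan is to route everything through the characterization of deterministic truthful mechanisms from Lemmas~\ref{priceLemma} and~\ref{orderLemma}: $M$ is described by a bid-independent permutation $\pi$ together with bid-independent thresholds $t_b$ (equivalently, by a bid-independent partition of the bidders into buckets, at most one of which can ``win'' per bucket for \emph{any} bid vector), and it offers each arriving item down $\pi$ at threshold prices. The first observation is that the three structural requirements on $M'$ come for free once a bid-independent supply $g$ and permutation $\pi'$ are fixed: the canonical mechanism that sells each arriving item to the next of the top-$g$ bidders in $\pi'$-order at the uniform price $v_{g+1}$ is deterministic; it is truthful in the universal sense, since every winner faces a single take-it-or-leave-it offer, at a price and in an order that do not depend on her own bid (this is just a $(g{+}1)$st-price auction with a fixed service order); and it is online-envy-free, since all sold items go for the common price $v_{g+1}$ and, at $\ell=n$, the winners are exactly $\{b:\ v_b\ge v_{g+1}\}$. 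Moreover, for a worst-case-over-bids guarantee this canonical form is without loss among all mechanisms that commit to selling at most $g$ items: serving the literal top-$g$ bidders is welfare-optimal whenever $\ell\ge g$, and for $\ell<g$ any fixed allocation rule can be foiled by adversarial bids. So the whole problem reduces to choosing a single supply $g^\star$, as a function of $D$ and $M$, so that the canonical mechanism $M'$ with supply $g^\star$ is an $\alpha^2$-approximation.

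To choose $g^\star$ and bound its welfare, the tension to exploit is that $M$, though truthful, is adaptive: its \emph{effective} supply (the number of items it sells when all items arrive) may depend on the bids, whereas $M'$ cannot. I would bound the loss by two factors of $\alpha$. For the first, I would feed $M$ the ``concentrated'' family --- a single bidder of value $1$, all others $0$, with the valuable bidder placed late in $M$'s order --- and use $\alpha$-competitiveness to conclude that $M$ must already sell an item at a supply level $t$ with $\Pr[\ell\ge t]\ge 1/\alpha$; this confines the productive supply levels below a threshold $g^\star\approx\max\{g:\Pr[\ell\ge g]\ge 1/\alpha\}$, so committing to the budget $g^\star$ costs at most a factor $\alpha$ relative to $M$. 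For the second, I would feed $M$ the ``spread'' family --- many bidders of comparable value --- where the bid-dependent packing of the valuable bidders into $M$'s first buckets is exactly what lets $M$ beat a fixed supply; here $\alpha$-competitiveness (via the bound that $M$'s welfare on these instances is at most the expected number of buckets used, namely at most $\mathrm{E}[\min(\ell,\bar K)]$) bounds how much the packing buys and hence bounds the residual loss by another factor of $\alpha$. Concretely, for every bid vector $v$ I would establish $W(M')\ \ge\ \OPT_{g^\star}(v)\cdot\Pr[\ell\ge g^\star]\ \ge\ \OPT(v)/\alpha^2$, arguing in two regimes --- the value of $v$ concentrated in at most $g^\star$ bidders versus spread beyond $g^\star$ --- and in each regime comparing against the welfare that $M$ is guaranteed to extract; since $M$ was already $\alpha$-competitive, the two factors combine to $\alpha^2$.

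The main obstacle is the supply step. A truthful mechanism can genuinely outperform every fixed-supply mechanism --- it can detect the high bidders and pack them into its first buckets --- and the gap between the two can be as large as the cumulative hazard rate $\sum_i h_i$ of $D$, which for bad distributions far exceeds $\alpha$. The point is that this gap is large only when $D$ is bad, and in that case $M$'s parameter $\alpha$ is itself forced to be large (this is exactly what Theorem~\ref{thm:non-mhr-lb} asserts). Making this quantitative is the technical heart of the argument: one must extract from $M$'s $\alpha$-competitiveness a \emph{single} bid-independent $g^\star$ that is simultaneously good on the concentrated and the spread families (and, more carefully, on all bid vectors), and check that the resulting bound is $\alpha^2$ rather than $\alpha$ times some quantity depending only on $D$. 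Once such a $g^\star$ is fixed, the remaining claims --- determinism, universal truthfulness, online-envy-freeness, and the reduction to the canonical form --- are immediate from the first paragraph, and the conclusion of Theorem~\ref{thm:non-mhr-lb} then follows by combining this lemma with the $\Omega(\log n/\log\log n)$ lower bound of Lemma~\ref{necessaryHazardRate} for bid-independent supply mechanisms.
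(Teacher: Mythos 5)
Your first paragraph --- reducing to a canonical mechanism that fixes a supply $g$ and an order, sells at the uniform price $v_{g+1}$, and is therefore deterministic, universally truthful, and online-envy-free --- is correct and matches the paper. The problem is the second half: you explicitly leave the choice of $g^\star$ and the quantitative bound as "the technical heart" without supplying it, and the route you sketch for it does not go through as stated. Choosing $g^\star$ purely from the distribution (as $\max\{g:\Pr[\ell\ge g]\ge 1/\alpha\}$) and then trying to prove $\OPT_{g^\star}(v)\ge \OPT(v)/\alpha$ \emph{for all} bid vectors is essentially equivalent to the conclusion you are trying to establish, and it is not a consequence of $D$ and $\alpha$ alone --- you must use $M$ itself to pick the supply. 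Your "spread family" step, which is supposed to supply the second factor of $\alpha$, is where this circularity hides. Likewise, your "concentrated family" step (one bidder of value $1$, all others $0$, "placed late in $M$'s order") does not control the right quantity: with all other bids zero, $M$'s bucketing may bear no relation to the bucketing under the profiles where $M$ actually sells many items, so this instance tells you nothing about those supply levels.

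The missing idea is to let the supply be $g_{\max}$, the maximum number of items $M$ ever sells at full supply, over all bid profiles. The two factors of $\alpha$ then come from two separate properties of this single quantity. First, since $M$ never allocates more than $g_{\max}$ items, its welfare is pointwise (in $\ell$ and in the bids) at most $\OPT_{g_{\max}}$, so $\alpha$-competitiveness gives $\OPT_{g_{\max}}\ge \OPT/\alpha$ with no "spread family" construction at all. Second, $\Pr[\ell\ge g_{\max}]\ge 1/\alpha$: take a profile on which $M$ realizes supply $g_{\max}$, identify the bidder who receives the $g_{\max}$-th item, and raise \emph{only his} bid until he constitutes almost all of the achievable welfare; Lemmas~\ref{priceLemma} and~\ref{orderLemma} guarantee that his price and his position in the allocation do not change, so he is still served if and only if $\ell\ge g_{\max}$, and $\alpha$-competitiveness on the perturbed instance forces $\Pr[\ell\ge g_{\max}]\ge 1/\alpha$. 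The canonical mechanism with supply $g_{\max}$ then earns at least $\OPT_{g_{\max}}\cdot\Pr[\ell\ge g_{\max}]\ge \OPT/\alpha^2$. The crucial difference from your sketch is that the inflated bidder is chosen from, and embedded in, a profile that already witnesses $g_{\max}$ --- that is what lets the truthfulness characterization pin down his allocation position.
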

\begin{proof}
Let $g_{\max}$ be the maximum number of items $M$ sells when full supply is realized, where the maximum is taken over all possible bid profiles. Let $M'$ be the mechanism that always sells the first $g_{\max}$ items to the $g_{\max}$ highest bidders in some predetermined order at the $g_{\max}+1$st highest price, and sells no further items. Note that $M'$ is online-envy-free and
has bid-independent sell sequence.
% chooses its supply independently of the bids.
First observe that $\OPT_{g_{\max}} \geq \OPT/\alpha$. This follows because by definition, $M$ can never achieve welfare beyond $\OPT_{g_{\max}}$, but by assumption, $M$ achieves an $\alpha$ approximation to the optimal social welfare. Next, observe that $\Pr_D[\ell \geq g_{\max}] \geq 1/\alpha.$ To see this, consider some bid profile which causes $M$ to produce a supply $g_{\max}$. Let $b_i$ be the bidder who receives item $g_{\max}$, and consider raising his valuation $v_i$ until it constitutes all but a negligible fraction of the total possible social welfare. By lemmas \ref{priceLemma} and \ref{orderLemma}, raising $b_i$'s bid does not affect either the supply offered by the mechanism, or the order in which $b_i$ receives an item: that is, it continues to be the case that $b_i$ receives an item if and only if at least $g_{\max}$ items arrive. However, since $b_i$ now constitutes an arbitrarily large fraction of the total social welfare, and $M$ is an $\alpha$-approximation mechanism, it must be that $Pr[\ell \geq g_{\max}] \geq 1/\alpha$.

Finally, we observe that our mechanism achieves welfare at least $\OPT_{g_{\max}}\cdot\Pr[\ell \geq g_{\max}] \geq \OPT/\alpha^2$, which completes the proof.
\end{proof}

\subsection{Proofs from Section \ref{section:envy-free}}
\label{app:envy-free}
%
%\begin{proposition}[Proposition \ref{prop:truthful-envy}]
%% For any procedure to choose $g$ in a way that is independent of the bids,
%An online supply mechanism is truthful and online envy-free if and only if it is a parameterizations of EnvyFreeScheme.
%\end{proposition}
%\begin{proof}
%We first show that the scheme is truthful and envy free.
%Truthfulness is immediate: Bidders who receive offers face take-it-or-leave-it prices that are independent of their bids. Which item is offered to them is also independent of their bids. Envy-freeness is also clear; All items are sold for the same price $p$, and when $n$ items arrive, every bidder $i$ with $v_i > p$ is offered the item.
%
%Now consider an arbitrary mechanism that is truthful and envy free, we show that it is a parametrization of EnvyFreeScheme. We solicit bids, and let $\pi$ be the order in which items are allocated when $n$ items arrive (if the mechanism does not allocate all items, we may fill in the remainder of permutation $\pi$ arbitrarily). Since we begin with a truthful mechanism, by Lemma \ref{orderLemma}, this ordering must have the property that $\pi(i)$ is independent of $v(i)$ for all $i$. Similarly, let $p$ be the single price at which the mechanism sells all items (since it is envy-free). By lemma \ref{priceLemma}, this price must be independent of any winning bidder's bid.
%\end{proof}

\begin{proposition}[Proposition \ref{envyfreeThm}]
No truthful online-envy-free mechanism can achieve an $o(\log
n/\log\log n)$ approximation to social welfare when faced with
adversarial supply.
\end{proposition}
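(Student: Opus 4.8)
The plan is to use Yao's principle and the strong structural rigidity that online-envy-freeness forces on truthful mechanisms, and then feed these mechanisms the same dyadic value distribution $V$ used in the proof of Theorem~\ref{generalLowerBound}. The point is that the single-price constraint is far more restrictive than plain truthfulness, so the analysis will extract $\Omega(\log n/\log\log n)$ rather than only $\Omega(\log\log n)$, using just two supplies ($\ell=1$ and $\ell=n$) instead of the recursion in Theorem~\ref{generalLowerBound}.

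\textbf{Step 1: reduction and structure.} Since a randomized online-envy-free truthful mechanism is a distribution over deterministic ones, by Yao it suffices to exhibit a distribution over inputs on which no deterministic truthful online-envy-free $M$ does well against the worst-case supply. Combining the online-envy-free definition with Lemmas~\ref{priceLemma} and \ref{orderLemma}, such an $M$ has the following form on a (sorted) profile $v$: it picks a single price $p=p(v)$; the set of potential winners is exactly the bidders with value $\ge p$, i.e.\ the top $g=g(v)$ bidders where $g$ is the number of bids at least $p$; and these are served, as items arrive, in an order in which each bidder's position is independent of his own bid. Hence if $\ell$ items arrive the winners are the first $\min(\ell,g)$ potential winners in that order, so $W(M_\ell(v))\le\OPT_{\min(\ell,g)}(v)$ and $W(M_n(v))=\OPT_{g(v)}(v)$. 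Crucially, $M$ is allowed to choose $p$ (hence $g$) adaptively from the whole bid vector.

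\textbf{Step 2: the hard instance and the key estimate.} I would draw each $v_i$ independently from the distribution $V$ of Theorem~\ref{generalLowerBound}, and condition on the $(1-o(1))$-probability event $\mathcal{G}$ that all relevant counts and partial sums are within constant factors of their means; on $\mathcal{G}$, the number of bids at least $1/2^{c}$ is $\Theta(2^{c})$, $\OPT_g=\Theta(\log g)$ for all $g$ (Lemma~\ref{OPTLemma} plus an easy matching upper bound), $\OPT_n=\Theta(\log n)$, and $\OPT_1=\Theta(1)$. The heart of the argument is to bound the welfare when few items arrive: writing $w(v)$ for the bidder who wins the first item, I claim $\E_v[v_{w(v)}]=\Theta\!\big(\E_v[v\mid v\ge p(v)]\big)$. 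The reason is that the event $\{w(v)=b\}$ is the conjunction of $\{v_b\ge p\}$ with an event depending only on $v_{-b}$ (by Lemma~\ref{orderLemma} bidder $b$'s position in the serving order, and which other bidders clear the price, are determined by $v_{-b}$; Lemma~\ref{priceLemma} handles the fact that the price $b$ faces is itself a function of $v_{-b}$), and since $v_b$ is independent of $v_{-b}$, conditioning on such an event does not change the law of $v_b$ given $v_b\ge p$. Consequently, if $M$ uses price level $c$ then at $\ell=1$ it earns $\Theta\!\big(\E[v\mid v\ge 1/2^{c}]\big)=\Theta\!\big((c+1)/2^{c+1}\big)$, while at $\ell=n$ it sells all $g=\Theta(2^{c})$ potential winners for welfare $\OPT_{g}=\Theta(c)$.

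\textbf{Step 3: combining the two supplies.} Comparing with $\OPT_n=\Theta(\log n)$ and $\OPT_1=\Theta(1)$, any $\alpha$-approximation forces, for the price level $c$ actually used, both $c=\Omega(\log n/\alpha)$ (from $\ell=n$) and $2^{c}=O(\alpha c)$, hence $c=O(\log\alpha)$ (from $\ell=1$); therefore $\log n/\alpha\le O(\log\alpha)$, i.e.\ $\alpha\log\alpha=\Omega(\log n)$, giving $\alpha=\Omega(\log n/\log\log n)$. To accommodate the adaptive choice $c=c(v)$ and the distribution over deterministic mechanisms, I would run this comparison in expectation: averaging over $v\sim\mathcal{G}$ and over the mechanism's coins, the $\ell=n$ and $\ell=1$ requirements become $\E[C]=\Omega(\log n/\alpha)$ and $\E[C\,2^{-C}]=\Omega(1/\alpha)$ for the random price level $C$, and a short tail estimate (bounding $\E[C2^{-C}]\le\Pr[C\le t]+t2^{-t}$ and choosing $t=O(\log\alpha)$) shows these are jointly infeasible unless $\alpha=\Omega(\log n/\log\log n)$; by Yao this bounds every randomized truthful online-envy-free mechanism, nearly matching the $\log n$ upper bound of RandomGuess.

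\textbf{Main obstacle.} The delicate point is the key estimate of Step 2. One is tempted to say ``the first winner is a uniformly random potential winner,'' but this is false: a three-bidder example shows that the serving order may legitimately depend on the \emph{bids} (the third bidder's value can decide the order of the first two), since only each bidder's \emph{own} position must be own-bid-independent. What actually holds, and what must be argued carefully, is the weaker statement that the identity of the bidder who wins item $j$ is independent of that bidder's own value beyond the binary fact that it clears $p$ --- precisely what the independence-of-conditioning step needs --- and this is where Lemmas~\ref{priceLemma} and \ref{orderLemma} are essential. The secondary nuisance, handled by the event $\mathcal{G}$ and the final averaging, is that the price $p$ and supply cap $g$ are mechanism-chosen functions of the realized bids rather than fixed constants.
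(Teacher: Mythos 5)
Your proposal is correct and follows essentially the same route as the paper's proof: reduce to deterministic mechanisms via Yao, use envy-freeness to collapse all thresholds to a single price level $c$, feed in the same dyadic distribution $V$, and derive a contradiction from the two supply realizations $\ell=n$ (forcing $c=\Omega(\log n/\alpha)$) and $\ell=1$ (forcing $2^{c}=O(\alpha c)$). The paper simply instantiates its already-derived inequality~\ref{lowerBoundConstraint} at $k=n$ and $k=1$ with all $c_b=c$, which packages the welfare estimates you rebuild by hand in Step 2, and reaches the identical conclusion $\alpha\log\alpha=\Omega(\log n)$.
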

\begin{proof}
For an envy-free mechanism, we may assume that all offered prices
$c_1,\ldots,c_n$ are equal: for all $i$, $c_i = c$. We apply
inequality \ref{lowerBoundConstraint} to obtain constraints for the
case in which $n$ items arrive, and the case in which $1$ item
arrives. When $n$ items arrive, we have for all $i$ $\Pr[N_{i-1} <
n] = 1$, and obtain the constraint:
\begin{equation}
\label{envyFreeConstraint1} n\cdot c \geq \frac{(n-1)\log
n}{2\alpha} - n
\end{equation}
When a single item arrives, we have $\Pr[N_{i-1} < 1] =
((n-2^{c+1})/(n-1))^{i-1}$, since each bidder independently accepts
the offer price $1/2^c$ with probability $(2^{c+1}-1)/(n-1)$.
Also, $\OPT_1\ge 1/2$.
We obtain the constraint:
\begin{equation} \label{envyFreeConstraint2}
(c+1)\cdot \sum_{i=1}^n\left(\frac{n-2^{c+1}}{n-1}\right)^{i-1} \geq
\frac{n-1}{2\alpha}
\end{equation}
Setting $\alpha = o(\log n/\log\log n)$, we see that constraint
\ref{envyFreeConstraint1} requires $c = \omega(\log\log n)$. It is
simple to verify that the left hand side of constraint
\ref{envyFreeConstraint2} is decreasing in $c$ in the range
$[\log\log n,\log(n)-1]$, and that setting $c = \omega(\log\log n)$
fails to satisfy \ref{envyFreeConstraint2}, which proves the claim.
\end{proof}

\subsection{Proofs from Section \ref{sec:knapsack}}
\label{app:knapsack-proofs}

We begin by presenting a lower bound for Knapsack utilities.
\begin{proposition}[Proposition \ref{knapsackLowerBound}]
No algorithm can guarantee better than a $\sum_{i=1}^m h_i$ approximation to optimal social welfare.
\end{proposition}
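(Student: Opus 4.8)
The plan is to reduce to deterministic algorithms and then exhibit, for each distribution $D$, a single fixed family of knapsack bidders on which every deterministic algorithm captures only an $O(1/H)$ fraction of $\OPT$ in expectation, where $H:=\sum_{i=1}^m h_i$. The reduction is immediate: a randomized algorithm is a distribution over deterministic ones, so its expected welfare (over its coins and over $D$) is an average of the deterministic algorithms' expected welfares, while $\OPT$ does not depend on the algorithm; hence a lower bound against deterministic algorithms suffices. We may also assume $H$ exceeds any prescribed constant, since otherwise the statement is trivial.

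The instance is a \emph{nested greedy} family. Pick a scale $M$ (an effective right endpoint of $\mathrm{supp}(D)$, chosen below) and create $L$ knapsack bidders, where bidder $j$ demands $D_j$ items with $M/2 < D_1 < D_2 < \cdots < D_L \le M$; add one dummy bidder of negligible value demanding $\max\mathrm{supp}(D)$ items, so that the total demand $m=\sum_i k_i$ is at least $\max\mathrm{supp}(D)$ as the model requires. Because any two of the real bidders have combined demand exceeding $M$, \emph{at most one of them can ever be fully served when the supply is at most $M$}; consequently, for such $\ell$, $\OPT_\ell$ is exactly the value of the single highest-index bidder whose demand does not exceed $\ell$. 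I would choose the $D_j$ so that $\Pr[\ell\ge D_j]$ falls by a factor of roughly $2$ between consecutive levels, and set $v_j=1/\Pr[\ell\ge D_j]$ (so the values grow geometrically); the number of levels is then $L=\Theta\big(\sum_{t\in[M/2,M]}h_t\big)$, and $M$ should be chosen (e.g.\ by starting at $\max\mathrm{supp}(D)$ and repeatedly halving until the top half of the current range carries a constant fraction of the remaining hazard mass) so that this equals $\Theta(H)$.

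The analysis then splits. For $\OPT$: each level $j$ contributes $v_j\Pr[D_j\le\ell<D_{j+1}]=1-\Pr[\ell\ge D_{j+1}]/\Pr[\ell\ge D_j]\ge\tfrac12$, so $\OPT=\Omega(L)=\Omega(H)$, while the event $\ell>M$ has probability at most $\Pr[\ell\ge M]$ and there $\OPT_\ell$ is at most the geometric sum $\le 2v_L\le 2/\Pr[\ell\ge M]$, contributing only $O(1)$. For any deterministic algorithm: its allocation of the $i$-th item is a fixed function of $i$, and since no two bidders can both reach their demand within $M$ items, for every $\ell\le M$ the algorithm completes at most one bidder $G$, and only at some time $\tau_G\ge D_G$; hence its expected welfare over $\{\ell\le M\}$ is at most $v_G\Pr[\ell\ge\tau_G]\le v_G\Pr[\ell\ge D_G]\le\max_j v_j\Pr[\ell\ge D_j]=1$, and the part over $\{\ell>M\}$ is again $O(1)$ by the same geometric estimate. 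So every algorithm obtains $O(1)$ while $\OPT=\Omega(H)$, giving the claimed $\Omega(\sum_{i=1}^m h_i)$ bound (a more careful choice of constants recovers the exact factor $\sum_{i=1}^m h_i$).

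The step I expect to be the main obstacle is the calibration to an \emph{arbitrary} $D$: for the standard distributions (uniform, exponential, binomial) the right scale $M$ is evident and the octave $[M/2,M]$ already carries a constant fraction of $H$, but in general the hazard mass can be distributed unevenly across $\mathrm{supp}(D)$, so one must argue that the halving procedure (together, if necessary, with "shifting away" a guaranteed prefix of items that always arrive) terminates at a scale $M$ for which $\sum_{t\in[M/2,M]}h_t=\Theta(\sum_{t=1}^m h_t)$. Ruling out adaptive hedging by the algorithm --- spreading items over several bidders and deciding late which one to finish --- is by contrast easy, since for a deterministic algorithm the allocation is a fixed function of the item index and at most one bidder can ever be completed.
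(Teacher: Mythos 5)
Your core construction --- pairwise-incompatible demands all lying in $(M/2,M]$, values set to inverse survival probabilities so that $\OPT$ accumulates the hazard mass while any deterministic algorithm can complete at most one bidder and therefore earns $O(1)$ --- is exactly the engine of the paper's proof, and that part of your argument is sound. The genuine gap is the step you yourself flag as the main obstacle: the claim that one can always find a scale $M$ with $\sum_{t\in[M/2,M]}h_t=\Theta\bigl(\sum_{t=1}^m h_t\bigr)$. This is false for a fixed $D$. Take $\Pr[\ell\ge i]=1/i$ on $[1,m]$ (the very distribution used in Lemma~\ref{necessaryHazardRate}): then $h_i=1/(i+1)$, the total hazard is $\Theta(\log m)$, but \emph{every} octave $[M/2,M]$ carries only $\Theta(1)$ of it, so your halving procedure terminates with $L=O(1)$ and yields only a constant lower bound. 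The proposed remedy of ``shifting away a guaranteed prefix'' does not help, since such a distribution has no deterministic prefix to shift away. In short, your single-octave construction cannot, for an arbitrary fixed $D$, recover more than the hazard mass of one octave.

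The paper avoids this entirely by reading the proposition as a worst-case statement over distributions with a given hazard profile: the adversary chooses the bidders (hence $m=\sum_i k_i$) \emph{and} places the support of $D$ on $[m+1,2m]$ --- equivalently, prepends $m$ items that arrive with probability one. This shift preserves $\sum_i h_i$ but compresses the entire support into the top octave of the demand range, so all $m$ demand levels $m+1,\dots,2m$ are pairwise incompatible, $\OPT=\sum_{i}c_i\Pr[\ell=i]=\sum_i h_i$ exactly, and any algorithm gets at most $c_i\Pr[\ell\ge i]=1$. Your construction, applied to that shifted distribution with one level per support point rather than per halving of the survival function, is precisely the paper's proof (and gives the exact constant rather than $\Omega(\cdot)$). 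So the fix is not a cleverer calibration lemma but a reinterpretation of what is being quantified: the lower bound exhibits, for each hazard profile, \emph{some} distribution and instance realizing it, which is what is needed to match Proposition~\ref{knapsackUpperBound}; it is not a per-distribution impossibility result.
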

\begin{proof}
Consider any arbitrary distribution $D$ and scale it so that it has
positive support on $[m+1,2m]$. Alternately, imagine it has positive
support on $[1,m]$, and that $m$ items are guaranteed to arrive; the
distribution is on how many additional items will arrive. We
construct a set of $n=m$ bidders $1,\ldots,m$. Bidder $i$ has $k_i =
m+i$ and $c_i = 1/\Pr[\ell \geq i]$. By construction, at most one
bidder can have his demand satisfied by any knapsack size. Since
bidder values are non-decreasing, we have
$$\OPT = \sum_{i=1}^m
c_i\cdot\Pr[\ell = i] = \sum_{i=1}^m\frac{\Pr[\ell = i]}{\Pr[\ell
\geq i]} = \sum_{i=1}^mh_i$$ However, since at most one bidder can
be satisfied by any knapsack size, no algorithm can do better than
picking some bidder $i$ and assigning all items that arrive to
bidder $i$. Such an algorithm achieves welfare $c_i$ in the case
that $k_i$ items arrive. By construction, this yields expected
welfare $(1/\Pr[\ell \geq i])\cdot\Pr[\ell \geq i] = 1$, which
completes the proof.
\end{proof}

{\centering
\fbox{
\begin{minipage}{12cm}
\begin{tabbing}
\textbf{KnapsackGuess}($D$):\\
(x) \= \kill
 \> 1. Solicit bids. For each bidder $i$, create a knapsack instance with one item corresponding to each bidder $i$,\\ \> \;\;\; with size $k_i$ and value $c_i$. For each $s \in [1,m]$ let $\OPT_s$ be the value of the optimal solution to \\
  \> \;\;\; this knapsack instance when the knapsack has size $s$.\\\\
 \> 2. Let $s^* = \arg\max_s \Pr[\ell \geq s]\cdot \OPT_s$.\\\\
 \> 3. Assign items as they arrive to bidders corresponding to the optimal solution for a knapsack of size $s^*$ \\
 \> \;\;\;  in an arbitrary order, until each bidder $i$ in the solution has received his demand, $k_i$ items.
\end{tabbing}
\end{minipage}
}}
\newline

\begin{rem}
Rather than solving the knapsack problem exactly to find $\OPT_s$, we can use the greedy-by-density algorithm to find a 2-approximation.
\footnote{The greedy-by-density algorithm first discard all items of size larger than the knapsack size and then picks the best of the following two allocations: the greedy-by-density allocation that picks requests in decreasing ratio of value to size until the next element does not fit, and the allocation that gives all the items to the request of highest value.}
It is simple to see that the greedy knapsack algorithm can only ever output at most $2n$ distinct solutions, regardless of knapsack size. Therefore, at the cost of a factor of $2$, our algorithm only has to consider $2n$ solutions, each of which can be computed in polynomial time.
\end{rem}

\begin{proposition}[Proposition \ref{knapsackUpperBound}]
For any distribution $D$ with (arbitrary) hazard rate $h_i$ KnapsackGuess($D$) achieves at least a $\sum_{i=1}^mh_i$ approximation to optimal social welfare.
\end{proposition}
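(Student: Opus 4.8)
The plan is to establish two inequalities and chain them: first, that the expected welfare of KnapsackGuess($D$) is at least $\max_{s}\Pr[\ell\geq s]\cdot\OPT_s$, and second, that $\OPT \leq \bigl(\sum_{i=1}^m h_i\bigr)\cdot\max_{s}\Pr[\ell\geq s]\cdot\OPT_s$. Together these give $\OPT/\E_\ell[W] \leq \sum_{i=1}^m h_i$.

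First I would lower-bound the mechanism's realized welfare. By step~2, $s^*$ maximizes $\Pr[\ell\geq s]\cdot\OPT_s$, and the knapsack solution of size $s^*$ selected in step~3 consists of bidders whose demands $k_i$ sum to at most $s^*$. Hence whenever at least $s^*$ items arrive, every bidder in that solution receives his full demand and contributes value $c_i$, so the realized welfare is exactly $\OPT_{s^*}$; when fewer items arrive the welfare is at least $0$ (a bidder short of his demand contributes nothing under a knapsack valuation). Therefore $\E_\ell[W] \geq \Pr[\ell\geq s^*]\cdot\OPT_{s^*} = \max_s \Pr[\ell\geq s]\cdot\OPT_s$.

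Next I would relate $\OPT$ to this quantity. By definition $\OPT = \sum_{i=1}^m \Pr[\ell=i]\cdot\OPT_i$, where $\OPT_i$ — the optimal value of the capacity-$i$ knapsack instance built in step~1 — is precisely the optimal offline welfare when exactly $i$ items arrive. Writing $\Pr[\ell=i] = h_i\cdot\Pr[\ell\geq i]$ gives
\[
\OPT = \sum_{i=1}^m h_i\cdot\bigl(\Pr[\ell\geq i]\cdot\OPT_i\bigr) \leq \Bigl(\sum_{i=1}^m h_i\Bigr)\cdot\max_{s}\Pr[\ell\geq s]\cdot\OPT_s .
\]
Combining with the bound from the previous paragraph yields the claim. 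This argument nowhere uses monotonicity of the hazard rate, matching the ``arbitrary hazard rate'' hypothesis, and it pairs exactly with the lower bound of Proposition~\ref{knapsackLowerBound}.

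The proof is short, and I do not expect a genuine obstacle; the two points that need care are (i) confirming that the arbitrary-order fill-in in step~3 realizes the full value $\OPT_{s^*}$ once $\ell\geq s^*$, which holds because the chosen set's demands sum to at most $s^*$, and (ii) identifying the knapsack optimum $\OPT_i$ with the true offline optimum for a realized supply of $i$, so that $\sum_i\Pr[\ell=i]\OPT_i$ is indeed $\OPT$. If one replaces the exact knapsack solver by the greedy-by-density $2$-approximation of the Remark, the first inequality only weakens to $\E_\ell[W]\geq \tfrac12\max_s\Pr[\ell\geq s]\cdot\OPT_s$, giving a $2\sum_i h_i$ bound at polynomial cost.
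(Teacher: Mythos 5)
Your proof is correct and follows essentially the same route as the paper's: both rest on the observation that the mechanism's expected welfare is at least $\max_s \Pr[\ell \geq s]\cdot\OPT_s$, and both bound $\OPT = \sum_i \Pr[\ell=i]\cdot\OPT_i$ term by term using $\Pr[\ell=i]=h_i\Pr[\ell\geq i]$ to get the factor $\sum_i h_i$. The paper merely phrases the second step contrapositively (assuming an approximation ratio $\alpha$ and deriving $\alpha \leq \sum_i h_i$), which is the same computation.
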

\begin{proof}
KnapsackGuess($D$) achieves welfare $\OPT_{s^*}$ whenever $s^*$ items arrive, which occurs with probability $\Pr[\ell \geq s^*]$. Therefore, KnapsackGuess achieves welfare at least $\OPT_{s^*}\cdot \Pr[\ell \geq s^*] \geq \OPT_{s'}\cdot\Pr[\ell \geq s']$ for all $s'$. Let $\OPT$ denote the expected optimal welfare when the number of items to be sold is drawn from $D$. If KnapsackGuess achieves no better than an $\alpha$ approximation to social welfare, then for all $s' \in [1,m]$: $\OPT_{s'}\cdot\Pr[\ell \geq s'] \leq \OPT/\alpha$, or equivalently:
$$\OPT_{s'} \leq \frac{\OPT}{\alpha\Pr[\ell \geq s']}.$$
By definition:
$$\OPT = \sum_{i=1}^m\OPT_i\cdot\Pr[\ell = i].$$
Using our above bound on $\OPT_i$:
$$\OPT \leq \sum_{i=1}^m \OPT\cdot\frac{\Pr[\ell  = i]}{\alpha\Pr[\ell \geq i]}.$$ Therefore:
$$\alpha \leq \sum_{i=1}^m\frac{\Pr[\ell  = i]}{\Pr[\ell \geq i]} = \sum_{i=1}^mh_i$$
which completes the proof.
\end{proof}

\end{document}